\definecolor{deepblue}{RGB}{0,0,110}
\definecolor{darkgreen}{rgb}{0.0,0.5,0.0}
\setlist[enumerate]{leftmargin=1.5cm,rightmargin=0.5cm,noitemsep, topsep=2pt}
\appto\TPTdoTablenotes{\labelsep0.0em\footnotesize}
\theoremstyle{definition}
\newtheorem{theorem}{Theorem}
\crefname{equation}{}{}
\crefname{theorem}{Theorem}{Theorems}
\crefname{figure}{Figure}{Figures}
\crefname{table}{Table}{Tables}
\crefname{section}{Section}{Sections}
\DeclareMathOperator{\argmax}{arg\,max}
\def\half{\frac{1}{2}}
\newcommand{\Prob}{\mathbb{P}}
\newcommand{\z}{\mathbf{z}}
\newcommand{\DA}{\text{DA}}
\newcommand{\TTC}{\text{TTC}}
\renewcommand*{\@fnsymbol}[1]{%
  \ifcase#1 \or *\or 1\or 2\or 3\fi%
}
\title{%
    \textbf{Allocating Students to Schools:\\
    Theory, Methods, and Empirical Insights}\thanks{
    We are grateful to Christopher Campos, Aram Grigoryan, Dongwoo Hahm, Adam Kapor, Tomás Larroucau, Michael Lee, Minseon Park, and Olivier Tercieux for their comments. Our dear friend and coauthor, YingHua He, passed away in July 2024. We dedicate this chapter to his memory.
}
}
\author{
    Yeon-Koo Che\thanks{Department of Economics, Columbia University, USA. Email: \href{mailto:yeonkooche@gmail.com}{\texttt{yeonkooche@gmail.com}}.}\qquad
    Julien Grenet\thanks{CNRS and Paris School of Economics, France. Email: \href{mailto:julien.grenet@psemail.eu}{\texttt{julien.grenet@psemail.eu}}.}\qquad
    YingHua He\thanks{Department of Economics, Rice University, USA. Email: \href{mailto:yinghua.he@rice.edu}{\texttt{yinghua.he@rice.edu}}.}
}
\date{\normalsize%
    Chapter~4 of the \emph{Handbook of the Economics of Matching}, Volume~2 (Elsevier)\\
    edited by Yeon-Koo Che, Pierre-André Chiappori, and Bernard Salanié\\[1ex]
    DOI: \href{https://doi.org/10.1016/bs.hesmat.2025.10.004}{10.1016/bs.hesmat.2025.10.004}\\[4ex]%
    August 2025
}
\begin{document}

\setstretch{1.22} 

\thispagestyle{empty}

\maketitle

\begin{abstract}\singlespacing%
\noindent This chapter surveys the application of matching theory to school choice, motivated by the shift from neighborhood assignment systems to choice-based models. Since educational choice is not mediated by price, the design of allocation mechanisms is critical. The chapter first reviews theoretical contributions, exploring the fundamental trade-offs between efficiency, stability, and strategy-proofness, and covers design challenges such as tie-breaking, cardinal welfare, and affirmative action. It then transitions to the empirical landscape, focusing on the central challenge of inferring student preferences from application data, especially under strategic mechanisms. We review various estimation approaches and discuss key insights on parental preferences, market design trade-offs, and the effectiveness of school choice policies.\\[1ex]
\noindent 
\textbf{JEL Classification Numbers}: C18, C78, D47, I21, I28\\[0.3ex]
\textbf{Keywords:} School Choice, Matching Theory, Deferred Acceptance, Immediate Acceptance, Top Trading Cycles, Preference Estimation
\end{abstract}


\newpage

\begingroup
\footnotesize
\setstretch{1.18} 
\tableofcontents
\endgroup


\section*{Introduction}
\addcontentsline{toc}{section}{Introduction}

The landscape of primary and secondary education has been significantly reshaped by the rise of school choice, a movement born out of concerns with traditional geographically based school assignment systems. For decades, the prevailing model automatically enrolled students in the public school nearest to their home. While convenient, this neighborhood assignment system frequently led to significant issues, including a lack of parental choice, particularly for families in areas with underperforming schools, and inequities stemming from socioeconomic disparities mirrored in school resources. Furthermore, the strong link between housing markets and school allocation under neighborhood assignment often resulted in allocational inefficiencies, as housing choices heavily dictated school placements. These concerns fueled advocacy for policies empowering parents with greater control over their children's educational paths.

Early school choice initiatives often centered on magnet schools and open enrollment programs, which allowed students to attend public schools outside their designated districts.\footnote{Throughout, we identify school choice with open enrollment, thus excluding other initiatives such as charter schools and voucher programs, which some authors include as part of the school choice movement.} Minnesota, for instance, enacted the first statewide open enrollment law in 1988, a model subsequently adopted by other states. Major urban centers like New York, Boston, and Seattle have also embraced such programs, with a growing focus on designing allocation mechanisms that are both efficient and fair. From a classical economics perspective, expanding choice in education is anticipated to enhance efficiency, much like broadening the scope of markets. However, a critical distinction arises: school choice is not typically mediated by price. Consequently, whether school choice fulfills its promise of improved allocation and other benefits hinges crucially on how these systems are organized and implemented.

It is precisely at this juncture that matching theory has become indispensable, offering powerful tools for both the theoretical design and empirical evaluation of school choice mechanisms. Theoretically, matching theory has been instrumental in analyzing the inherent trade-offs between competing desiderata such as fairness, efficiency, and strategy-proofness—--the property that individuals have no incentive to misrepresent their preferences. It has also provided frameworks for optimizing tie-breaking procedures in situations of coarse priorities (where many students may have the same priority level), achieving desirable cardinal welfare outcomes that consider the intensity of preferences, and implementing affirmative action policies effectively.

Empirically, the application of matching theory has enabled researchers to infer student preferences from school choice data and to evaluate the welfare implications of various assignment systems. The increasing availability of administrative data from school choice programs has spurred a rich body of empirical work focused on estimating student preferences and simulating the effects of counterfactual policies. A central challenge in this empirical agenda is inferring true preferences from observed ranking submissions, especially under mechanisms where strategic reporting might be prevalent.

While this chapter is primarily focused on school choice, we also draw selectively on insights from the closely related literature on centralized college admissions, particularly in settings where applicants are ranked according to exogenous priority criteria that closely parallel the school choice problem. The goal is not to provide a comprehensive review of all education-related matching markets---such as childcare assignment, foster-care placements, teacher assignment, or other emerging applications---but rather to highlight methodological contributions relevant for the analysis of school choice and to incorporate empirical evidence that informs its broader understanding. Within this scope, the chapter will navigate the multifaceted contributions of matching theory to the study of school choice. It is organized into three main parts.

\Cref{sec:theory} explores the theoretical underpinnings, beginning with the simple economics of school choice to illustrate the shortcomings of geographic assignment and the potential benefits of choice-based systems. It then establishes the formal model framework for school choice, discusses assignment methods used before major reforms, and explores desirable mechanisms like Deferred Acceptance and Top Trading Cycles, paying close attention to the trade-offs between efficiency and stability, the implications of coarse priorities, the pursuit of cardinal welfare, and the design of affirmative action policies.

\Cref{sec:methods} transitions to the empirical landscape, examining methods for estimating student preferences under both strategic and non-strategic (strategy-proof) mechanisms. This section will cover various approaches, including those that model preference reporting as a choice over lotteries, those grounded in the stability of matching outcomes, and those dealing with incomplete models of behavior, alongside a discussion of how to select among alternative preference hypotheses.

Building on these foundations, \Cref{sec:empirical_insights} synthesizes empirical lessons from the implementation of centralized school choice systems around the world, organizing the discussion around three central themes. It first examines how families form preferences and make application decisions, with particular attention to information frictions and behavioral deviations from full rationality. It then turns to the design trade-offs in school choice systems, including the degree of centralization, the choice of assignment mechanisms, and the use of decision-support tools. Finally, it assesses the broader impact of school choice reforms on market-level outcomes, equity, and related behaviors such as residential mobility. Together, these empirical insights illustrate how the tools of matching theory have deepened our understanding of real-world school choice and informed the development and evaluation of assignment policies across diverse institutional contexts.


\section{Theories of School Choice}
\label{sec:theory}

\subsection{A Simple Economics of School Choice}
\label{sec:toy}

To illustrate the potential shortcomings of geographic assignment and the potential benefits of school choice, consider a simplified model. We have a unit mass of families, each with one child seeking to attend a public school. There are two districts, $A$ and $B$, each capable of accommodating half the population, and each district has one public school: $a$ in $A$ and $b$ in $B$. 

We represent the preferences of each family with a point $(x, y)$ within $[-1/2, 1/2]^2$, as shown in \Cref{fig:preference_square}. The horizontal axis ($x$) captures the family's preference for residing in district~$A$ versus $B$, with higher values of $x$ indicating a stronger preference for $A$. Similarly, the vertical axis ($y$) represents the family's preference for school~$a$ over $b$, with higher $y$ values signifying a stronger preference for $a$. Assume for now that family preferences $(x, y)$ are uniformly distributed on the square $[-1/2, 1/2]^2$.

In an ideal allocation, a family would reside in district~$A$ if and only if $x > 0$ (they prefer $A$), and their child would attend school~$a$ if and only if $y > 0$ (they prefer $a$). The diagram visually depicts these ideal choices, with the light red area representing where school~$a$ is preferred and the light blue area showing where district~$A$ is preferred. The overlap, shaded in light purple, highlights the families who ideally desire both to live in district~$A$ and to have their child attend school~$a$.

\begin{figure}[!ht]
\begin{center}
\begin{tikzpicture}
\begin{scope}[scale=8]

\node[below left] at (0,0) {\footnotesize$(-\half,-\half)$};
\node[below right] at (1,0) {\footnotesize$( \half,-\half)$};
\node[above left] at (0,1) {\footnotesize$(-\half,\half)$};
\node[above right] at (1,1) {\footnotesize$(\half,\half)$};

\fill[red!30] (0.5,0) rectangle (1,0.5);
\node[red] at (0.75,0.25) {$A$ selected};

\fill[blue!30] (0,0.5) rectangle (0.5,1);
\node[blue] at (0.25,0.75) {$a$ selected};

\fill[violet!30] (0.5,0.5) rectangle (1,1);

\draw[thick] (0,0) rectangle (1,1);

\draw[fill] (0.5,0.5) circle [radius=0.001];

\draw[->] (0,0.5) -- (1.1,0.5) node[right] {\footnotesize $x$ (Residence Preference)};
\draw[->] (0.5,0) -- (0.5,1.1) node[above] {\footnotesize $y$ (School Preference)};

\end{scope}
\end{tikzpicture}
\caption{Ideal Residential and School Choices}
\label{fig:preference_square}
\end{center}
\end{figure}
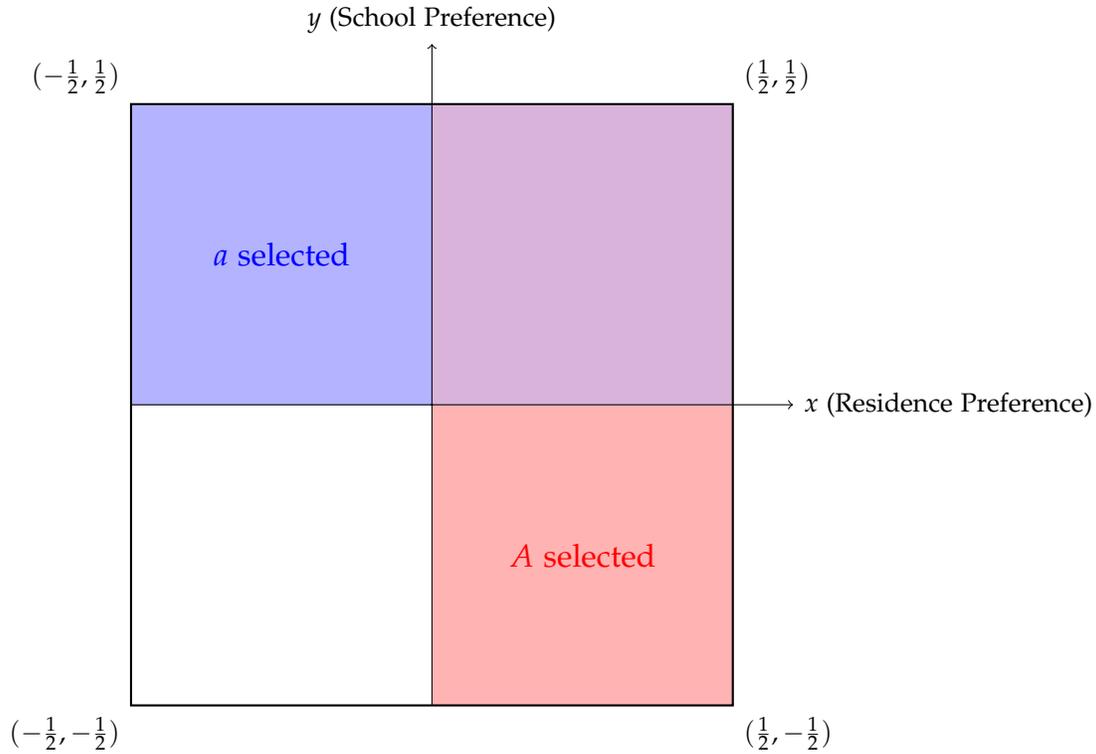

Under geographic assignment, families choose their child's school indirectly by choosing their residence. Essentially, a family chooses a bundle of a residential district and a school, facing housing rental prices $r_A$ and $r_B$, respectively, in districts~$A$ and $B$. Assuming a model where the utility difference between the bundles $(A,a)$ and $(B, b)$ is $x + y$, the housing market will reach equilibrium when the price difference between the two districts ($\delta = r_A - r_B$) is 0. The resulting assignment is depicted in \Cref{fig:preference_square2}. 

\begin{figure}[!ht]
\begin{center}
\begin{tikzpicture}

\begin{scope}[scale=8]

\node[below left] at (-0.5,-0.5) {\footnotesize$(-\half,$-$\half)$};
\node[below right] at (0.5,-0.5) {\footnotesize$( \half,-\half)$};
\node[above left] at (-0.5,0.5) {\footnotesize$(-\half,\half)$};
\node[above right] at (0.5,0.5) {\footnotesize$(\half,\half)$};

\draw[thick] (-0.5,0.5) -- (0.5,-0.5);

\draw[dotted] (-0.5,0) -- (0.5,0);
\draw[dotted] (0,-0.5) -- (0,0.5);


\fill[blue!30] (-0.5,0.5) -- (0,0) -- (0,0.5) -- cycle; 
\node[font=\scriptsize] at (-0.2,0.4) {$A$ instead of $B$};

\fill[red!30] (-0.5,0) -- (0,0) -- (-0.5,0.5) -- cycle;
\node[font=\scriptsize] at (-0.3,0.1) {$b$ instead of $a$};

\fill[green!30] (0.5,-0.5) -- (0,0) -- (0.5,0) -- cycle;
\node[font=\scriptsize] at (0.3,-0.1) {$B$ instead of $A$};

\fill[orange!30] (0,0) -- (0.5,-0.5) -- (0,-0.5) -- cycle;
\node[font=\scriptsize] at ( 0.2,-0.4) {$a$ instead of $b$};

\draw[thick] (-0.5,-0.5) rectangle (0.5,0.5);

\draw[->] (-0.5,0) -- (0.6,0) node[right] {\footnotesize $x$ (Residence Preference)};
\draw[->] (0,-0.5) -- (0,0.6) node[above] {\footnotesize $y$ (School Preference)};

\end{scope}

\end{tikzpicture}
\caption{Choices under Geographic School Assignment}
\label{fig:preference_square2}
\end{center}
\end{figure}
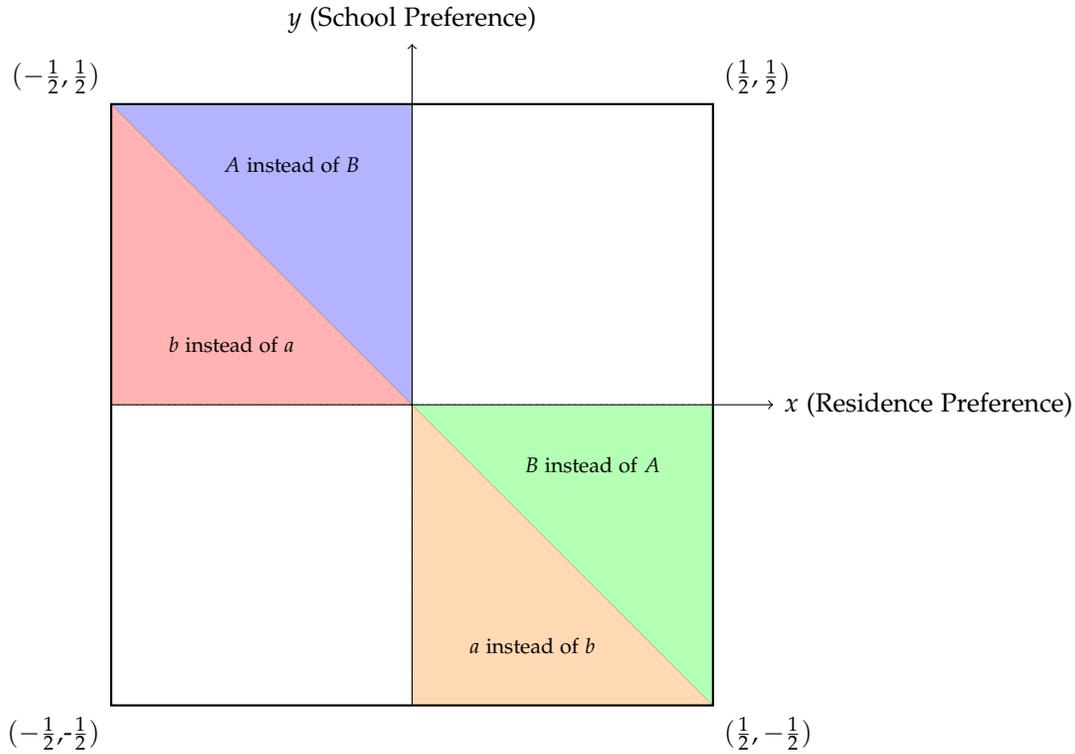

Geographic assignment, by tethering school choice to residence, compels the housing market to reconcile two distinct preferences: where families wish to live and where they want their children to attend school. This forced coupling generates distortions, visually represented in the figure by the four colored regions. Families in the blue region, for instance, would ideally reside in district~$B$ but choose $A$ solely to access school~$a$. This might stem from a stronger preference for $B$'s amenities or a parent's workplace location. Under geographic assignment, they sacrifice their residential preference to secure their educational one. Similarly, families in the red region, who ideally desire residence in $B$ and school~$a$, are forced to compromise their educational choice. The green and orange regions illustrate analogous inefficiencies, where families sacrifice either their residential or school preferences due to the constraints of geographic assignment. These inefficiencies underscore how this system can lead to suboptimal outcomes, forcing families to make trade-offs between their ideal living situations and their children's education. 

School choice can mitigate these inefficiencies by decoupling families' residential choices from their school choices. In our example, if families are allowed to send their children to either school, regardless of their district, the two choices become independent. Families can now choose their residence based solely on amenities and housing preferences, and select the ideal school for their children independently. Under this system, families will choose to reside in district~$A$ if and only if $x \ge \delta$, where the price difference $\delta$ adjusts to clear the housing market. They will also (try to) send their children to school~$a$ if and only if $y \ge 0$. Again, assuming a uniform distribution of $(x,y)$, the equilibrium price will be $\delta = 0$, and children will be assigned to their preferred schools. In short, school choice, in this stylized example, can achieve the first-best allocation. 

While the previous example highlights the potential benefits of school choice in mitigating inefficiencies caused by geographic assignment, real-world implementations face additional challenges. Consider a scenario where the preference for schools is not uniformly distributed; instead, school~$a$ is more popular, leading to oversubscription. If tuitions could be adjusted to clear the market, an efficient allocation might emerge. However, this contradicts the principle of free public education. Therefore, alternative methods are necessary to address oversubscription.

One common approach is a lottery system, where seats at school~$a$ are assigned randomly among those who prefer it ($y > 0$). However, this allocation ignores the intensity of preference, potentially leading to students with weak preferences for $a$ being admitted while those with strong preferences are excluded. Alternatively, granting geographic priority to district~$A$ residents for school~$a$ essentially recreates a smaller-scale geographic assignment system, inheriting its associated inefficiencies, albeit to a lesser extent.

In reality, school choice mechanisms often fall short of achieving the theoretical first-best allocation. The lack of monetary transfers in public education necessitates alternative assignment methods based on elicited preferences. The design of these mechanisms becomes crucial in balancing efficiency, fairness, and the practical constraints of public schooling. In the following sections, we will explore how various school choice mechanisms attempt to achieve desirable allocations by strategically linking assignment to reported preferences. 

\subsection{Model Framework}

A school choice model is very close to the many-to-one matching model introduced in Chapter~1. While there is an important difference (highlighted later), the model primitives follow from many-to-one matching, also known as the College Admissions problem. 

A \textbf{school choice problem} consists of a set of \textbf{students}, $\mathcal{I}=\{i_1,\ldots,i_{I}\}$, and a set of \textbf{schools}, $\mathcal{J}=\{s_1,\ldots,s_{J}\}$. Each school~$s\in \mathcal{J}$ has \textbf{capacity}~$q_s$, the maximum number of students the school can enroll. Each student~$i\in \mathcal{I}$ has a strict \textbf{preference ordering} $P_i$ over schools and the option to be unassigned. Each school~$j\in \mathcal{J}$ has a strict \textbf{priority ordering} $\pi_j$ over the students. In the context of the public school system, one may not view priorities as schools' preferences; rather, they can be viewed as reflecting the policy objectives of the school system and other societal considerations. Examples include priority rankings based on sibling attendance, the proximity of residence to schools (e.g., ``walk zone''), and/or test scores and grade point averages of the preceding schools. While priorities are often coarse, with many students falling within the same priority class (e.g., walk zone), it is convenient to begin with strict priorities as our baseline model; we will address the implications of coarse priorities in \Cref{subsec:coarse_priorities}.

An \textbf{assignment} is a mapping $\mu: \mathcal{I}\cup \mathcal{J}\to \mathcal{J}\cup 2^{\mathcal{I}}$ such
that for each $i\in \mathcal{I}$, $\mu(i)\in \mathcal{J} \cup \{i\}$ and, for each $s\in\mathcal{J}$, $\mu(s)\subseteq \mathcal{I}$; i.e., each student is assigned a school or himself/herself (not assigned to any school), and each school is assigned to a subset of students. Further, $\mu(i)=s$ if and only if $i\in \mu(s)$, and $|\mu(s)|\leq q_s$ for each $s$ so that the capacity constraint is satisfied.

There are several normative objectives we consider. First, we say $\mu$ is \textbf{efficient} if no other assignment $\mu'$ \emph{Pareto dominates} $\mu$, i.e., making all students weakly better off and some strictly better off than $\mu$. Note that schools' priority orderings do not factor into this notion, in keeping with the idea that the schools are the resources to be consumed by students, rather than being welfare-relevant entities in their own right, as would be the case in two-sided matching. 

At the same time, schools' priorities reflect policy and social considerations. There is a familiar way to incorporate them if one views the school assignment as a two-sided matching. An assignment is \textbf{stable} if it is (i)~\textbf{individually rational}: for each student~$i\in \mathcal{I}$, $\mu(i)$ is weakly preferred to the option of being unassigned; it is (ii)~\textbf{non-wasteful}: for each student~$i\in \mathcal{I}$, $s P_i \mu(i) \Rightarrow |\mu(s)|=q_s$; and there is (iii)~\textbf{no justified envy}: if a student~$i$ prefers a school~$s$ to his/her assignment, then all students matched to school~$s$ must have a higher priority than student~$i$. While stability is justified in the two-sided matching context as an equilibrium notion, here it is justified as some basic efficiency desiderata (namely, (i) and (ii)) and a fairness desideratum (namely, (iii)). 

In practice, students' preferences are unobserved by the school system, so the above normative properties cannot be satisfied unless the assignment process also provides students with incentives to reveal their preferences. Define an \textbf{assignment mechanism} to be a mapping from all possible profiles of student preferences to an assignment. We say that an assignment mechanism is \textbf{strategy-proof} if each student has an incentive to report his/her preference truthfully, for any preference profile reported by the other students. Strategy-proofness makes an assignment mechanism straightforward for participants, and enables other normative properties, such as efficiency and stability, to be achievable reliably and robustly.
 
\subsection{School Assignment Methods before the Reforms}

Before considering desirable school assignment methods, it is instructive to examine a couple of assignment methods used prior to the reforms.

\subsubsection{New York City School Assignment: Decentralized Matching}

Before the 2004 redesign, the New York City public high school assignment system was highly decentralized and complex. \citet{Abdulkadiroglu_Pathak_Roth(2005)AER:PP} and \citet{Abdulkadiroglu_Agarwal_Pathak(2017)AER} describe the system as follows: Rising NYC high school students applied to five out of more than 600 school programs; they could receive multiple offers and be placed on waitlists. Students were allowed to accept only one school and one waitlist offer, and the cycle of offers and acceptances was repeated twice. Most students not assigned in these rounds went through an administrative process that manually placed them at schools close to their residences. The lack of centralized coordination meant that the number of offers and acceptances was limited, a problem known as \emph{congestion} by \citet{Roth_Xing(1997)JPE}.

How the congestion problem adversely impacts the outcome can be illustrated by a simple example from \citet{Che_Koh(2016)JPE}. Suppose there are only two students, $1$ and $2$, applying to schools $a$ and $b$. Each school has one seat to fill (and faces a prohibitively high cost of enrolling two students). Student~$i$ has priority measured by score $t_i$, $i=1,2$, where $0<t_2<t_1<2t_2$, for both schools. The priority score $t_i$ is a school's payoff from admitting student~$i$. Each student has an equal probability of preferring either school, which is private information (unknown to the other student and to the schools). The applications are free of cost, and the timing of the decentralized process is as follows. First, students apply to schools simultaneously. Then, the two schools simultaneously offer admissions to sets of students. Finally, students decide which offer (if any) to accept. A school's seat remains empty if no applicant accepts its offer. While extremely simple, this model captures the lack of centralized coordination as well as the congestion (i.e., each school has only one round of offers).\footnote{While the NYC mechanism allowed for more flexibility in terms of waitlist management, the coordination challenges are higher due to the large number of students and school programs. One can thus view the simple model as a fair stylization of the system.}

Given the large cost of over-enrollment, each school admits only one student. Their payoffs are described as follows:
\medskip
\begin{center}
\renewcommand{\arraystretch}{1.22}
\begin{tabular}{|c|c|c|}
\hline
$a$'s strategy \textbackslash{} $b$'s strategy & $b$ admits 1 & $b$ admits 2\tabularnewline\hline
$a$ admits 1 & $\half t_1$, $\half t_1$ & {$t_1, t_2$}\tabularnewline\hline
$a$ admits 2 & {$t_2, t_1$} & $\half t_2$, $\half t_2$ \tabularnewline\hline
\end{tabular}
\end{center}
\medskip
This game has a battle of the sexes structure (with asymmetric payoffs), so there are two different types of equilibria. First, there are two asymmetric pure-strategy equilibria in which one school admits student~1 and the other admits student~2. There is also a mixed-strategy equilibrium in which each school admits 1 with probability $\gamma:=\frac{2t_1-t_2}{t_1+t_2}>1/2$ and admits 2 with probability $1-\gamma$, where $\gamma$ is chosen such that the other school is indifferent over whom to admit. Both types of equilibria show the pattern of strategic targeting. In the pure-strategy equilibria, schools manage to avoid competition by targeting different students. The mixed-strategy equilibrium may involve schools using some ``spurious'' measure of student merit as a randomization device; schools following these strategies will appear to value such measures despite there being no intrinsic value to them.

This example, while extremely simple, reveals problems with decentralized matching in terms of welfare and fairness. First, the student with the highest score (student~1) may attend a less preferred school (in both types of equilibria), even though both schools prefer that student; that is, justified envy arises. Second, it could be the case that student~1 prefers~$a$ and student~2 prefers $b$, but the former is assigned $b$ and the latter is assigned $a$, showing that the equilibrium outcome is inefficient among students. Lastly, the mixed-strategy equilibrium is wasteful because both schools may admit the same student, thus wasting the seat of the unmatched school. 

\subsubsection{Boston School Assignment: Immediate Acceptance}
\label{subsubsec:IA}

Unlike NYC, school systems in many cities such as Boston, Cambridge, Minnesota, and Seattle were using a centralized assignment process; but the assignment algorithm used in these districts, known as the \textbf{Immediate Acceptance (IA) mechanism} or simply \textbf{Boston mechanism}, was criticized on the normative ground mentioned above. 

IA takes as input the rank-order list of schools submitted by students and the strict priority ordering of students for each school, based on sibling attendance and walk zone criteria, with ties broken by lotteries. Then, the assignment proceeds in multiple rounds. In the first round, students are assigned to their top-ranked school based on their priority order, starting with the highest priority down to the second highest, and so on. If a school's capacity is filled, it rejects the remaining students. In round two, rejected students are considered for the next highest-ranked school if seats are available at these schools; otherwise, they are rejected. This continues until all students have been assigned to a school or have exhausted their acceptable options. 

To illustrate IA, consider the following example: Let $\mathcal{I}=\{1,2,3,4\}$ be the set of students and $\mathcal{J}=\{a,b,c\}$ be the set of schools, with school capacities $q_a=2$ and $q_b=q_c=1$. The student preferences and school priorities are described in \Cref{fig:IA}.

\begin{figure}[!ht]
\centering
\renewcommand{\arraystretch}{1.22}
\begin{tabular}{c|c|c|c p{1.5cm} c|c|c}
\multicolumn{4}{c}{} & & \multicolumn{3}{c}{} \\[-1.5ex]
$P_1$ & $P_2$ & $P_3$ & $P_4$ & & $\succ_a$ & $\succ_b$ & $\succ_c$ \\
\cline{1-4} \cline{6-8}
$a$   & $a$   & $b$   & $b$   & & $4$ & $3$ & $1$ \\
$b$   & $b$   & $a$   & $a$   & & $1$ & $4$ & $2$ \\
      &       &       & $c$   & & $2$ & $1$ & $3$ \\
\multicolumn{4}{c}{}          & & $3$ & $2$ & $4$ \\
\end{tabular}
\caption{Immediate Acceptance}
\label{fig:IA}
\end{figure}

The outcome of IA in this example is $\mu=(1a, 2a,3b,4c)$ if the students report their preferences truthfully. 

The example illustrates several problems with IA. First, the matching is not stable, assuming that students report their preferences truthfully. For instance, student~1 is assigned school~$a$, but school~$a$ prefers student~4, who also prefers school~$a$ to her assigned school~$c$. The failure of stability could lead to outcomes that violate school priorities. This means that a student may be assigned to a school even if another student with a higher priority for that school remains unassigned. Such priority violations raise concerns about fairness and can even result in legal challenges from dissatisfied students and parents.

The instability in turn implies that the mechanism is not strategy-proof: student~4 could obtain her second most-preferred school~$a$ by misrepresenting her preferences to $(a,b,c)$. This failure of strategy-proofness is intuitive. If a student's most preferred school is highly popular and thus difficult to get into, but her second most-preferred is not much worse and is less competitive, then strategically ranking the latter for her top spot is quite rational.\footnote{At a conference organized by the Federal Reserve Bank of Chicago in 1994 (``Midwest approaches to school reform''), Meyer and Glazerman report:
\begin{quote}
It may be optimal for some families to be strategic in listing their school choices. For example, if a parent thinks that their favorite school is oversubscribed and they have a close second favorite, they may try to avoid ``wasting'' their first choice on a very popular school and instead list their number two school first.
\end{quote}
In a meeting of the West Zone Parents Group of the city of Boston, it was said:
\begin{quote}
One school choice strategy is to find a school you like that is undersubscribed and put it as a top choice, or, second a school that you like that is popular and put it as a first choice and find a school that is less popular for a ``safe'' second choice.
\end{quote}} 
A lab experiment conducted by \citet{Chen_Sonmez(2006)JET} shows that about 80\% of the subjects misreport their preferences under IA. The failure of strategy-proofness raises multiple issues. First, how participants strategize is not straightforward, requiring sophistication in strategic thinking. Second, participants may differ in the sophistication of their strategic thinking and may put the more sophisticated at an advantage.\footnote{\label{fn:Pathak-Sonmez} \citet{Pathak_Sonmez(2008)AER} apply the full-information Nash equilibrium analysis of \citet{Ergin_Sonmez(2006)JPubE} to study how sincere applicants who reveal their preferences truthfully fare against sophisticated ones who play a best response, and formalize the sense in which naivety disadvantages the former.} Third, the strategic nature of the mechanism makes it more challenging for one to decipher the data to estimate preferences accurately. We turn to these issues in the empirical part of this survey.

How can we predict the outcome of IA and its welfare implications? Predicting the outcome of a strategic mechanism is challenging since one must rely on some notion of equilibrium; it is often unclear whether participants play the chosen notion of equilibrium. A case in point is \citet{Ergin_Sonmez(2006)JPubE}, which studies Nash equilibria of the IA mechanism \emph{under the assumption that each student observes the preferences and priorities of all other students}; they show that the set of Nash equilibria coincides with that of stable matching outcomes.\footnote{The argument for this is rather simple. Take any stable matching; say $\mu=(1a, 2c, 3b, 4a)$. It is an equilibrium for all students to report only the assigned school under $\mu$ to be acceptable; a unilateral deviation can never enable a deviator to obtain a better school (since there is no blocking pair). Meanwhile, any unstable matching can never be an equilibrium, since one can unilaterally deviate to obtain a block (either by herself or with a school).} Such an analysis sheds some light on the difference between DA and IA, yet the full-information Nash equilibrium concept is very restrictive and unlikely to hold in practice.
 
\subsection{Desirable Mechanisms: Efficiency vs.\ Stability}

The challenges encountered in both NYC and Boston school assignment systems highlight the need to have an assignment mechanism that is straightforward for the participants to play; namely, the assignment mechanism should be strategy-proof. Beyond the strategy-proofness, one could aim for two desiderata: stability (less ambitiously, no justified envy) and efficiency.

\paragraph{Deferred Acceptance and Top Trading Cycles.}

As it turns out, there are strategy-proof mechanisms that attain these objectives. Stability can be attained by the Student-Proposing Deferred Acceptance (DA) algorithm which runs in multiple rounds \citep[see][]{Gale_Shapley(1962)AmMathMon, Abdulkadiroglu_Sonmez(2003)AER}: in round~1, students apply to their most-preferred acceptable schools, and the schools accept tentatively the students up to their capacities based on their priority orderings and reject the rest; in the subsequent rounds, rejected students apply to the most-preferred schools that have not rejected them, and the schools again tentatively accept the students according to their priority orderings up to the capacity, and reject the rest, possibly including those whom they may have accepted before; this process continues until there are no more rejected students.\footnote{The reader is referred to the formal description of DA in Chapter~1.} The DA mechanism is strategy-proof: namely, students have no incentive to misrepresent their preference orderings when they are used to implement DA.

Meanwhile, the Top Trading Cycles (TTC) mechanism guarantees efficient assignment among students. TTC operates by recursively forming and clearing cycles of students and schools \citep{Abdulkadiroglu_Sonmez(2003)AER}. Initially, each school is assigned a counter to track the number of remaining seats. In step~1, 
each student points to her most preferred school, and each school points to the student with the highest priority. This creates at least one cycle. Students in a cycle are assigned their pointed-to schools, removed from the process, and the school's counter is decremented. In the subsequent steps, the process repeats with the remaining students and schools. Students point to their most preferred school among those with available seats, and schools point to their highest-priority remaining student. Cycles are formed, students are assigned, and counters are updated. The algorithm ends when all students are assigned or all submitted choices have been considered. TTC is strategy-proof, meaning students have no incentive to misrepresent their preference orderings when they are used to implement TTC. 

Unfortunately, stability and efficiency may not be both satisfied. To see this, suppose there are three students $\mathcal{I}=\{1,2,3\}$ and three schools $\mathcal{J}=\{a,b,c\}$, each with unit quota. The students' preferences and priorities are described in \Cref{fig:long}.
\begin{figure}[!ht]
\centering
\renewcommand{\arraystretch}{1.22}
\begin{tabular}{c|c|c p{1cm} c|c|c}
\multicolumn{3}{c}{} & & \multicolumn{3}{c}{} \\[-1.5ex]
$P_1$ & $P_2$ & $P_3$ & & $\succ_a$ & $\succ_b$ & $\succ_c$ \\
\cline{1-3} \cline{5-7}
$b$   & $a$   & $a$   & & $1$ & $2$ & $1$ \\
$a$   & $b$   & $b$   & & $3$ & $1$ & $2$ \\
$c$   & $c$   & $c$   & & $2$ & $3$ & $3$ \\
\end{tabular}
\caption{DA versus TTC}
\label{fig:long}
\end{figure}

DA produces an assignment $\mu_{DA}=(1a, 2b, 3c)$. This assignment is not Pareto efficient since 1 and 2 would be strictly better off, with 3 remaining the same, if 1 and 2 swapped their assignments. Indeed, the resulting ``efficient'' assignment $\mu_{TTC}=(1b, 2a, 3c)$ is attained by TTC, where a cycle $1\to b\to 2\to a\to 1$ is formed and cleared in the first round. Note that this assignment is unstable: 3 has justified envy toward 2.

This example illustrates the trade-off between stability and efficiency. While the objectives may conflict with each other, if a school district seeks either objective, there is a case for choosing between DA and TTC. The next result shows that student-proposing DA implements a stable matching that is most efficient among all stable matchings in the following sense: 

\begin{theorem} [\citealp{Gale_Shapley(1962)AmMathMon}] \label{thm:student-optimal}
DA assignment is student-optimal in the sense that it Pareto dominates all other stable assignments for the students.\footnote{See the proof in Chapter~1.}
\end{theorem}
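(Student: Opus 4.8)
The plan is to prove the student-optimality of DA by establishing the stronger auxiliary claim that no student is ever rejected by a school that she could be matched to in \emph{some} stable assignment. Call a school $s$ \textbf{achievable} for student $i$ if there exists a stable assignment $\mu'$ with $\mu'(i)=s$. I would prove the following invariant by induction on the rounds of the DA algorithm: \emph{at no point during the run of DA is any student ever rejected by an achievable school.} Granting this invariant, the theorem follows quickly: when DA terminates, each student holds an offer from the most-preferred school that has not rejected her; since she was never rejected by any achievable school, her DA assignment is weakly preferred to every achievable school, and in particular to her assignment under any other stable $\mu'$. That is precisely the statement that the DA outcome Pareto dominates every stable assignment for the students. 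I would also separately note that the DA outcome is itself stable (this is the standard Gale--Shapley guarantee and may be cited), so it belongs to the set over which we are optimizing.

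The heart of the argument is the inductive step for the invariant. First I would fix the induction hypothesis: through the first $k$ rounds, no student has been rejected by an achievable school. For the $(k+1)$-th round, suppose for contradiction that some student $i$ is rejected by school $s$, where $s$ is achievable for $i$. A rejection by $s$ means $s$ is currently holding $q_s$ applicants each of whom $s$ ranks above $i$ under $\pi_s$. Let $j$ be one of these applicants; I want to argue that $s$ is \emph{not} achievable for $j$, i.e., that $j$ cannot be matched to $s$ in any stable assignment, because otherwise we could build a blocking pair. The key leverage is that, by the induction hypothesis, $j$ applied to $s$ only after being rejected by all schools she prefers to $s$, and none of those rejecting schools was achievable for $j$; hence $s$ is weakly the most-preferred achievable school for $j$ at this stage of the reasoning.

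The main obstacle — and the step I would spend the most care on — is turning the displaced applicant's situation into a genuine contradiction with stability. Let $\mu'$ be a stable assignment witnessing that $s$ is achievable for $i$, so $\mu'(i)=s$. Since $s$ holds only students it ranks above $i$, and $|\mu'(s)|\le q_s$ with $i\in\mu'(s)$, there must exist some applicant $j$ currently held by $s$ in DA with $j \notin \mu'(s)$; this $j$ prefers $s$ to her own stable match $\mu'(j)$ (because $s$ is the best achievable school for $j$, and $\mu'(j)$ is achievable by definition), while $s$ ranks $j$ above $i$ and hence, by counting, above at least one student in $\mu'(s)$. The pair $(j,s)$ then blocks $\mu'$, contradicting its stability. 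Making the counting and the ``best achievable school for $j$'' claim fully rigorous — and correctly handling the capacity $q_s>1$ case so that one really can find an occupant of $s$ under $\mu'$ ranked below $j$ — is the delicate part; everything else is bookkeeping. I would note that this is the classical Gale--Shapley induction, so the excerpt's deferral of the full proof to Chapter~1 is natural, and I would present the above as the conceptual skeleton rather than grinding through the index manipulations.
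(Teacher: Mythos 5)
Your proposal is correct, and it is exactly the classical Gale--Shapley ``no student is ever rejected by an achievable school'' induction (with the right counting argument for capacities $q_s>1$ to locate an occupant $j$ of $s$ under DA with $j\notin\mu'(s)$, yielding the blocking pair $(j,s)$ against $\mu'$). The paper itself supplies no proof of this theorem---its footnote defers to Chapter~1, which presents precisely this standard argument---so there is nothing to flag beyond noting that your second paragraph briefly frames the goal as showing $s$ is unachievable for $j$ before correctly pivoting to the direct contradiction with stability of $\mu'$.
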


This means that DA achieves stability with minimal efficiency loss. The intuition behind this result is explored in Chapter~1, but roughly, it is related to the fact that the set of stable matchings forms a complete lattice in terms of relative welfare of students versus school priority orderings, and the student-optimal DA selects the extremal stable matching in favor of students. 

How well does TTC perform in minimizing the stability loss? The next result establishes an analogous result to \Cref{thm:student-optimal} for TTC in this regard, but in a much weaker and nuanced sense:\footnote{\citet{Dogan_Ehlers(2022)AEJ:Micro} show the robustness of (i) to several measures of relative stability.}

\begin{theorem}[\citealp{Abdulkadiroglu_et_al(2020)AER:Insights}]
\label{thm:ttc}
\begin{description}
    \item[(i)] In the one-to-one matching setting (i.e., schools have unit quota), TTC is ``justified envy minimal:'' no other efficient and strategy-proof mechanism entails ``fewer'' incidences of justified envy, in the set inclusion sense.\footnote{One may be interested in dropping strategy-proofness, namely a mechanism that is efficient and justified envy undominated (in the set containment sense). \citet{Abdulkadiroglu_et_al(2020)AER:Insights} show that such a mechanism is NP-hard to find.}
    \item[(ii)] In the general setting with multiple school quotas, when students' preferences and priorities are generated in the i.i.d.\ fashion, TTC is less likely to create justified envy than random serial dictatorship (RSD).\footnote{See Chapter~1 for the exact definition.}
\end{description}
\end{theorem}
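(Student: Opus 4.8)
The two parts call for quite different techniques, so I would treat them separately. For part~(i) the target is a worst-case undomination statement, which I would prove by contradiction via a manipulation argument. First I would fix the partial order explicitly: for a mechanism $\phi$ and a reported profile $P$, let $E_\phi(P)$ be the set of justified-envy incidences, i.e.\ pairs $(i,s)$ with $s\,P_i\,\phi(i)$ and $i\succ_s \phi(s)$ (in one-to-one matching $\phi(s)$ is a single student, so $(i,s)$ determines the envied student $j=\phi(s)$). ``Justified-envy minimal'' then means there is no efficient and strategy-proof $\psi$ with $E_\psi(P)\subseteq E_{\TTC}(P)$ for every $P$ and strict inclusion at some profile. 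The plan is to suppose such a $\psi$ exists, fix a profile $P^0$, and pick an incidence $(i,s)\in E_{\TTC}(P^0)\setminus E_\psi(P^0)$, writing $j=\mu_{\TTC}(s)$ for the lower-priority student who obtained $s$ under TTC.

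The heart of the argument is to read the relevant structure off the TTC run at $P^0$. Because $i$ strictly prefers $s$ to $\mu_{\TTC}(i)$ yet did not obtain $s$, the school $s$ must have been cleared in an earlier round inside a cycle whose incoming student $j$ had lower $\succ_s$-priority than $i$; the student $i$ failed to capture $s$ only because $s$ pointed along that cycle to a still-higher-priority student. Since $\psi$ removes this incidence while staying efficient, $\psi$ must either hand $i$ a school weakly $P_i$-better than $s$, or award $s$ to a student with weakly higher $\succ_s$-priority than $i$; either way efficiency forces a reallocation relative to TTC. I would then use strategy-proofness to transport this reallocation to a nearby profile, obtained by having the students on the offending TTC cycle truncate or reorder their lists, and show that the school freed up by the reallocation lands on a student who is now justifiably envied by someone whom TTC leaves unenvied, so that $E_\psi\not\subseteq E_{\TTC}$ at the perturbed profile. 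The hard part will be exactly this transport step: strategy-proofness is a cross-profile condition, so ``saving'' the envy at $P^0$ must be shown to resurface at some other profile, and engineering the perturbation so that the resurfaced envy provably lies outside TTC's envy set is the delicate part of the proof.

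For part~(ii) I would pass to expectations and exploit the i.i.d.\ symmetry. By linearity of expectation and the exchangeability of students and of schools, $\mathbb{E}[|E_\phi|]$ equals the number of candidate witness-triples $(i,j,s)$ times the per-triple probability $p^\phi=\Prob[\,i\text{ justifiably envies }j\text{ at }s\,]$, so it suffices to show $p^{\TTC}\le p^{\mathrm{RSD}}$. I would write the event as $W_{ijs}\cap\{i\succ_s j\}$, where $W_{ijs}=\{\,j\in\phi(s)\text{ and }s\,P_i\,\phi(i)\,\}$ depends only on the assignment and the preferences. The clean observation is that RSD ignores priorities: the RSD assignment is independent of each $\succ_s$, so $\Prob[i\succ_s j\mid W_{ijs}]=\tfrac{1}{2}$ and hence $p^{\mathrm{RSD}}=\tfrac{1}{2}\,\Prob[W_{ijs}]$.

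The comparison itself I would organize as a symmetrization over the single fair coin that, under i.i.d.\ priorities, decides whether $i\succ_s j$ or $j\succ_s i$. Conditioning on all other primitives, flipping this coin leaves the RSD assignment unchanged, so exactly one coin outcome produces the incidence; under TTC, by contrast, the coin feeds into the cycle dynamics. The goal is a monotonicity/coupling lemma: raising $j$'s priority at $s$ weakly helps $j$ obtain $s$ and cannot turn a non-envied configuration into an envied one for the pair $(i,j,s)$, so the incidence is at most as likely as the priority-blind $\tfrac{1}{2}$ benchmark, giving $p^{\TTC}\le p^{\mathrm{RSD}}$ and, upon summing, $\mathbb{E}[|E_{\TTC}|]\le\mathbb{E}[|E_{\mathrm{RSD}}|]$. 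I expect this monotone-coupling lemma to be the main obstacle: a single priority swap can ripple through many TTC cycles, so the comparative static that ``more priority at $s$ only helps $j$ and co-moves the assignment with priorities'' must be established by an explicit coupling that tracks how the swap reroutes cycles, rather than by any naive monotonicity.
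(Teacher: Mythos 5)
For part~(i) there is a genuine gap: the step you yourself flag as ``delicate''---transporting the removed envy to a perturbed profile and showing it resurfaces outside $E_{\TTC}$---is the entire proof, and it is not how the argument actually closes. The paper proves something stronger and structurally different: any efficient, strategy-proof $\varphi$ with weakly \emph{less} justified envy than $\TTC$ must equal $\TTC$ at every profile. Rather than starting from a removed incidence $(i,s)\in E_{\TTC}(P^0)\setminus E_\psi(P^0)$, it starts from the \emph{earliest TTC step} $\ell$ at which the two mechanisms' assignments disagree (note strict envy-set inclusion at $P^0$ already implies the assignments differ there), takes the cycle $c=(s_1,i_1,\dots,s_K,i_K)$ cleared at step $\ell$, and perturbs $i_K$'s report to $P'_{i_K}: s_1, s_K,\dots$, i.e.\ TTC match first, the school pointing to $i_K$ second. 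Two facts you never isolate then do all the work: (a) at the perturbed profile $(i_K,s_K)$ does not block $\TTC$, so by the envy-inclusion hypothesis it must not block $\varphi$ either, and since $i_K$ is $s_K$'s highest-priority remaining student this forces $\varphi(P')(i_K)\in\{s_1,s_K\}$; (b) strategy-proofness rules out $s_1$, since $s_1\mathrel{P_{i_K}}\varphi(P)(i_K)$ would let $i_K$ manipulate at $P$. Iterating this forcing around the cycle pins $\varphi(P'_c,P_{-c})(i_k)=s_k$ for every $k$, and the final contradiction is with \emph{Pareto efficiency} (rotating the cycle improves everyone), not with envy-set inclusion at the perturbed profile. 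Your dichotomy at $P^0$ (``$\psi$ gives $i$ something better or gives $s$ to higher priority'') is just the negation of the incidence and does not propagate across profiles by itself; without the earliest-disagreement induction and the $\{s_1,s_K\}$ forcing step, the sketch cannot be completed along the lines you describe.

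For part~(ii) the paper offers no proof (it explicitly proves only~(i)), so there is nothing to compare against; but as a standalone plan yours has a concrete flaw beyond the admitted missing coupling lemma. Your benchmark computation gives $p^{\mathrm{RSD}}=\tfrac12\,\Prob[W_{ijs}^{\mathrm{RSD}}]$, yet the event $W_{ijs}$ is mechanism-dependent: under TTC the assignment is correlated with priorities, so $\Prob[W_{ijs}^{\TTC}]\neq\Prob[W_{ijs}^{\mathrm{RSD}}]$ in general, and conditioning on $W_{ijs}^{\TTC}$ does not make the coin $\{i\succ_s j\}$ fair. Hence $p^{\TTC}\le p^{\mathrm{RSD}}$ does not follow from fairness-for-RSD plus a per-pair monotonicity statement; you would need both a cross-mechanism comparison of the $W$-events and the monotone coupling, and the latter is exactly where single priority swaps reroute cycles globally (with multiple quotas, even the one-to-one TTC--RSD equivalence that might have anchored such a coupling is unavailable). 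As written, part~(ii) is a research program, not a proof.
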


\begin{proof} We only provide the proof for~(i).
Let $\varphi $ be a Pareto efficient and strategy-proof mechanism. We show that if $\varphi $ admits less justified envy (in the sense of admitting a subset of blocking pairs) than $\TTC$ at any priority orderings $\succ$, then $\varphi (P)=\TTC(P)$ for all $P$, where $\TTC(P)$ is the TTC assignment given preferences $P$ and priorities $\succ$. (We suppress the dependence of the matching on $\succ$.) Suppose, to the contrary, there exists $P$ such that
$\varphi(P)\neq \TTC(P).$
Let $I_{k}(P)$ be the set of agents who are matched in step $k$ of $\TTC(P)$, and let $\ell$ be the smallest $k$ such that $I_k(P)$ contains an agent who is assigned differently between $\TTC$ and $\varphi$.
By definition, for some $i\in I_{\ell}(P)$, $\varphi (P)(i)\neq \TTC(P)(i)$.
Let $c=\{s_{k},i_{k}\}_{k=1,\dots,K}$ be the cycle such that $s_k$ points to $i_k$, and $i_k$ points to $s_{k+1}$, with $s_{K+1}\equiv s_1$. Suppose $i$ is matched with $\TTC(P )(i)$ and $i=i_{K}$.

Consider the preference relation:
$P_{i_{K}}^{\prime }:s_{1},s_{K},..$.
Since we have only altered the preferences of $i_{K}$ in $c$ and $\TTC(P)(i_{K})=s_{1}$ has become her first choice, the $\TTC$ matching remains the same: $\TTC(P_{i_{K}}^{\prime },P_{-{i_{K}}} )=\TTC(P )$. 
Since
$\varphi (P)(i_{K})\neq \TTC(P)(i_{K}) =s_{1},$ and $\varphi (P)(i_{K})$ is still available at step $\ell $, we obtain
\begin{align}
\label{eq:sp1}
s_{1}P_{i_{K}}\varphi(P)(i_{K})\text{.}
\end{align}
But since $\TTC(P_{i_{K}}^{\prime },P_{-\{ i_{K}\}})(i_{K})=s_{1}P_{i_{K}}^{\prime }s_{K}$, $(i_{K},s_{K})$ does not block $\TTC(P_{i_{K}}^{\prime },P_{-\{i_{K}\}} )$.

Since $\varphi $ has less justified envy than $\TTC$ at $\succ $, $
(i_{K},s_{K})$ should not block $\varphi (P_{i_{K}}^{\prime
},P_{-\{i_{K}\}} )$. Since $s_{K}$ points to $i_{K}$ in cycle $c$, $%
i_{K}\succ _{s_{K}}j$ for all $j\neq i_{K}$ who are still unassigned at step
$\ell $. Given that $s_{K}$ will be assigned one of the agents still unassigned at step
$\ell $, by construction, we must then have
$
\varphi (P_{i_{K}}^{\prime },P_{-\{i_{K}\}} )(i_{K})\in
\{s_{1},s_{K}\}.
$
Given \Cref{eq:sp1}, strategy-proofness of $\varphi $ implies
$
\varphi (P_{i_{K}}^{\prime },P_{-\{i_{K}\}} )(i_{K})=s_{K}
$;
otherwise $i_{K}$ would be able to manipulate $\varphi $ in economy $%
(P )$ by submitting $P_{i_{K}}^{\prime }$ to obtain $s_{1}$. Hence, $%
i_{K}$ obtains her second choice under $P_{i_{K}}^{\prime }$.
Now, we have $\varphi (P_{i_{K}}^{\prime },P_{-\{i_{K}\}} )(i_{K-1})\neq
s_{K}=\TTC(P_{i_{K}}^{\prime },P_{-\{i_{K}\}} )(i_{K-1})$, and so we can apply
the same argument for $i_{K-1}$.

By iterating the argument for $i_{K-1},\dots,i_1$, and for every
agent in the cycle $c$, we obtain that
$
\varphi(P_{c}^{\prime},P_{-c})(i_{k})=s_{k},
$
where $P_{c}^{\prime}=\{P_{i_{k}}^{\prime}\}_{i_{k}\in c}$ and $%
P_{i_{k}}^{\prime }:s_{k+1},s_{k},\dots$. Applying
the argument iteratively, we find $\varphi (P_{c}^{\prime },P_{-c}
)(i_{k})=s_{k}$ for all $k$. But this contradicts the Pareto efficiency of $%
\varphi $ because every agent in the cycle will be better off if every $%
i_{k} $ is matched with $s_{k+1}$ (modulo $k$) without changing the matching
of other agents, establishing the claim.
\end{proof}

The sense in which TTC performs also ``well'' in stability is weak. First, the constrained-stability result holds only in the one-to-one matching setting. Even in this case, DA's constrained efficiency holds more powerfully. Second, even for the general matching setting, TTC's stability performance could be shown only relatively better than RSD. Still, these results are meaningful since RSD has been the standard method for implementing an efficient assignment. 
 
The logic of why TTC may do better in terms of stability than other efficient mechanisms, such as RSD, is that TTC prioritizes the assignment of students based on their priority standing: those with high priorities at schools that are highly demanded are more likely to be part of the cycles at the early rounds, enabling them to claim their preferred schools. If their preferred schools happen to be those they have high priorities for, then no other students will have justifiable envy of those students. By contrast, RSD ignores students' priorities in its assignment, so it will entail justifiable envy in this case.

To illustrate, consider \Cref{fig:short}, which has the same  priorities as \Cref{fig:long} but different student preferences. 
\begin{figure}[!ht]
\centering
\renewcommand{\arraystretch}{1.22}
\begin{tabular}{c|c|c p{1cm} c|c|c}
\multicolumn{3}{c}{} & & \multicolumn{3}{c}{} \\[-1.5ex]
$P_1$ & $P_2$ & $P_3$ & & $\succ_a$ & $\succ_b$ & $\succ_c$ \\
\cline{1-3} \cline{5-7}
$a$   & $b$   & $a$   & & $1$ & $2$ & $1$ \\
$b$   & $a$   & $b$   & & $3$ & $1$ & $2$ \\
$c$   & $c$   & $c$   & & $2$ & $3$ & $3$ \\
\end{tabular}
\caption{TTC Admits Short Cycles}
\label{fig:short}
\end{figure}

In this case, TTC yields $(1a, 2b, 3c)$, since $1$ and $a$ form a cycle of length~1, or a \emph{short cycle}, and likewise 2 and $b$ form a short cycle. This means that the assignment is not only efficient but also stable, admitting no justified envy.
 
However, suppose the preferences are as in \Cref{fig:long}. Then, recall that the assignment $\mu_{TTC}=(1b, 2a, 3c)$ resulted from clearing a cycle of length 2, or a \emph{long cycle}, $1\to b\to 2\to a\to 1$, in the first round. In this case,  3 has justifiable envy for 2. In fact, whenever a student obtains her assignment via a long cycle, she, \emph{a priori}, has no reason to have a higher priority at the assigned school than any other student, including those who may envy her, just as in the case of RSD. The logic of \citet{Abdulkadiroglu_et_al(2020)AER:Insights} is that TTC may reduce justified envy relative to the RSD insofar as students are assigned via short rather than long cycles. 
 
Still, \Cref{thm:ttc} provides some justification for the use of TTC in case the policymaker cares about efficiency. In practice, however, when centralized school choice assignment has been adopted worldwide, most of the school systems have adopted DA. The only exception is the New Orleans School System, which had used TTC for a year before switching to DA.

One reason for the infrequent use of TTC is that it is difficult for school systems to explain to parents.\footnote{See \citet{Leshno_Lo(2021)Restud} for some recent attempts to make the mechanism explainable.} The other issue is that parents may not tolerate violations of their priorities that occur under TTC. Despite \Cref{thm:ttc}, there are two reasons why the latter loss is significant.

\citet{Che_Tercieux(2024)WP} suggests that the stability benefit of TTC may be small as the market size grows large. Consider the one-to-one matching in which students' preferences and priorities are drawn identically and independently. \Cref{fig:TTC-large}, from Che and Tercieux, shows that as the market size $n$ (i.e., the number of students, which equals the number of schools) increases, the proportion of incidences of justified envy among the incidences of envy under TTC converges to that under RSD.\footnote{Incidences of envy are the (random) set of students and the (random) set of schools for each student in the set such that each student prefers the set of schools over her assigned school. Note that the incidences of envy are the same between TTC and RSD, due to the equivalence result by \citet{Abdulkadiroglu_Sonmez(1998)ECTA} and \citet{Knuth(1996)JoA}. The normalization serves as a scaling device that keeps the equivalence nontrivial. }
\begin{figure}
\centering
\includegraphics[width=0.8\linewidth]{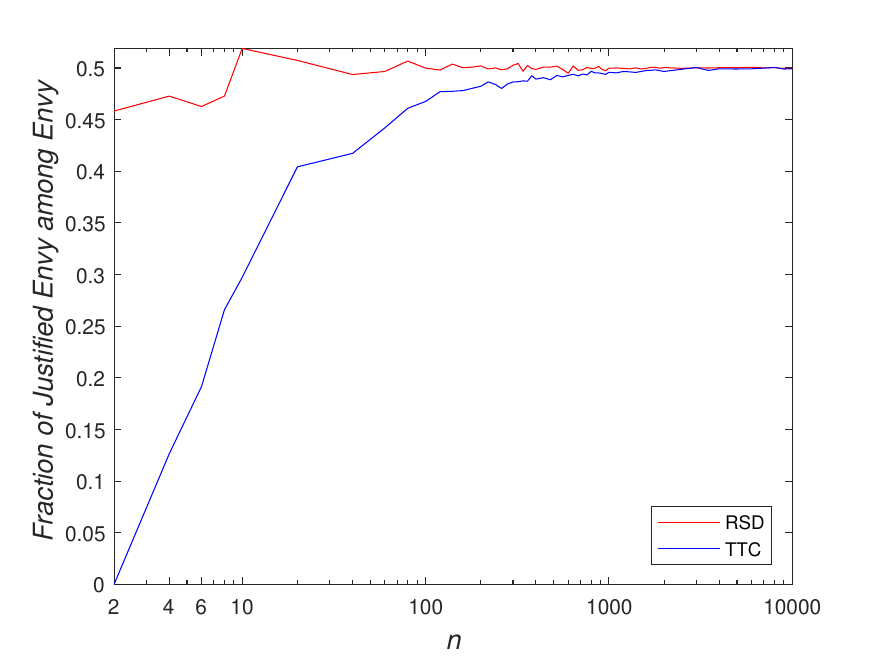}
\caption{The Fraction of Incidences of Justified Envy out of all Incidences of Envy}
\label{fig:TTC-large}
\begin{threeparttable}
\begin{tablenotes}
\item \emph{Source:} \citet[Figure~5]{Che_Tercieux(2024)WP}.
\end{tablenotes} 
\end{threeparttable}
\end{figure}
The reason for this is that as the market size $n$ grows, the fraction of the students assigned via long cycles converges to~1.

Second, \citet{Che_Tercieux(2019)JPE} consider whether the trade-off between DA and TTC vanishes in a large market if one considers approximate versions of stability and efficiency. They consider an environment in which students have cardinal preferences over schools that consist of common and idiosyncratic terms of payoffs for schools, and schools have cardinal payoffs corresponding to their priorities. They then define an assignment to be \textbf{asymptotically efficient} if, for any $\epsilon>0$, the proportion of students who may benefit more than $\epsilon$ from a Pareto-improving reassignment vanishes in probability as $n\to \infty$, and \textbf{asymptotically stable} if the proportion of ``blocks'' that benefit the deviating student and school by more than $\epsilon$, out of all possible blocks, vanishes in probability as $n\to \infty$. When common payoff terms are important, more precisely in the sense that some schools are uniformly better than others, regardless of the idiosyncratic payoffs, they show that the TTC fails to be asymptotically stable, suggesting that the stability loss remains significant (see Theorem~8 of Chapter~2, and Theorem~2 of \citealp{Che_Tercieux(2019)JPE}).

Combined, these two results suggest that as long as respecting priorities remains an important policy objective, the TTC remains problematic, possibly explaining the infrequent use of the TTC (or RSD).

\paragraph{Compromise between Efficiency and Stability.}

\citet{Che_Tercieux(2019)JPE} also show that when common payoff terms are important in the sense mentioned above, DA fails to be asymptotically efficient, meaning the efficiency loss of DA does not vanish as the market grows large. Combining with the asymptotic instability of TTC, this suggests that if the policymaker cares about both efficiency and stability/no justified envy, the two well-known mechanisms may entail significant drawbacks. It is also not clear why one should satisfy one desideratum exactly with the minimal loss of the other. In principle, one may prefer to achieve both desiderata approximately. \citet{Che_Tercieux(2019)JPE} show that such a compromise is indeed possible. 

To explain their proposed solution, it is helpful to understand why DA fails to achieve asymptotic efficiency. Whenever students compete for limited seats at a school, they are rationed based on their priorities at the school, independently of their relative preferences about the school. For example, it is plausible that a student who ranks the school at the bottom of her preference orderings may win against the student for whom the school is most preferred. In other words, whenever students have conflicting preferences and compete for a popular school, the conflicts are resolved in a way that is likely to undermine their welfare. 

Their proposal, called \textbf{Deferred Acceptance with Circuit Breaker (DACB)}, modifies DA so that the tentative assignments are finalized periodically whenever some student reaches a pre-specified cutoff number of applications for the first time. The process then repeats with the application counter refreshed after each assignment. See \citet{Che_Tercieux(2019)JPE} and Chapter~2 for the precise definition of DACB. They show that with the cutoff number of applications suitably calibrated, DACB can achieve asymptotic efficiency and asymptotic stability. While the mechanism is not strategy-proof, it is asymptotically incentive compatible---namely, truth-telling is an $\epsilon$-Bayesian Nash equilibrium: for each $\epsilon>0$, truth-telling gives a student within $\epsilon$ of the highest possible payoff given all others report truthfully if the number~$n$ of students is sufficiently large. A more detailed presentation of this result can be found in Chapter~2. 
 
\subsection{Coarse Priorities and Indifferences}
\label{subsec:coarse_priorities}

In real-world school choice scenarios, school priorities are often coarse, meaning they do not strictly rank all students. For example, in the Boston Public Schools (BPS) system, priorities are based on factors like sibling attendance, walk zone, and whether the student currently attends the school. This coarseness leads to many ties in priority rankings. Formally, we extend the priority rankings $\succ$ to $\succsim$, which allows for possible ties.

In practice, the assignment algorithm requires strict rankings, which are obtained by breaking ties in various ways. There are two common methods for tie-breaking:
\begin{itemize} 
\item Single Tie-Breaking (STB): A single lottery is used to create a priority ranking of students that applies to all schools.
\item Multiple Tie-Breaking (MTB): A separate lottery is used for each school, resulting in independent rankings across schools.
\end{itemize}
It is well known that, with indifference, a student-optimal stable matching---a stable matching that Pareto dominates all other stable matchings---may not be well defined. Hence, a more reasonable objective is to achieve a \textbf{constrained-efficient} stable matching, defined as a stable matching that is Pareto-undominated by other stable matchings. Unfortunately, this more reasonable goal may not be achieved by DA with standard tie-breaking rules such as MTB and STB.

To illustrate the inefficiency of MTB, consider the example depicted in \Cref{fig:long}, except that schools have no priorities, meaning schools are indifferent over all students. Now, suppose the MTB were used to break ties. Then, the priorities $\succ$ in \Cref{fig:long} could be realized with positive probability (more precisely, equal to $1/27$). In that case, a matching $\mu_{MTB}=(1a, 2b, 3c)$ would be realized, which we know is Pareto-dominated by $\mu^*=(1b, 2a, 3c)$. Moreover, the latter matching is stable. In other words, the MTB yields a stable matching that is not constrained-efficient, or \emph{Pareto-undominated by other stable matchings}.

Suppose instead that STB were used to break ties. Then, effectively, the mechanism becomes the random serial dictatorship, which we know would be efficient. Further, given that there are no priorities, the efficient assignment would be stable. Hence, in this case, DA-STB implements a student-optimal stable matching. Indeed, if schools have no priorities, then DA-STB reduces to RSD, so it produces an efficient and student-optimal stable assignment.

More generally, however, even DA-STB cannot consistently select a constrained-efficient stable matching. To illustrate this, suppose the students have the same preferences and priorities as in \Cref{fig:long}, except that school~$a$ is indifferent between $3$ and $2$. In this case, a STB may yield the priorities orderings precisely as in \Cref{fig:long} with probability $1/2$, so that the assignment $\mu_{STB}=(1a, 2b, 3c)$ arises, which again is Pareto-dominated by $\mu^*=(1b, 2a, 3c)$. Importantly, $\mu^*$ is stable, so again $\mu_{STB}$ fails to be constrained-efficient. 

These examples illustrate the possibility that DA with a standard tie-breaking rule leaves room for Pareto improvement without sacrificing stability. We shall discuss how this can be achieved shortly. Before proceeding, however, it is important to note that such an improvement cannot be achieved without a price. The next theorem shows that any such Pareto improvement can only be achieved at the expense of strategy-proofness.

\begin{theorem}[\citealp{Abdulkadiroglu_Pathak_Roth(2009)AER}]
\label{thm:apr}
Assume that all students are acceptable to schools. For any tie-breaking rule, there is no strategy-proof mechanism that Pareto dominates DA with that tie-breaking rule.
\end{theorem}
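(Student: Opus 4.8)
The plan is to fix the tie-breaking rule, so that the weak priorities $\succsim$ become strict priorities $\succ$, and to work entirely with $\succ$. Under strict priorities the student-proposing $\DA$ is strategy-proof and, by \Cref{thm:student-optimal}, produces the student-optimal stable matching. Suppose toward a contradiction that $\varphi$ is a strategy-proof mechanism that Pareto dominates $\DA$, with a strict improvement for some student at some profile $P$.

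First I would extract the beneficiary structure. Strict Pareto domination at $P$ gives a student—call her $i$—who is strictly better off: $\varphi(P)(i)=s$ with $s\mathrel{P_i}\DA(P)(i)$. By \Cref{thm:student-optimal}, $\DA(P)(i)$ is the best school $i$ can obtain in any stable matching, so $s$ is \emph{unattainable} for $i$; concretely, in the $\DA$ run at $P$ the school $s$ fills to capacity with students all ranked above $i$ under $\succ_s$, since otherwise $(i,s)$ would block the stable matching $\DA(P)$. (Relatedly, $\varphi(P)$ is individually rational, being a Pareto improvement over the individually rational $\DA(P)$, yet it cannot be stable, so justified envy is unavoidable.)

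The core of the argument is a manipulation combined with the hypothesis that $\varphi$ dominates $\DA$ at \emph{every} profile. I would have $i$ submit the truncated list $P_i'$ declaring $s$ her only acceptable school. Because $s$ is unattainable for $i$, the comparative statics of $\DA$ yield $\DA(P_i',P_{-i})(i)=\varnothing$. Domination then forces $\varphi(P_i',P_{-i})(i)\in\{s,\varnothing\}$, while strategy-proofness—truthful reporting of $P_i'$ must do at least as well as deviating to the original $P_i$, which secures $s$—forces $\varphi(P_i',P_{-i})(i)=s$. Thus at the new profile $i$ again occupies a seat at $s$ that $\DA$ withholds, so she displaces some higher-priority student $j\succ_s i$ who, by domination, is herself strictly better off under $\varphi$ than under $\DA$ at this profile.

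Iterating the truncation—now on $j$ at her strictly preferred $\varphi$-school—produces a sequence of profiles in which a $\varphi$-versus-$\DA$ discrepancy persists and is pushed onto a new student at each step. The plan is to argue, via the monotonicity of $\DA$ under these one-student truncations, that the sequence cannot recur indefinitely, and that at termination one reaches a beneficiary whose preferred school has a vacant seat under $\DA$, contradicting non-wastefulness of the $\DA$ outcome. The step I expect to be the main obstacle is precisely this finiteness/termination argument: isolating the right notion of progress across the successive truncated profiles and ruling out cycling, since the relevant school (and hence the relevant priority order) changes from step to step, so no single priority ranking is obviously monotone. By contrast, the earlier reductions—extracting the beneficiary and establishing unattainability—are routine given \Cref{thm:student-optimal} and strategy-proofness; the genuine difficulty lies in controlling how the discrepancy propagates and terminates.
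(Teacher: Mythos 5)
Your opening moves coincide with the paper's proof: you truncate the beneficiary $i$ to the single acceptable school $s=\varphi(P)(i)$, use strategy-proofness of $\DA$ to get $\DA(P_i',P_{-i})(i)=\varnothing$ (this is the clean justification, rather than ``comparative statics''), and use strategy-proofness of $\varphi$ to force $\varphi(P_i',P_{-i})(i)=s$. But at exactly this point the paper stops, and you keep going---and the step you yourself flag as the ``main obstacle,'' termination of the iterated-displacement argument, is a genuine gap: you never exhibit a quantity that makes progress across truncations, and as you note, no single priority order is monotone along the sequence. The paper avoids the iteration entirely with one lemma you are missing: \emph{if $\nu$ Pareto dominates $\DA^{\tau}(P)$, then $\nu$ and $\DA^{\tau}(P)$ match the same set of students} (this is where the hypothesis that all students are acceptable does its work). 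With that fact, $\DA(P_i',P_{-i})(i)=\varnothing$ immediately yields $\varphi(P_i',P_{-i})(i)=\varnothing$, which contradicts the strategy-proofness conclusion $\varphi(P_i',P_{-i})(i)=s$ (equivalently, with true preferences $P_i'$, student $i$ profitably deviates to $P_i$). One truncation suffices; no propagation is needed.

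For what it is worth, your iteration can be rescued, and the right notion of progress is the \emph{set of truncated students}, not any priority ranking. At each step the newly truncated student $j$ is unmatched under $\DA$ at the new profile but holds a seat at her single listed school $t$ under $\varphi$; since $t$ is full under $\DA$ (stability, all students acceptable), capacity forces some $k\in \DA(t)\setminus\varphi(t)$, and domination makes $k$ strictly better off. The key observation is that $k$ can never be a \emph{previously truncated} student: a truncated student's only acceptable school is her single listed one, so if she sits in $\DA(t)$ then $t$ is that school, and domination would pin her $\varphi$-assignment to $t$ itself, contradicting her displacement. Hence each round truncates a fresh student, the truncated set grows strictly, and finiteness of $\mathcal{I}$ delivers the contradiction. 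So your plan is completable, but as submitted the proof is incomplete at its acknowledged crux, and the paper's matched-set lemma is the decisive idea that renders the whole iteration unnecessary.
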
 
\begin{proof} 
Consider any tie-breaking rule $\tau$. We fix schools' weak priority orderings $\succsim$ and an arbitrary tie-breaking rule $\tau$ that completes $\succsim$ to a strict order $\succ^{\tau}$. One can show, and we take as given, the fact: If $\nu$ Pareto dominates $\mu = \DA^{\tau}(P)$ for a given tie-breaking rule $\tau$, then the same set of students is matched in both $\nu$ and $\mu$.

Suppose that there exists a strategy-proof mechanism $\varphi$ and tie-breaking rule $\tau$ such that $\varphi$ Pareto dominates $\DA^{\tau}$. There exists a profile $P$ such that
$\varphi_{i}(P) R_i \DA_{i}^{\tau}(P)$ for all $i$, and 
$\varphi_{j}(P) P_j \DA_{j}^{\tau}(P)$ for some $j$, where $aR_i b$ means either $aP_i b$ or $a=b$.
We will say that the matching $\varphi(P)$ Pareto dominates the matching $\DA^{\tau}(P)$, where $\DA^{\tau}(P)$ denotes the student-optimal stable matching for $(P; \succ^{\tau})$. Let $s_i = \DA_{i}^{\tau}(P)$ and $\hat{s}_i = \varphi_i(P)$ denote $i$'s assignment under $\DA^{\tau}(P)$ and $\varphi(P)$ respectively, where $\hat{s}_i P_i s_i$.

Consider profile $P' = (P_i', P_{-i})$ where $P_i'$ ranks $\hat{s}_i$ as the only acceptable school. Since $\DA^{\tau}$ is strategy-proof, $s_i = \DA_{i}^{\tau}(P) R_i \DA_{i}^{\tau}(P')$, and since $\DA_{i}^{\tau}(P')$ is either $\hat{s}_i$ or $i$, we conclude that $\DA_{i}^{\tau}(P') = i$. Then, since the set of assigned students is identical between the two mechanisms, we must have $\varphi_i(P') = i$. Suppose $P'$ are the actual preferences. Then, $i$ could state $P_i$ and be matched to $\varphi_i(P) = \hat{s}_i$ which under $P_i'$ she prefers to $\varphi_i(P') = i$. This shows that $\varphi$ is not strategy-proof, a contradiction.
\end{proof}
 
The theorem implies that there is no strategy-proof mechanism that delivers a Pareto improvement over a DA assignment with any exogenous tie-breaking rule. Put differently, if a school choice mechanism aims to improve upon the DA algorithm with a given tie-breaking rule by making some students better off without harming others, it cannot be strategy-proof. This highlights the inherent trade-off between strategy-proofness and potential welfare performance in school choice.

Indeed, the theorem applies broadly to any mechanism that may or may not yield a stable outcome. For example, recall that TTC and Serial Dictatorship are Pareto efficient and strategy-proof. The theorem implies that neither mechanism can Pareto dominate DA when all students are acceptable to all schools.

These findings underscore the importance of DA as a benchmark for evaluating other school choice mechanisms. While DA may not always produce the most efficient outcome, it offers a balance between strategy-proofness, fairness, and efficiency that is difficult to surpass without sacrificing one of these desirable properties.

\paragraph{Stable Improvement Cycles Algorithm.}

The discussion so far has focused on strategy-proof methods for breaking ties in priorities, where students have no incentive to misrepresent their preferences. However, in some contexts, ensuring truthful revelation may not be the primary concern. For instance, if there is a high degree of trust between students and the school district, or if students are well-informed about school priorities and the consequences of different choices, strategic behavior may be less prevalent.

In such scenarios, it may be possible to achieve improvements over the DA algorithm while maintaining stability. One approach is the \textbf{Stable Improvement Cycles (SIC)} algorithm.

The SIC algorithm starts with any stable matching $\mu$. For example, one can first run the student-proposing DA with some exogenous tie-breaking rule and find $\mu$. We say that student~$i$ \textbf{desires $s$ at $\mu$} if she prefers school $s$ to her assignment $\mu(i)$, and let $D_s$ be the set of highest $\succsim_s$-priority students among those who desire $s$. We can then form a \textbf {stable improvement cycle} that consists of distinct students $i_1,\ldots,i_n\equiv i_0$ ($n\geq 2$) such that each student~$i_l$, for $l=0,\ldots, n-1$, is in $D_{\mu(i_{l+1})}$.
 
Given a stable improvement cycle, define a new matching $\mu'$ by:
\begin{align*}
\mu'(j)=\left\{
\begin{array}{ll}
\mu(j) & \mbox{if } j \notin \{i_1,\ldots,i_n\} \\
\mu(i_{l+1}) & \mbox{if } j=i_l\\
\end{array}
\right.
\end{align*}
Observe that matching $\mu'$ continues to be stable and it Pareto dominates $\mu$.

\begin{theorem}[\citealp{Erdil_Ergin(2008)AER}]
\label{thm:erdil-ergin}
Fix $(P,\succsim,q)$ and let $\mu$ be a stable matching. If
$\mu$ is Pareto-dominated by another stable matching, then $\mu$
admits a stable improvement cycle.
\end{theorem}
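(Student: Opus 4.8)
The plan is to exploit the Pareto-dominating stable matching $\nu$ to build the required cycle, using non-wastefulness of $\mu$ for the combinatorial backbone and stability of both $\mu$ and $\nu$ for the priority condition.

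First I would isolate the \emph{improving} students $B=\{i:\nu(i)\neq\mu(i)\}$. Since preferences are strict and $\nu$ Pareto-dominates $\mu$, each $i\in B$ has $\nu(i)\,P_i\,\mu(i)$, so $i$ desires $\nu(i)$ at $\mu$; moreover $B\neq\emptyset$. Next I would establish a displacement (vacancy-chain) structure: if $i\in B$ enters $s:=\nu(i)$ (so $i\in\nu(s)\setminus\mu(s)$), then $s$ is desired, hence full under $\mu$ by non-wastefulness, so $|\mu(s)|=q_s\geq|\nu(s)|$ forces some occupant $j\in\mu(s)\setminus\nu(s)$; this $j$ has $\mu(j)=s$ and $\nu(j)\neq s$, so $j\in B$ as well. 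Iterating $i_{l+1}:=$ a displaced occupant of $s_l:=\nu(i_l)$ stays inside the finite set $B$ and therefore closes into a cycle $i_0,\dots,i_{n-1}$ with $\mu(i_{l+1})=\nu(i_l)$; in particular each $i_l$ desires $\mu(i_{l+1})$.

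This is almost a stable improvement cycle; the one missing ingredient is the top-priority requirement $i_l\in D_{\mu(i_{l+1})}$. I would therefore recast the target as producing a directed cycle in the auxiliary graph $H$ on schools whose edges are $\mu(d)\to t$, one for each top-priority desirer $d\in D_t$: any cycle $s^0\to\cdots\to s^{n-1}\to s^0$ of $H$ has witnesses $d^l\in D_{s^{l+1}}$ with $\mu(d^l)=s^l$, and since $d$ desiring $t$ forces $\mu(d)\neq t$ there are no self-loops, so the $d^l$ are distinct and form a genuine stable improvement cycle of length $n\geq 2$, with the reassignment $\mu'(d^l)=s^{l+1}$ the desired Pareto improvement.

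The main obstacle is passing from the displacement cycle to a top-priority cycle in $H$, and this is exactly where stability of $\nu$ enters. The key fact I would prove is a \emph{promotion} step: if $j\in\nu(t)$ but $j\notin D_t$, pick any $m\in D_t$; then $m\succ_t j$ and $m$ desires $t$ at $\mu$, so if $m$ preferred $t$ to $\nu(m)$ it would, together with $t$, block $\nu$ (since $j\in\nu(t)$), contradicting stability of $\nu$; hence $\nu(m)\,R_m\,t\,P_m\,\mu(m)$, which makes $m$ itself an improving student that top-ranks $t$. Stability of $\mu$ supplies the complementary fact that every occupant of a desired school weakly outranks all its desirers, so displaced students are never stranded below the desirers. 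The delicate point is that promoting $i_l$ to the higher-priority desirer $m$ changes the school this node targets (from $\mu(i_{l+1})$ to $\nu(m)$), so the displacement cycle cannot simply be patched in place; one must instead run the promotion globally and argue, by finiteness of the student set and of the priority levels, that the process terminates at a cycle all of whose nodes sit in the relevant $D_t$---equivalently, that $H$ restricted to the desired, improvement-relevant schools cannot be acyclic. I expect this termination/closure argument to be the crux of the proof.
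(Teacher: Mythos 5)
Your setup assembles the right ingredients, and they are in fact the ingredients of Erdil and Ergin's original argument (the chapter states the theorem without proof, citing them): the vacancy-chain structure from non-wastefulness, the reduction to finding a cycle in the auxiliary school graph $H$, and, crucially, the promotion lemma, which uses stability of the dominating matching $\nu$ to show that any $m \in D_t$ at a school $t$ receiving a $\nu$-entrant satisfies $\nu(m)\,R_m\,t\,P_m\,\mu(m)$ and is therefore itself an improving student. But the proof is incomplete exactly where you flag it, and the closure step you sketch would not go through as described. Iterating promotions globally and appealing to ``finiteness of the student set and of the priority levels'' has no engine behind it: a promotion at $t$ re-targets a different school, priorities at different schools are incomparable, so there is no monotone potential, and nothing prevents such a process from wandering without ever producing an all-top-priority cycle. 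Moreover, the process framing is unnecessary, since each $D_t$ is determined by $\mu$ alone and is fixed throughout.

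The missing idea is a rural-hospitals-style counting lemma. Because $\nu$ Pareto-dominates the stable $\mu$: any school $s$ receiving a $\nu$-entrant is desired by that entrant at $\mu$, hence full at $\mu$ by non-wastefulness, so $|\nu(s)| \leq q_s = |\mu(s)|$; schools with no entrant trivially satisfy $|\nu(s)| \leq |\mu(s)|$; and every student matched under $\mu$ stays matched under $\nu$ (individual rationality of $\mu$ plus Pareto domination). Summing forces $|\nu(s)| = |\mu(s)|$ for \emph{every} school and identical sets of matched students---so every school that loses a student under $\nu$ also gains one. Now let $T$ be the nonempty set of schools receiving a $\nu$-entrant and pick, for each $t \in T$, any $m_t \in D_t$. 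The promotion lemma makes $m_t$ strictly improving; the counting lemma guarantees $m_t$ is matched under $\mu$ (your proposal never rules out that $D_t$ contains students unassigned at $\mu$, which would break the construction) and that $\mu(m_t)$, having lost $m_t$, itself lies in $T$. Thus $t \mapsto \mu(m_t)$ is a self-map of the finite set $T$, and its periodic part gives schools $t_0,\dots,t_{n-1}$ with $m_{t_l} \in D_{t_l}$ and $\mu(m_{t_l}) = t_{l+1}$---traversed in reverse, precisely a cycle of your graph $H$, hence a stable improvement cycle by your (correct) reduction. In short, your promotion lemma is the right key, but it only closes the cycle after the fill-preservation argument, which your proposal omits; note also that your initial displacement cycle, while correct, plays no role in the finished proof.
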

 
This theorem implies a method for finding a constrained-efficient stable matching using the SIC algorithm. Start with a stable matching. Repeatedly search for and execute stable improvement cycles. Terminate when no further stable improvement cycles are possible. The algorithm guarantees that we will eventually reach a constrained-efficient matching. This means that starting from a stable matching, by successive execution of stable improvement cycles, we will reach a point where no further Pareto improvements are possible while maintaining stability.

In light of \Cref{thm:apr}, one would expect that the SIC algorithm is not strategy-proof. This is indeed the case; a student may profitably misrepresent his/her preferences to receive a school that he/she may not like in the first-round DA in order to trade for a better school from the SIC round. Despite this problem, \citet{Erdil_Ergin(2008)AER} argue that the gains from misrepresenting preferences are rather limited: for any school one may receive from a misreported preference, there is some probability that one could get an even better school by truth-telling.

\paragraph{Alternative tie-breaking rules in the NYC school assignment.}

Summarizing the results so far, we have shown that DA with standard tie-breaking rules such as STB and MTB does not yield a constrained-efficient stable matching. Nevertheless, there is a sense in which DA-STB delivers a better efficiency outcome than DA-MTB. \citet{Abdulkadiroglu_Che_Yasuda(2015)AEJ:Micro} and \citet{Allman_Ashlagi_Nikzad(2023)TE} show that DA-STB allocates \emph{popular} schools---popular in the sense that the numbers of students top-ranking them exceed their capacities---ex-ante efficiently, whereas DA-MTB does not, in a continuum economy with no priorities. We have also shown that the SIC algorithm can produce a constrained-efficient stable matching although at the expense of strategy-proofness.

Can we quantify the efficiency losses arising from DA-STB and DA-MTB in the field? The question is ultimately an empirical one. Table~1 in \citet{Abdulkadiroglu_Pathak_Roth(2009)AER} provides some ideas about the magnitude of these losses. In line with the intuition provided above, it shows that DA-STB tends to deliver better outcomes than DA-MTB: for each $ k \le 4$, the number of students who receive $k$-th or better options is larger in the former than in the latter. Perhaps consistent with this outcome, the NYC school system adopted DA-STB rather than DA-MTB. Still, this comparison does not extend to those assigned to lower-ranked schools or unassigned.\footnote{\citet{Arnosti(2023)MNSC} provides a formal result that matches the observed single-crossing property.} The same table also suggests that the improvement from SIC on DA-STB is rather modest, particularly for the price of sacrificing strategy-proofness.

\subsection{Cardinal Welfare in School Choice}
\label{subsec:cardinal_welfare}

The previous sections focused on students' ordinal preferences for welfare evaluation, without considering their preference intensities. In practice, however, students may differ in their preference intensities even among schools they rank identically, and such differences may be relevant for evaluating the overall student welfare resulting from an assignment.

To illustrate this point, consider an example with three students $\mathcal{I}=\{1,2,3\}$ and three schools $\mathcal{J}=\{a,b,c\}$, each with one seat. The schools have no priorities in keeping with the feature mentioned in the previous section. All students have the same ordinal rankings with $aPbPc$. But their preference intensities described by the (normalized) von~Neumann--Morgenstern (vNM) values associated with alternative assignments are:\footnote{The values are normalized to have the same total sum across students.}
\medskip
\begin{center}
\renewcommand{\arraystretch}{1.22}
\begin{tabular}
[c]{|c|c|c|c|}\hline & $u_{j}^{1}$ & $u_{j}^{2}$ &
$u_{j}^{3}$\\\hline $j=a$ & $4$ & $4$ & $3$\\\hline $j=b$ & $1$ &
$1$ & $2$\\\hline $j=c$ & $0$ & $0$ & $0$\\\hline
\end{tabular}
\end{center}
\medskip
In this scenario, any assignment of students across the schools is ex-post Pareto efficient.  However, there is a sense in which student~3 would suffer less than either student~1 or~2 would from being assigned $b$ rather than $a$. One way to operationalize such a cardinal welfare concept is by measuring students' preferences toward lotteries of alternative assignments. Indeed, lotteries are quite natural in the school choice context due to coarse priorities, as noted in the previous section.

\paragraph{IA versus DA revisited.} 

To see how the ex-ante welfare concept helps us to operationalize cardinal preferences, recall the earlier discussion about Immediate Acceptance (IA) and Deferred Acceptance (DA) algorithms. Given identical ordinal preferences, one cannot compare the two based on either ex-post Pareto efficiency or stability. Both are indistinguishable based on these criteria.

To be specific, suppose first that DA is used to assign students, with ties broken by any lotteries, e.g., STB or MTB. Since it is strategy-proof, all three students submit true (ordinal) preferences, and they will be assigned to the schools with equal probabilities. This leads to an expected utility of $EU_{1}=EU_{2}=EU_{3}=\frac{5}{3}$ for the three students. 

Now, suppose IA is used to assign students with any lotteries as a tie-breaker. No matter how student~3 ranks the schools, students~1 and~2 prefer to rank truthfully, meaning truthful reporting is a dominant strategy for these students. Given this, if student~3 ranks truthfully, then all three students are assigned with equal probabilities across the schools just like under DA, so she will get the expected utility of $5/3$. Suppose student~3 ranks $(b,a,c)$. Then, she will be assigned $b$ with probability one, and receive an expected utility of 2, so she will indeed ``manipulate'' her preference. More interestingly, such a behavior not only increases her ex-ante welfare, but it also benefits the first two students; they are now assigned between $a$ and $c$ with equal probabilities, thus enjoying the expected utility of $2$ each. Remarkably, the IA assignment Pareto dominates the DA assignment in ex-ante welfare. This result is generalized for an arbitrary number of schools and capacities.\footnote{See also \citet{Miralles(2012)JET} who establishes a similar result in a large continuum economy model.}

\begin{theorem} [\citealp{Abdulkadiroglu_Che_Yasuda(2011)AER}]
Consider a school choice model $(\mathcal{I},\mathcal{J},q)$, where students have identical ordinal preferences but have vNM utilities $(u_{s})_{s\in \mathcal{J}}$ drawn as private information according to an arbitrary distribution with finite support, and the schools have no priorities and use symmetric lotteries to break ties. In any symmetric Bayes--Nash equilibrium of the IA algorithm, each type of student is
(at least weakly) better off than she is under DA.
\end{theorem}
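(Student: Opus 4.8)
The plan is to reduce the theorem to a single structural identity---that the DA lottery equals the type-average of the IA equilibrium lotteries---and then to close with a one-line revealed-preference inequality. I would first pin down the DA benchmark. Because schools have no priorities and all students share a common ordinal order, say $s_1\succ\cdots\succ s_J\succ\emptyset$, DA with symmetric tie-breaking coincides with random serial dictatorship, so by symmetry every student faces the same marginal lottery $P^{\DA}$. With identical preferences the realized counts form the deterministic ``cascade''---$s_1$ filled to capacity, then $s_2$, and so on---so $P^{\DA}_{s_k}=n_k/I$, where $n_k=\min\{q_{s_k},\max\{0,I-\sum_{j<k}q_{s_j}\}\}$ is the number of students DA places at $s_k$.

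Next I would fix a symmetric Bayes--Nash equilibrium of IA and record a revealed-preference inequality. From any single student's vantage point the others' behavior is independent of her own type, so each report induces a well-defined lottery over schools; write $L(u)$ for the equilibrium lottery obtained by type $u$, and let $f$ denote the prior over the (finitely many) types. Since reports are type-free objects, type $u$ could imitate the strategy used by any other type $u'$ and thereby secure the lottery $L(u')$ against the same fixed environment; optimality of equilibrium play therefore yields $u\cdot L(u)\ge u\cdot L(u')$ for every pair of types.

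The crux is the identity $\sum_{u}f(u)\,L(u)=P^{\DA}$. By symmetry the expected number of students IA assigns to a school $s$ equals $I$ times the $s$-coordinate of the type-average lottery $\bar L:=\sum_u f(u)\,L(u)$, so it suffices to show IA and DA produce the same expected counts. Here I would argue that the IA equilibrium is non-wasteful: if some school $s$ finished below capacity while a student sat at a strictly worse school or unassigned, that student could deviate to rank $s$ first and be admitted for sure (a school that never fills never rejects), a profitable deviation contradicting equilibrium. Given identical ordinal preferences, non-wastefulness forces precisely the cascade counts $n_k$---the same counts DA produces---whence $\bar L=P^{\DA}$. This non-wastefulness-plus-cascade step is the main obstacle, since it is what ties the endogenous, strategically chosen IA allocation back to the mechanical DA allocation; the remaining steps are bookkeeping.

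Finally I would combine the pieces: for any type $u$,
\[
u\cdot P^{\DA}=u\cdot\bar L=\sum_{u'}f(u')\bigl(u\cdot L(u')\bigr)\le u\cdot L(u),
\]
where the inequality applies the revealed-preference bound $u\cdot L(u')\le u\cdot L(u)$ term by term and then averages with weights summing to one. Thus every type's IA equilibrium payoff weakly exceeds its DA payoff, establishing the claim.
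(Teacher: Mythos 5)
Your overall architecture is the paper's: both proofs rest on (i) the identity that the type-averaged IA equilibrium lottery equals the DA lottery, and (ii) the observation that a type $u$ can imitate the play of any other type $u'$, so that averaging the imitation inequalities $u\cdot L(u')\le u\cdot L(u)$ with weights $f(u')$ yields $u\cdot P^{\DA}\le u\cdot L(u)$. Your final display is exactly the paper's deviation to ``randomizing over the strategies other students employ in equilibrium according to their relative frequencies $\pi$,'' merely unpacked into pairwise inequalities. The difference lies in how identity (i) is established, and that is where your proposal has a genuine gap.

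The gap is your non-wastefulness step. You argue that if school $s$ ``finished below capacity while a student sat at a strictly worse school or unassigned, that student could deviate to rank $s$ first and be admitted for sure (a school that never fills never rejects).'' But ``$s$ finishes below capacity'' is an \emph{event} of positive probability, not an almost-sure property of the equilibrium; admission ``for sure'' after ranking $s$ first would require $s$ to be under-demanded in round~1 with probability one. Worse, even in realizations where the deviation would secure $s$, promoting $s$ to the top delays all of the student's other applications by one round, so in the (possibly far more likely) realizations where $s$ is over-demanded she can end up strictly worse than under her equilibrium report. Equilibrium only rules out deviations that are profitable in \emph{ex-ante expectation}, so an improvement on one positive-probability event contradicts nothing. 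The unassignment half of your claim can be rescued: appending unlisted acceptable schools at the \emph{bottom} of the list leaves play unchanged until the student would have exited, hence is weakly improving in every realization and strictly improving on the event where she is unassigned while $s$ never filled---this is essentially the deviation the paper invokes for its ``all seats are filled'' claim. But the half concerning a student sitting at a strictly worse school---which you need to force the full cascade counts when $\sum_s q_s > I$---admits no such history-preserving fix, so your proof that $\bar L = P^{\DA}$ does not go through in that case. The paper sidesteps the issue by normalizing $\sum_s q_s \le |\mathcal{I}|$ (with every school preferred to non-assignment), so the cascade degenerates to ``every seat is filled,'' $P^{\DA}_s = q_s/|\mathcal{I}|$, and the needed identity follows from the appending-type deviation alone. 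Adopting that normalization, or actually proving the stronger cascade claim, would close your gap.
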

\begin{proof}
Let $\mathcal{J}=\{s_1,\dots, s_m\}$, with capacity $q_s$ for each $s\in \mathcal{J}$. Without loss, $\sum_s q_s\le 
|\mathcal{I}|$, the number of students; and assume that assignment at any school is better than non-assignment. Each student draws a vNM $\mathbf{u}=(u_1, \dots, u_m)$, $u_1>\dots>u_m$, from a finite support according to probability mass function $\pi(\mathbf{u})$. Under DA, each student has a probability $q_s/|\mathcal{I}|$ of getting assigned $s$. 
 	
Now consider IA. Fix any symmetric Bayes--Nash equilibrium $(\sigma, \dots, \sigma)$, in which each player employs a mixed strategy $\sigma: \mathbf{u}\mapsto \Delta(\mathcal R)$, where $\mathcal R$ is the set of all rank-order lists of $\mathcal{J}$. (Such an equilibrium is well-defined by Nash's existence theorem.) Let $x_s(\mathbf{u})$ be the probability that a student of type $\mathbf{u}$ obtains $s\in \mathcal{J}$ in that equilibrium. In equilibrium, all seats are filled; otherwise, one can deviate to increase the assignment probability at the school with empty seats and reduce the chance of non-assignment, a contradiction. Consequently, for each $s\in \mathcal{J}$, $|\mathcal{I}|\sum_{\mathbf{u}}\pi(\mathbf{u}) x_s(\mathbf{u})=q_s.$

Consider a symmetric Bayes--Nash equilibrium of IA, and a student of any type~$\mathbf{u}$. Suppose the student deviates to randomizing over the strategies other students employ in equilibrium according to their relative ``frequencies'' $\pi$. Such a deviation will yield a random assignment $(x_s(\mathbf{u}'))_{s\in \mathcal{J}}$ with probability $\pi(\mathbf{u}')$, for each $\mathbf{u}'$. Since such a deviation cannot be profitable in equilibrium, 
\begin{align*}
U^{IA}(\mathbf{u})&:=\sum_{s\in \mathcal{J}} u_s x_s(\mathbf{u}) \ge \sum_\mathbf{u} \sum_{s\in \mathcal{J}} {u}_s (\pi(\mathbf{u})x_s(\mathbf{u}))\\
& = \sum_{s\in \mathcal{J}} {u}_s \sum_\mathbf{u} (\pi(\mathbf{u})x_s(\mathbf{u})) = \sum_{s\in \mathcal{J}} {u}_s \frac{q_s}{|\mathcal{I}|}=:U^{DA}(\mathbf{u})\text{.}
\end{align*} 
Since the argument applies to all $\mathbf{u}$, the desired result is proven.
\end{proof}

This theorem suggests the potential for IA to attain a more desirable allocation than DA in ex-ante welfare. In the above example, the first two students have a higher willingness to pay for $a$ than student~3, when the ``payment'' is the risk/the probability of getting assigned $c$, the worst school. IA activates such a trading possibility, while DA does not. The reason can be traced to the fact that the students are given more ``say'' under IA than DA, about how they wish to be treated when competing for a school. While the students, using their strategic ranking of schools, can improve their odds of assignment at a school, DA does not allow for this ability; any competition among students for a school is entirely resolved by how the school ranks them, not how the students rank the school.\footnote{This difference did not go unnoticed. In the wake of the redesign of the assignment system, a parent noted: ``I'm troubled that you're considering a system that takes away the little power that parents have to prioritize... what you call this strategizing as if strategizing is a dirty word...'' (Boston Public School Hearing, 2005).} 

Recall that the latter feature is precisely what allowed DA to be strategy-proof. The theorem suggests that there may be a silver lining with a strategic mechanism such as IA in eliciting cardinal preferences and using them to achieve a desirable outcome.

There are several reasons, however, that one should not overstate the case for IA based on the preceding argument. First, the preceding analysis is limited to the special case of the identical ordinal preferences and no priorities. \citet{Troyan(2012)GEB} shows that the interim Pareto dominance of IA over DA may not hold when schools have coarse priorities, but that IA Pareto dominates all strategy-proof mechanisms in the ex-ante sense, or utilitarian welfare sense, even in the presence of coarse priorities. It remains unknown, however, to what extent a similar result holds when students have heterogeneous ordinal preferences. Second, it is unclear whether parents or students may play equilibrium strategies; namely, we do not know how closely the theoretical prediction based on equilibrium analysis describes the outcome of a strategic mechanism such as IA. Quite possibly, individuals may not understand the mechanism well and may not know how best to strategize in the face of such a mechanism. It is also conceivable that some parents are better than others at strategizing, and if this confers an advantage to the former, there may be an issue of equity.\footnote{\label{fn:example_naive}However, it is worth noting that strategically sophisticated applicants may not harm naive students. Suppose in the above example that all students are initially naive and rank $(a,b,c)$, they all enjoy the expected utility of $5/3$. Suppose now only applicant~3 becomes sophisticated and ranks $(b,a,c)$; this benefits the two naive applicants by reducing the competition for $a$, which stands in contrast to \citet{Pathak_Sonmez(2008)AER} (see \cref{fn:Pathak-Sonmez}).} Third, IA may create different kinds of inequity, conferring a strategic advantage to applicants with high outside options (e.g., private schools). To see this, consider the same example as above, except that student~3 now has an outside option worth $1.5+\epsilon$ (whereas students 1 and 2 have zero outside options). With the insurance of $1.5+\epsilon$ for an arbitrarily small $\epsilon>0$, student~3 now prefers to take a risk and rank $(a,b,c)$, and with probability $1/3$ succeeds in taking $a$ away from those who arguably value it more. See \citet{Akbarpour_et_al(2022)JPubE}, \citet{Shorrer(2019)EC19}, and \citet{Calsamiglia_Martinez-Mora_Miralles(2021)EJ} for formalizing this intuition. All in all, there are compelling reasons to weigh strategy-proofness more favorably than the potential benefit from its violation. 

Of course, this caveat raises an interesting research question: can one promote cardinal welfare better than DA with a minimal sacrifice of its desirable strategy-proofness property?

\paragraph{Choice-Augmented Deferred Acceptance.}
 
The Choice-Augmented Deferred Acceptance (CADA) mechanism, developed by \citet{Abdulkadiroglu_Che_Yasuda(2015)AEJ:Micro}, offers a method for improving cardinal welfare with a modest sacrifice on the desirable property of strategy-proofness. It does so by modifying DA to allow students to influence how they are treated in ties while maintaining truthful revelation of ordinal preferences.

CADA consists of two main steps:
\begin{enumerate}
    \item [(i)] Ordinal Preference Submission and Target School Declaration: Students submit their ordinal preferences over schools, just like in DA. In addition, they declare a ``target'' school.
    \item [(ii)] Modified DA with Targeted Tie-Breaking: The mechanism then runs a modified version of the DA algorithm. In this modified version, any ties at each school are resolved by the STB, except that the realized lottery order at each school is reordered to prioritize the students who declare that school over the others who did not.\footnote{Specifically, two random priority lists, $\mathbf{T}$ and $\mathbf{R}$, of students are generated. For each school, the students who targeted that school are ranked first, according to $\mathbf{T}$, and then those who did not are ranked next according to $\mathbf{R}$. } 
\end{enumerate}

The DA algorithm is then run based on the ordinal preferences from step~(i) and the priority lists from step~(ii). It is easy to see that the CADA is ordinally strategy-proof, meaning it is a dominant strategy for each student to rank schools truthfully. The scope for the strategization is limited to the tie-breaking part of DA, which can be influenced by the ``strategic'' choice one makes about the target schools. Since the tie-breaking is subordinated to the ``intrinsic'' priorities, CADA, just like DA, respects the true priorities of students. In short, CADA produces a stable matching with no justified envy. In particular, when the priorities are strict, CADA reduces to DA. When the priorities are coarse (as is often the case), however, CADA can elicit cardinal preferences in a way that improves ex-ante welfare.

To see this, suppose CADA is used in the above example. All students report truthfully their ordinal preference $(a,b,c)$. For targeting, students 1 and 2 will name $a$ for the target school (in a dominant strategy); and given this, student~3 will name $b$ for the target school. Then, school~$a$ will prioritize 1 and 2 over 3, so either of the first two will win (based on random tie-breaking). In the second round, when the two rejected students, who include 3, apply to school~$c$, it will prioritize student~3, resulting in that student receiving $b$, and the other getting $c$. Hence, the ex-ante efficient assignment will be obtained.

\citet{Abdulkadiroglu_Che_Yasuda(2015)AEJ:Micro} consider a model with a mass of students and a finite number of schools with masses of seats and with no priorities. They then measure the efficiency of a mechanism by the \textbf{scope of efficiency}---formally, the maximal set $\mathcal{J}'\subset \mathcal{J}$ of schools whose allocation shares among students cannot be reallocated in a way that yields a Pareto improvement. It is as if a classic market exists, and delivers Pareto efficiency, but only with shares of $\mathcal{J}'$ being allowed to be traded. They show that CADA supports a larger scope of efficiency than DA in terms of set-inclusion or set-cardinality.

\subsection{Affirmative Actions: Quotas and Reserves}
\label{subsec:quotas}

The concept of affirmative action in public school assignments has been a subject of extensive debate, reflecting societal goals to address historical inequalities and promote diversity within educational institutions. School districts often implement controlled choice programs to offer parental choice while striving to maintain a balance across various student characteristics, such as race, ethnicity, or socioeconomic status. Before the advent of school choice policies, neighborhood assignment frequently led to socioeconomically segregated areas, as wealthier families relocated to neighborhoods with their preferred schools. This left families without such means constrained to neighborhood schools, regardless of their qualities or match values. To counteract these shortcomings, controlled school choice programs have become increasingly popular. This section will explore the common methods of incorporating affirmative action—maximum quotas and reserves—into school assignment mechanisms like Deferred Acceptance (DA).

\subsubsection{Maximum Quotas}

In practice, controlled school choice programs often enforce diversity objectives by setting hard upper bounds (``ceilings'')  and/or hard lower bounds (``floors'')  for different student types. An affirmative action policy may take the form of a maximum quota against the advantaged group and a minimum quota for a protected disadvantaged group. For example, the Jefferson County School District in Kentucky requires elementary schools to allocate between 15\% and 50\% of their seats to students from the geographic area with the highest concentration of beneficiaries of affirmative action policies. Similarly, in New York City, ``Educational Option'' (EdOpt) schools have to accept students across different ability ranges: 16\% must score above grade level on the standardized English Language Arts test, 68\% at grade level, and the remaining 16\% below grade level. 

\citet{Abdulkadiroglu_Sonmez(2003)AER} were among the first to formally incorporate type-specific quotas (upper bounds) into their seminal work on school choice mechanisms. They consider maximal quotas for non-overlapping types of students. They then modify DA to respect the type-specific quotas: in each step, each school accepts (tentatively) the students of each type at most up to its maximal quota and rejects the others based on its priority, and once the quota is filled for a given type, it only replaces the students in that type as better students apply.

This maximum-quota-modified DA, labeled DA$_{\text{MaQ}}$, retains the desirable properties of DA: it is strategy-proof, meaning students have no incentive to misrepresent their preferences, and it produces a student-optimal stable matching.

However, a critical question arises: Do maximum quotas always benefit the minority groups they are intended to protect? Surprisingly, this is not necessarily the case. As highlighted by \citet{Kojima(2012)GEB}, affirmative action policies can have ``perverse consequences.'' To illustrate, consider an example from \citet{Kojima(2012)GEB}:
Suppose there are three students, $1$, $2$ and $3$, and two schools, $a$ and $b$, with capacities $q_a=2$ and $q_b=1$. Students $1$ and $2$ belong to a majority group, and $3$ belongs to a minority group.

\begin{figure}[!ht]
\centering
\renewcommand{\arraystretch}{1.22}
\begin{tabular}{c|c|c p{1cm} c|c}
\multicolumn{3}{c}{} & \multicolumn{1}{c}{} & \multicolumn{2}{c}{} \\[-1.5ex]
$P_1$ & $P_2$ & $P_3$ & & $\succ_a$ & $\succ_b$ \\
\cline{1-3} \cline{5-6}
$a$   & $a$   & $b$   & & $1$ & $2$ \\
      & $b$   & $a$   & & $3$ & $1$ \\
\multicolumn{3}{c}{}  & & $2$ & $3$ \\
\end{tabular}
\caption{The Perverse Effects of Maximum Quotas}
\label{fig:kojima}
\end{figure}
 
Suppose first there are no maximum quotas against the majority group, then DA produces a matching $\mu=(1a, 2a, 3b)$ in which all students are assigned their first-best choices.

Next, school~$a$ introduces an affirmative action policy for the minority group via a maximum quota of one against the majority group $\{1, 2\}$. DA$_{\text{MaQ}}$ then produces a matching $(1a, 2b, 3a)$. The reason is that the maximum quota on the majority group forces school~$a$ to reject 2 when both 1 and 2 apply. Student~2 then applies to $b$, and displaces 3 at school~$b$ (since 2 has a higher priority than 3 at $b$). Not only is the new matching Pareto inferior to the old matching, but the new matching is strictly worse for the minority student~3, as she prefers $b$ over $a$.
 
\subsubsection{Minority Reserves}

To circumvent the inefficiencies and perverse effects caused by rigid majority quotas, \citet{Hafalir_Yenmez_Yildirim(2013)TE} propose a different interpretation of affirmative action policies based on ``minority reserves.'' The idea of minority reserves is that schools give higher priority to minority students up to a certain ``reserve'' number, but unlike a maximum quota against the majority group, it does not cap the seats allocated to the majority group. Minority reserves are extensively used in practice.\footnote{See \url{https://www.schools.nyc.gov/enrollment/enrollment-help/meeting-student-needs/diversity-in-admissions} and \citet{Dur_Pathak_Sonmez(2020)JET} for more details of the use of minority reserves in NYC and Chicago exam schools.} 

The Deferred Acceptance (DA) algorithm adapted for minority reserves, labeled DA$_{\text{MiR}}$, works as follows:
\begin{itemize}
    \item \textbf{Step~1:} Start with the matching in which no student is matched. Each student~$i$ applies to her first-choice school. Each school~$s$ first accepts as many as $r_s^m$ minority applicants with the highest priorities if there are enough minority applicants. Then it accepts applicants with the highest priorities from the remaining applicants until its capacity is filled or there are no applicants left. Any remaining applicants are rejected by~$s$.
    \item \textbf{Step~$k$:} Start with the tentative matching obtained at the end of Step~$k-1$. Each student~$i$ who got rejected at Step~$k-1$ applies to her next-choice school. Each school~$s$ considers the new applicants and students admitted tentatively at Step~$k-1$. Among these students, school~$s$ first accepts as many as $r_s^m$ minority students with the highest priorities if there are enough minority students. Then it accepts students with the highest priorities from the remaining students. The rest of the students, if any remain, are rejected by $s$. If there are no rejections, then stop.
\end{itemize}
The algorithm terminates when no rejection occurs and the tentative matching at that step is finalized. This mechanism is designed such that schools dynamically adjust their priorities: giving the highest priority to student types who have not filled their reserve, but no such priority once their reserves are filled. This dynamic priority structure ensures that the DA algorithm with minority reserves preserves its desirable properties, including being student-optimal, fair, non-wasteful, and group strategy-proof.

It is easy to see that DA$_{\text{MiR}}$ is more flexible than DA$_{\text{MaQ}}$. Suppose the maximum quota $q_s^M$ against the majority group is replaced by a minority reserve set at $r_s^m=q_s- q_s^M$. The minority group is prioritized over the $r_s^m$ reserved seats. However, if there are not enough minority students to fill the reserves, majority students can still be admitted to occupy those reserved seats, meaning that the number of majority students admitted can exceed what would be allowed under a strict quota if there are not enough minority students who prefer that school to fill the reserved slots.

\citet{Hafalir_Yenmez_Yildirim(2013)TE} demonstrate a crucial theoretical advantage of minority reserves:
\begin{theorem}
Consider majority quotas $q^{M}_s$ and minority reserves $r^{m}_s$ such that $r^{m}_s=q_s-q^{M}_s$, for each $s\in \mathcal{J}$. DA$_{\text{MiR}}$ with reserves $(r^{m}_s)$ produces a matching that Pareto dominates DA$_{\text{MaQ}}$ matching with quotas $(q^{M}_s)$.
\end{theorem}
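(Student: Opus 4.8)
The plan is to show that $\text{DA}_{\text{MiR}}$ with reserves $r^m_s = q_s - q^M_s$ produces a matching that Pareto dominates the $\text{DA}_{\text{MaQ}}$ matching with quotas $q^M_s$. The natural strategy is to exploit the observation made informally just before the theorem: $\text{DA}_{\text{MiR}}$ is a \emph{relaxation} of $\text{DA}_{\text{MaQ}}$. Under $\text{DA}_{\text{MaQ}}$, a majority student can never occupy more than $q^M_s$ seats at school $s$ regardless of whether enough minority students actually apply; under $\text{DA}_{\text{MiR}}$, the same number of seats is \emph{reserved} for minorities, but if fewer than $r^m_s$ minority students seek admission to $s$, the leftover reserved seats become available to majority students. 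So every feasible outcome of $\text{DA}_{\text{MaQ}}$ satisfies the reserve-based feasibility constraints of $\text{DA}_{\text{MiR}}$, but not vice versa: $\text{DA}_{\text{MiR}}$ operates on a weakly larger constraint set.

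First I would set up both algorithms as deferred-acceptance procedures with respect to modified, type-dependent choice functions (acceptance rules) at each school, and verify that both choice functions satisfy the substitutes and law-of-aggregate-demand conditions, so that the general theory of DA applies and each algorithm yields a well-defined student-optimal matching that is stable with respect to its own choice function. The key structural claim is that the $\text{DA}_{\text{MiR}}$ choice function \emph{dominates} the $\text{DA}_{\text{MaQ}}$ choice function in the sense that, applied to any common set of applicants, the reserve rule accepts a (weakly) more generous set: it never rejects a student that the quota rule would accept. This comparison of choice functions is the analytical heart of the argument and the step I expect to be the main obstacle, because one must handle the case distinction carefully---when the reserve for minorities is non-binding, $\text{DA}_{\text{MiR}}$ fills the spare seats with high-priority majority students whom $\text{DA}_{\text{MaQ}}$ would have turned away, and one must confirm that in every configuration of applicant types the accepted set under reserves contains (in the relevant revealed-preference sense) the accepted set under quotas.

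Given this choice-function dominance, the plan is to invoke a monotone-comparative-statics result for deferred acceptance: when a school's choice function is made uniformly more accepting (students are ``more in demand''), the student-optimal stable matching makes every student weakly better off. One clean way to carry this out is to run the two DA procedures in parallel and argue by induction on the rounds that the set of rejections under $\text{DA}_{\text{MiR}}$ is always contained in the set of rejections under $\text{DA}_{\text{MaQ}}$; a student who is never rejected from a school under the more generous rule cannot end up worse off, since under student-proposing DA each student's assignment is the best school at which she is never ultimately rejected. This rejection-containment induction delivers $\mu_{\text{MiR}}(i) \, R_i \, \mu_{\text{MaQ}}(i)$ for every student $i$, which is exactly Pareto dominance.

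Finally I would confirm that the dominance is genuine Pareto dominance for \emph{students} and does not rely on schools being welfare-relevant, consistent with the efficiency notion defined earlier in the paper; no appeal to strict priorities is needed since the reserve and quota rules both operate on the given (possibly coarse) priority orderings. I would note that the example in \Cref{fig:kojima} illustrates the result concretely---there $\mu_{\text{MaQ}} = (1a, 2b, 3a)$ while the reserve version restores $(1a, 2a, 3b)$, which Pareto dominates it---providing a useful sanity check that the direction of the dominance is correct.
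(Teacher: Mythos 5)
The chapter itself does not prove this theorem: it states it with a citation to \citet{Hafalir_Yenmez_Yildirim(2013)TE} and illustrates it on \Cref{fig:kojima}, so your proposal can only be judged against the original argument, which it essentially reconstructs. Your architecture is sound and, in outline, correct: (i)~view DA$_{\text{MaQ}}$ and DA$_{\text{MiR}}$ as deferred acceptance with respect to school-level choice rules, both of which are substitutable; (ii)~establish the pointwise dominance $C^{\text{MaQ}}_{s}(A) \subseteq C^{\text{MiR}}_{s}(A)$ for every applicant pool $A$; (iii)~lift choice-rule dominance to dominance of the student-proposing outcomes. The lemma you single out as the heart of the matter is indeed true and provable by exactly the case analysis you anticipate. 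The only nontrivial configuration is a majority student $x$ admitted under quotas with $k > r^{m}_{s}$ minorities ranked above her: there one observes that all $k$ of those minorities and all $\alpha - 1$ higher-priority majorities must also have been admitted under quotas (rejections under the quota rule are monotone along the priority order), so $\alpha + k \leq q_{s}$, which places $x$ within the $q_{s} - r^{m}_{s}$ open seats under reserves. The symmetric minority case uses that when more than $r^{m}_{s}$ minorities are admitted under quotas, the majority quota is slack, so the quota rule reduces to plain priority selection.

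The one step you should tighten is the parallel-run induction. As literally stated---``the set of rejections under DA$_{\text{MiR}}$ is always contained in the set of rejections under DA$_{\text{MaQ}}$''---the claim is not well-posed round by round, because once the runs diverge the two mechanisms present each school with different applicant pools in the same round. The correct formulation is cumulative: every rejection that ever occurs in the reserves run transfers to the quotas run. Concretely, if $i$ is rejected by $s$ in the reserves run from cumulative pool $A'$, and by induction $A' \subseteq A$ where $A$ is the cumulative pool at $s$ under quotas, then substitutability of the reserves choice gives $i \notin C^{\text{MiR}}_{s}(A)$, and pointwise dominance then gives $i \notin C^{\text{MaQ}}_{s}(A)$; hence each student's best never-rejected school under reserves weakly improves on her quota assignment. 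This is where the substitutability you verify in step~(i) actually earns its keep---without it, the ``more generous rule, fewer rejections'' intuition does not survive the divergence of the two runs (note that the tempting shortcut of showing the quota outcome is itself stable under the reserves choice rules fails: an under-capacity school whose majority quota binds is blocked under reserves). Two cosmetic points: the conclusion is weak dominance, $\mu_{\text{MiR}}(i)\, R_{i}\, \mu_{\text{MaQ}}(i)$ for all $i$, consistent with the chapter's gloss that the result is ``weakly Pareto dominates''; and your remark that no appeal to strict priorities is needed is backwards---both choice rules require a strict order (or a common tie-breaking) to be well defined, though since the two mechanisms use the same order this is harmless.
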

This theorem implies that the student-proposing DA with minority reserves (DA$_{\text{MiR}}$) weakly Pareto dominates DA with majority quotas (DA$_{\text{MaQ}}$), suggesting that switching from majority quotas to minority reserves can make all students at least weakly better off.

This result can be illustrated easily in \Cref{fig:kojima} by \citet{Kojima(2012)GEB}. Suppose school~$a$ replaces its maximum quota of 1 against students $\{1,2\}$ with a reserve of 1 for student~3. DA$_{\text{MiR}}$ then produces $(1a, 2a, 3b)$,\footnote{The algorithm ends in step~1 where both 1 and 2 are assigned school~$a$, and 3 is assigned $b$. Since no minority student applies to $a$, accepting both majority students satisfies the policy.} which benefits both majority and minority students.

Even with minority reserves, however, some minorities may be made worse off, while others remain the same. Consider the following example from \citet{Hafalir_Yenmez_Yildirim(2013)TE}, in which there are three students 1, 2, and 3, with $\{2,3\}$ being minorities, and three schools, $a, b,$ and $c$, each with one seat. See \Cref{fig:minority_reserves}.
\begin{figure}[!ht]
\centering
\renewcommand{\arraystretch}{1.22}
\begin{tabular}{c|c|c p{1cm} c|c|c}
\multicolumn{3}{c}{} & & \multicolumn{3}{c}{} \\[-1.5ex]
$P_1$ & $P_2$ & $P_3$ & & $\succ_a$ & $\succ_b$ & $\succ_c$ \\
\cline{1-3} \cline{5-7}
$a$   & $c$   & $a$   & & $1$ & $1$ & $1$ \\
$c$   & $a$   & $b$   & & $2$ & $2$ & $2$ \\
$b$   & $b$   & $c$   & & $3$ & $3$ & $3$ \\
\end{tabular}
\caption{Minority Reserves Can Hurt Minorities}
\label{fig:minority_reserves}
\end{figure}

With no affirmative action, DA yields matching $(1a, 2c, 3b)$, which is Pareto efficient. Suppose (only) school~$a$ introduces a minority reserve of 1. Then, DA$_{\text{MiR}}$ yields $(1c, 2a,3b)$, which is Pareto inferior to the DA matching without affirmative action. The intuition is that 3 now outcompetes 1 at $a$ in step~1 based on her minority status, only to be displaced by 2 later, following 2's displacement by 1 at $c$.

Despite these specific examples, simulations by \citet{Hafalir_Yenmez_Yildirim(2013)TE} indicate that the positive welfare effects of minority reserves on minority students are ``substantial,'' with improvements for up to 30\% of minority students on average. Even more significant benefits are achieved for majority students, with up to 50\% being better off under DA$_{\text{MiR}}$ compared to DA$_{\text{MaQ}}$. These simulations, which account for correlations in student preferences and school priorities, demonstrate that minority reserves, on average, make minorities better off (while potentially making majorities worse off) than no affirmative action, and that DA$_{\text{MiR}}$ benefits both minorities and majorities significantly compared to DA$_{\text{MaQ}}$. Crucially, majority quota-based mechanisms are highly sensitive to quota size, especially for majority welfare, whereas minority reserve-based mechanisms temper the adverse effects on majorities.

\citet{Ehlers_Hafalir_Yenmez_Yildirim(2014)JET} further investigate the implications of ``hard bounds'' versus ``soft bounds'' in controlled school choice.\footnote{ \citet{Echenique_Yenmez(2015)AER} provide axiomatic characterizations of choice rules based on reserves and on quotas. } They formally define ``hard bounds'' as feasibility constraints that must be met, such as fixed upper and lower bounds for student types at schools. With hard bounds, they demonstrate serious limitations: a key finding is that assignments that are both ``fair across types'' and ``non-wasteful'' may not even exist. This non-existence problem extends even to assignments that are ``fair for the same types'' (where envy is only considered between students of the same type) if non-wastefulness is also required. This impossibility arises because lower bounds can create complementarities among students for schools, a known risk factor for nonexistence.

Given these difficulties with hard bounds, the paper introduces ``soft bounds'' as a novel interpretation, where control constraints dynamically regulate school priorities. This is precisely the mechanism behind minority reserves discussed earlier, where priorities adapt to fill floors or address ceilings. Under this ``soft bounds'' view, an assignment that is fair across types and non-wasteful is guaranteed to exist. The student-proposing DA, when operating with these soft bounds, achieves a student-optimal, fair, and non-wasteful assignment, and is also group strategy-proof. While the student-proposing DA maximizes student welfare among fair and non-wasteful assignments, it may not necessarily respect the controlled choice constraints (i.e., the floors and ceilings) in some cases. In contrast, the school-proposing DA with soft bounds minimizes violations of controlled choice constraints among fair and non-wasteful assignments.

In summary, \citet{Ehlers_Hafalir_Yenmez_Yildirim(2014)JET} highlight a fundamental trade-off: hard bounds are problematic due to existence issues and lack of strategy-proofness, whereas soft bounds (like minority reserves) restore desirable properties such as fairness, non-wastefulness, and truthfulness. This implies that if policymakers are not overly paternalistic, soft-bound policies are the superior choice, even though desired diversity might not be completely guaranteed.

A crucial aspect, often implicit in the mechanisms discussed so far, is the \textit{precedence order} in which different types of seats are filled when a student is eligible for more than one. So far, our discussion has implicitly assumed a certain precedence order for filling these seats: reserved seats are processed first for the protected group. However, recent research highlights that this seemingly minor implementation detail can profoundly alter the intended outcomes of affirmative action policies \citep{Dur_Kominers_Pathak_Sonmez(2018)JPE, Dur_Pathak_Sonmez(2020)JET, Sonmez_Yenmez(2022)ECTA}.

Specifically, the order in which a school fills its seats, whether it is through open competition first or through reserved categories first, can determine who gets in and from which group. For instance, in Boston, a policy designed to set aside 50\% of seats for local ``walk-zone'' students effectively yielded almost no advantage for them in practice \citep{Dur_Kominers_Pathak_Sonmez(2018)JPE}. This was because the district's mechanism filled the reserved walk-zone seats first. While this sounds intuitive, it meant that the strongest walk-zone students took those reserved spots. When it came to filling the ``open'' seats (available to all), the remaining walk-zone students, generally having lower lottery numbers or merit among the overall applicant pool, were then outcompeted by non-walk-zone students for those general seats. Had the open seats been processed first, more walk-zone students overall would have gained admission \citep[see][]{Dur_Kominers_Pathak_Sonmez(2018)JPE}. This revelation underscores that the sequencing of seat allocation is not a neutral technicality. It can create ``statistical targeting'' effects, where policies that appear neutral on paper (like equal reserve sizes) can implicitly favor or disadvantage certain groups based on their score distributions \citep[see][]{Dur_Pathak_Sonmez(2020)JET}. This highlights a fundamental trade-off: market designers have additional leverage beyond just setting quotas or reserves; they can use precedence orders to intentionally (or unintentionally) influence which groups are ultimately admitted, impacting both diversity and merit-based goals \citep[see][]{Dur_Pathak_Sonmez(2020)JET, Sonmez_Yenmez(2022)ECTA, Abdulkadiroglu_Grigoryan(2025)WP}. In particular, \citet{Abdulkadiroglu_Grigoryan(2025)WP} provide an axiomatic foundation based on constrained efficiency for pinning down a precedence order.


\section{Methods for Preference Estimation}
\label{sec:methods}

The rapid expansion of centralized school choice systems across the globe \citep[see][]{Neilson(2024)WP} has opened up new empirical opportunities for researchers to analyze how these procedures perform, how participants behave, and how alternative assignment mechanisms affect educational outcomes. A central ingredient of this empirical agenda is the use of rich administrative data generated by centralized matching platforms---including submitted rank-order lists (ROLs), priorities, school capacities, and final assignments---to recover information about applicants' underlying preferences. Such information is essential for evaluating the performance of different mechanisms, conducting counterfactual policy simulations, and assessing the welfare and distributional impacts of school choice reforms.

However, extracting reliable preference information from observed ROLs poses significant challenges. As highlighted in \Cref{subsubsec:IA}, in strategic mechanisms such as Immediate Acceptance, submitted rankings do not necessarily reflect applicants' true preferences. Even in strategy-proof mechanisms like Deferred Acceptance, reported preferences may not be truthful: for instance, applicants might omit schools they view as unattainable. In addition, list-length constraints, which are common in many systems, further distort observed rankings when applicants strategically exclude schools not because of lack of preference but to avoid going unmatched. These features make the interpretation of reported preferences inherently non-trivial.

The stakes of misinterpreting these data are high. Preference estimates feed directly into welfare analyses and policy recommendations. Misreading why disadvantaged students avoid ranking selective schools, for example, may lead to erroneous conclusions about their preferences and the design of policies aimed at closing opportunity gaps.

This section reviews the methodological approaches that have been developed to estimate preferences from the data generated by school choice mechanisms, highlighting both foundational frameworks and more recent advances. A key feature of most centralized school choice systems is that schools do not submit preferences but instead rank students according to pre-specified priority criteria. The resulting priorities are policy parameters chosen by the designer and, crucially for empirical work, are known functions of observable characteristics. This implies that only students' preferences must be recovered, distinguishing these settings from matching markets with non-transferable utility where both sides' preferences need to be estimated \citep[e.g.,][]{Agarwal(2015)AER, Diamond_Agarwal(2017)QE, He_Magnac(2022)EJ, Friedrich_et_al(2024)NBER, He_Sinha_Sun(2024)ECTA, Klein_Aue_Ortega(2024)GEB, Ederer(2025)WP}.\footnote{See \citet{Agarwal_Somaini(2023)chapter} for a recent survey of empirical models of non-transferable utility matching.}

\citet{Agarwal_Somaini(2020)AnnuRev} provide an excellent overview of many of the methods discussed in this section. We build on their survey by integrating recent developments and extending the scope to new empirical applications. While our review is mostly centered on preference estimation in school choice settings, we also draw on studies of centralized college admissions where programs rank  students using pre-specified rules, as the methodological approaches are largely common to both contexts. Our focus is on the identification and estimation of preferences, and we do not cover the complementary literature that exploits quasi-experimental variation embedded in centralized assignment mechanisms to evaluate school effectiveness \citep[e.g.,][]{Deming(2011)QJE, Pop-Eleches_Urquiola(2013)AER, Abdulkadiroglu_et_al(2011)QJE, Abdulkadiroglu_Angrist_Pathak(2014)ECTA, Abdulkadiroglu_Angrist_Narita_Pathak(2017)ECTA, Abdulkadiroglu_Angrist_Narita_Pathak(2022)ECTA, Angrist_Hull_Pathak_Walters(2017)QJE, Angrist_Hull_Walters(2023)Handbook}. Nonetheless, we return to some of the key insights from this literature in \Cref{sec:empirical_insights} of the chapter.

We begin by distinguishing between preference estimation approaches suited to IA and other inherently manipulable mechanisms (\Cref{subsec:preference_estimation_IA}), and those suited to DA (\Cref{subsec:preference_estimation_DA}), which is strategy-proof for applicants under certain conditions.\footnote{We do not discuss preference estimation under the Top Trading Cycles mechanism. At present, there are no prominent empirical implementations of TTC, and most of the methods covered in this section are not applicable. In particular, neither the optimal portfolio approach nor the stability-based approach can be used, as the assumptions required for their validity do not hold under TTC. Although the truth-telling approach discussed in \Cref{sec:TT_approach} could, in principle, apply in this context, it rests on strong assumptions that may not be satisfied in practice. If TTC sees wider adoption, developing preference estimation methods tailored to this mechanism would be a promising direction for future work.} Although there is some methodological overlap, each setting poses distinct identification challenges, which we address separately for expositional clarity. Beyond this distinction, we emphasize recent innovations that incorporate insights from behavioral market design, allowing for belief errors, strategic mistakes, and bounded rationality in estimation. We then turn to a more detailed discussion of identification and estimation techniques across empirical approaches (\Cref{subsec:identification_estimation}). For a summary of the applications discussed throughout this section, we refer the reader to \Cref{tab:survey}, which outlines the main settings, mechanisms, and estimation approaches used in the literature. 

\subsection{Preference Estimation under Immediate Acceptance and Other Strategic Mechanisms}
\label{subsec:preference_estimation_IA}

Early contributions in the empirical market design approach to school choice have focused on the challenges of recovering preferences under strategic mechanisms---particularly Immediate Acceptance (IA)---reflecting both their widespread use and the rich theoretical literature analyzing their properties.

\paragraph{Evidence of strategic behavior under manipulable mechanisms.} A key milestone in the empirical analysis of manipulable school choice mechanisms has been the consistent finding, both in lab experiments and real-world settings, that they induce pervasive strategic behavior, often at the expense of less informed or less sophisticated agents. In a seminal study, \citet{Chen_Sonmez(2006)JET} showed that participants in a school choice experiment were significantly more likely to misrepresent their preferences under IA than under either DA or TTC. In one treatment, fewer than 14\% of participants reported their true preferences under IA, compared to 72\% under DA and 50\% under TTC. The finding that IA results in lower rates of truthful reporting compared to DA or TTC has consistently been confirmed by nearly all subsequent school choice experiments \citep[for a comprehensive survey, see][]{Hakimov_Kubler(2021)ExpEcon}.

Field studies have corroborated these experimental findings, offering both direct and indirect evidence of preference manipulation under strategic assignment mechanisms. These include analyses of reported rankings under IA or closely related mechanisms \citep[e.g.,][]{Abdulkadiroglu_Pathak_Roth_Sonmez(2006)NBER, Abdulkadiroglu_Pathak_Roth_Sonmez(2006)WP, Agarwal_Somaini(2018)ECTA} and survey data eliciting students' true preferences \citep[e.g.,][]{Budish_Cantillon(2012)AER, Kapor_Neilson_Zimmerman(2020)AER, De_Haan_et_al(2023)JPE}. For example, \citet{Agarwal_Somaini(2018)ECTA} exploit geographic discontinuities in proximity-based priorities under Cambridge Public Schools' (CPS) Controlled Choice Plan, which uses a variant of IA, to show that students adjust their rankings strategically to retain priority, providing clear evidence of manipulation.

This growing body of evidence has been instrumental in challenging common misconceptions among school district officials, who often misinterpret the high proportion of students assigned to their top-listed school under IA as a sign of genuine satisfaction. In several cities, including Boston and Amsterdam, these insights were pivotal in prompting the adoption of DA \citep{Abdulkadiroglu_Pathak_Roth_Sonmez(2006)NBER, De_Haan_et_al(2023)JPE}. From a methodological perspective, this evidence has motivated economists to develop models that incorporate strategic behavior in estimating student preferences.

\paragraph{Preference reporting as a choice over lotteries.} The now-dominant approach to preference estimation in strategic assignment mechanisms was developed by \citet{Agarwal_Somaini(2018)ECTA}. They introduce a general framework for inferring student preferences from a broad class of mechanisms and apply it empirically to the CPS Controlled Choice Plan in Cambridge. This class, referred to as \emph{Report-Specific Priority + Cutoff} (RSP+C) mechanisms, is defined by the feature that an applicant's probability of admission to a given school depends on two elements: (i)~the applicant's priority at the school (which, under IA, may depend on the submitted report), and (ii)~the school's market-clearing cutoff, which reflects its capacity constraint. Most of the manipulable and non-manipulable mechanisms analyzed in the empirical literature, including IA and DA, fall within this class, with the notable exception of TTC.\footnote{\citet{Agarwal_Somaini(2018)ECTA} note that although \citet{Leshno_Lo(2021)Restud} derive a cutoff representation for TTC, it does not belong to the class of RSP+C mechanisms.}

To model strategic behavior in RSP+C mechanisms, Agarwal and Somaini assume that applicants submit ROLs that maximize their expected utility, given their beliefs about their admission probabilities at the different schools. In this framework, applicants hold private information about their own preferences over schools, but do not observe the preferences of other participants in the market. Beliefs about assignment probabilities depend on the submitted ROL, the priority at each school, and expectations about the strategies of other applicants.

Formally, let $i \in \mathcal{I} := \{1,\dots,I\}$ index students and $j \in \mathcal{J} := \{1,\dots,J\}$ index schools. Let $u_{ij}$ denote student~$i$'s indirect utility from being assigned to school~$j$, and $t_{ij}$ represent her priority at school~$j$. Priorities may be strict, as when they are determined by an exam score, or coarse, as when students fall into broad priority classes (e.g., based on walk zones). In the latter case, a random tie-breaker, denoted by $\tau_{ij}$, is used to rank students within the same priority class. Given a utility vector $\mathbf{u}_i := (u_{i1}, \dots, u_{iJ})$ and priority profile $\mathbf{t}_i := (t_{i1}, \dots, t_{iJ})$, student~$i$'s expected utility from submitting ROL $R$ is $\mathbf{u}_{i} \cdot \mathbf{L}_{R,i}$, where $\mathbf{L}_{R,i}$ is the vector of the student's perceived assignment probabilities at each school when submitting $R$. Thus, an applicant's choice over ROLs can be viewed as a portfolio choice over lotteries: among all possible reports $R$, the student selects the one that yields the highest expected utility.

This framework implies a two-step estimation strategy: (1)~for each priority profile~$\mathbf{t}_i$ and potential ROL $R$, estimate the applicant's beliefs about the probability of admission at each school; (2)~using the assignment probabilities estimated in the first step, recover the preference parameters that rationalize the reports submitted. Further details on the identification and estimation of assignment probabilities and preference parameters within this framework are provided in \Cref{subsubsec:portfolio_choice}.

\paragraph{Beliefs about assignment probabilities.}

\citet{Agarwal_Somaini(2018)ECTA} consider several alternative assumptions regarding how agents form beliefs about their assignment probabilities. In their baseline model, applicants are assumed to hold \emph{rational expectations}: they form accurate beliefs about their admission probabilities based on their own priorities, the distribution of priorities and preferences in the applicant population, and the reporting strategies of others. Under this assumption, uncertainty in admission probabilities stems from two sources: (i)~uncertainty about the reports submitted by other participants and (ii)~randomness introduced by lottery tie-breakers when priorities are coarse.

In school choice environments, rational expectations may appear restrictive, given empirical evidence that many agents behave unsophisticatedly in strategic mechanisms. For instance, \citet{Abdulkadiroglu_Pathak_Roth_Sonmez(2006)NBER, Abdulkadiroglu_Pathak_Roth_Sonmez(2006)WP} document that under Boston's former manipulable assignment mechanism, 13\% of applicants ranked two overdemanded schools as their top two choices while having random priority at the first, and 23\% of these applicants remained unassigned after the main admission round and were subsequently placed administratively in schools with unfilled seats. While listing one overdemanded school first can be a calculated risk under IA, listing two as the top choices is a clear mistake when a student ends up unassigned and does not withdraw from the public school system. This strategy effectively wastes the second slot, reducing the chances of securing a seat at the third or any subsequent choice.\footnote{We caution the reader, however, that this is only suggestive evidence, since applicants may not have accurate beliefs about which schools are eventually overdemanded.} 

To relax the rational expectations assumption, \citet{Agarwal_Somaini(2018)ECTA} explore alternative models of belief formation: (i)~In the \emph{adaptive expectations} model, agents form expectations based on their own priorities and past-year data on submitted reports; (ii)~In the \emph{coarse beliefs} model, they ignore their own priorities---either due to lack of information or misunderstanding of the mechanism---and rely only on aggregate information such as prior-year application numbers and school capacities; (iii)~Drawing on the framework of \citet{Pathak_Sonmez(2008)AER}, the authors also introduce a model of \emph{heterogeneous sophistication}, in which the applicant pool consists of a mix of sophisticated agents (who report optimally given correct beliefs) and naive agents (who report their preferences truthfully even when doing so is suboptimal). 

An alternative and more direct way to recover preferences in IA settings, without relying on specific assumptions about belief formation, is to elicit applicants' subjective expectations. \citet{Kapor_Neilson_Zimmerman(2020)AER} develop a framework that combines survey data on students' stated preferences and beliefs about admission probabilities with observed application and enrollment decisions. This hybrid revealed-preference and stated-belief approach extends the methods of \citet{Agarwal_Somaini(2018)ECTA} by replacing rational or adaptive expectations with elicited beliefs in the utility maximization framework. Importantly, the analysis assumes that choices are optimal given subjective beliefs, so deviations from predicted behavior are attributed to belief errors rather than optimization failures. Applied to high school choice in New Haven, where the assignment mechanism closely resembles IA, the authors find that applicants often make choices based on systematically mistaken beliefs, leading to substantial welfare losses.\footnote{While IA prioritizes how highly a student ranks a school, considering priority groups only afterward, the assignment mechanism used in New Haven reverses this order, giving precedence to a student's priority group before considering the school's position in her ROL. This feature makes the mechanism close in spirit to the Choice-Augmented Deferred Acceptance (CADA) mechanism described in \Cref{subsec:cardinal_welfare}, with the difference that applicants do not declare a single ``target'' school to be prioritized among students in the same priority group at that school. Instead, the school's position in the submitted ROL is used to prioritize applicants within the same priority group.} This approach underscores the value of combining administrative and survey data to recover preferences in school choice settings where agents face strategic incentives and belief errors can significantly distort observed behavior.

\paragraph{Other approaches.} 

Contemporaneously with \citet{Agarwal_Somaini(2018)ECTA}, several alternative methods have been developed to estimate preferences using data from strategic mechanisms \citep[e.g.,][]{Hwang(2015)EAI, He(2017)WP, Calsamiglia_Fu_Guell(2020)JPE, Bayraktar_Hwang(2024)WP}. These approaches are generally more tailored to specific institutional settings and thus require some adaptation before being applied elsewhere.

Using data from Barcelona, where school assignment follows an IA mechanism with a single tie-breaking lottery, \citet{Calsamiglia_Fu_Guell(2020)JPE} develop a method suited to contexts with limited uncertainty about admission cutoffs. This situation typically arises when the number of students is large relative to the number of schools, so that aggregate uncertainty about cutoffs effectively disappears. The model assumes two types of applicants: strategic agents, who hold rational expectations and maximize expected utility based on admission probabilities, and non-strategic ones, who submit truthful rankings. Given fixed cutoffs, the optimal report for a strategic agent is derived using backward induction. Instead of evaluating all possible ROLs, the method sequentially builds the list from the bottom up, selecting at each rank~$r$ the school that maximizes expected utility, conditional on rejection from higher-ranked schools and the continuation value of subsequent choices. Preferences and the type shares are estimated by testing the optimality of submitted ROLs for strategic agents, while assuming sincere reporting for non-strategic ones.

Like \citet{Agarwal_Somaini(2018)ECTA}, this approach relies on a complete model in which an applicant's report is uniquely determined by her beliefs about admission probabilities. By contrast, \citet{Hwang(2015)EAI}, \citet{He(2017)WP}, and \citet{Bayraktar_Hwang(2024)WP} propose partial identification strategies that relax these behavioral assumptions. 

\citet{Hwang(2015)EAI} and \citet{Bayraktar_Hwang(2024)WP} allow for belief errors under two weaker conditions: (i)~agents can correctly rank schools by competitiveness, i.e., they correctly anticipate which schools will be oversubscribed or undersubscribed in each round of the IA mechanism, and they know the relative ranking of admission cutoffs, though not their exact values; (ii)~they do not rank a more competitive school above a less competitive one unless they strictly prefer it. Violating the latter condition would reduce the chance of being assigned to a more preferred option, making such a strategy suboptimal. Using data from Seoul, the authors derive moment inequalities to partially identify preferences and construct confidence regions for the model's parameters. 

\citet{He(2017)WP} develops a related approach in the context of middle school admissions in Beijing. His method allows parents to make mistakes---such as playing ``safe'' strategies too often---by allowing for heterogeneous levels of sophistication, moving beyond the dichotomy between naive and fully strategic parents. The only rationality requirement is that parents do not play dominated strategies in equilibrium, for example, ranking an unacceptable school first. At the same time, the framework allows for multiple best responses, such as ranking or excluding a school with zero admission probability. These relaxed assumptions yield bounds on choice probabilities, which in turn set-identify preference parameters. Importantly, He shows that this bounded rationality framework substantially improves fit relative to models that assume fully strategic, best-responding agents. 

\subsection{Preference Estimation under the Deferred Acceptance Mechanism}
\label{subsec:preference_estimation_DA}

At first glance, estimating preferences under the Deferred Acceptance mechanism may appear more straightforward than under IA. Because the canonical version of DA is strategy-proof for applicants, submitted ROLs are often interpreted as truthful reflections of preferences. However, a growing body of research has cautioned against taking this literal interpretation in real-world applications. In practice, truth-telling may be undermined by factors such as list-length restrictions, application costs, or behavioral frictions. In response, recent work has developed empirical methods that estimate preferences in DA settings while explicitly allowing for deviations from truthful reporting.

\subsubsection{The Truth-Telling Approach}
\label{sec:TT_approach}

A central rationale for adopting strategy-proof mechanisms like DA is to eliminate the need for applicants to strategize over their rankings, thereby simplifying the decision-making process and helping to level the playing field between strategically sophisticated agents and those who are less informed or able to game the system \citep{Pathak_Sonmez(2008)AER}. This property has led many empirical studies using data from centralized school choice or college admissions systems operating under DA to estimate preferences under the assumption that applicants submit truthful rankings \citep[e.g.,][]{Hallsten(2010)AmJSociol, Kirkeboen(2012)WP, Abdulkadiroglu_Agarwal_Pathak(2017)AER, Laverde(2024)WP, DeGroote_Fabre_Luflade_Maurel(2025)WP}. However, the validity of this assumption warrants careful scrutiny, as the notion of truth-telling admits multiple interpretations that require clarification.

\paragraph{Strict versus weak truth-telling.} 

The theoretical justification for truthful preference reporting under DA rests on relatively stringent conditions. The strategy-proofness of DA is only guaranteed if agents can rank all schools at no cost. In this case, \emph{strict truth-telling} (STT), i.e., ranking \emph{all} schools in the order of true preferences, as defined by \citet{Fack_Grenet_He(2019)AER}, is a dominant strategy. However, because STT is only a \emph{weakly} dominant strategy, multiple equilibria can arise: an applicant might achieve the same outcome by submitting a non-truthful report, such as omitting schools where she anticipates no chance of admission. For STT to be the unique equilibrium, an additional requirement is that each student has a strictly positive probability of being admitted to every school. These conditions are more likely to be met when applicants face no restrictions on the number of schools they can rank and when there is substantial uncertainty about admission probabilities, such as in systems where priorities are determined by lotteries.

In practice, however, applicants rarely rank all available options, even when permitted to do so at no cost. This raises the question of how to interpret the omission of certain schools or programs from an applicant's ROL. One approach is to introduce an outside option and reformulate the STT assumption accordingly: students are assumed to rank all schools they find acceptable (i.e., those they prefer to the outside option) and omit those they deem unacceptable. This modified version of STT can be rationalized as an equilibrium outcome when applicants are fully rational, unconstrained in list length \citep[as in the Boston pre-kindergarten setting studied by][]{Laverde(2024)WP}, and face substantial uncertainty about admission outcomes.\footnote{This formulation, however, introduces a potential issue of multiple equilibria: if applicants can always decline an unacceptable assignment, they may be indifferent about whether to include such schools in their ROL.} 

A more commonly used alternative in the literature omits the outside option and assumes that applicants are \emph{weakly truth-telling} (WTT): they rank their most preferred schools truthfully, and any unranked school is assumed to be less preferred than those included in the ROL. For instance, if there are four schools, $\{a,b,c,d\}$, and student~$i$ submits the ROL $(c, a)$, WTT dictates that her true preferences are $c\succ a \succ b,d$. WTT can thus be viewed as a truncated version of STT. While it is widely adopted in empirical work, it lacks a solid theoretical foundation, as it assumes that the number of ranked options in any ROL is exogenous to applicants’ preferences. Under this assumption,  WTT can be used to identify preferences through standard discrete choice models, which we discuss in more detail in \Cref{subsec:estimation_TT}.


\paragraph{Deviations from truth-telling in DA.} 

Despite its intuitive appeal, the WTT assumption in DA environments has been challenged in settings where applicants face list-length restrictions, incur application costs, have limited uncertainty about their admission chances, or are inattentive to mistakes that carry minor payoff consequences.

WTT is most clearly undermined in contexts where DA imposes a cap on the number of schools applicants can rank. This is a common feature in many centralized school choice and college admission systems worldwide. For example, in Paris, students applying to public high schools can rank up to 8 schools, even though the number of available options in a district typically ranges from 10 to~17. Similar list-length constraints exist in school choice systems implemented in Boston, Chicago, New York City, Finland, Ghana, Singapore, and Turkey, as well as in higher education admissions in countries like Australia, Chile, France, Norway, Spain, Tunisia, and Turkey 
(see Table~1 in \citealp{Fack_Grenet_He(2019)AER} and Figure~6 in \citealp{Neilson(2024)WP}).

When such constraints are imposed, DA ceases to be strategy-proof: students risk remaining unassigned if the mechanism exhausts the options in their submitted ROL \citep{Haeringer_Klijn(2009)JET}. In response, constrained applicants may rationally choose to ``skip'' schools with low admission probabilities in favor of safer, though less preferred, options. For instance, a student who can rank only three out of four acceptable schools, with preferences $a \succ b \succ c \succ d$, may submit the ROL $(a, c, d)$ if she anticipates a low admission probability at $b$, in order to reduce the risk of remaining unassigned in the event that $d$ is her only feasible option. \citet{Calsamiglia_Haeringer_Klijn(2010)AER} provide experimental evidence of this behavior, showing that constrained DA leads to lower rates of WTT compliance than its unconstrained counterpart.

Beyond the case of list-length restrictions, WTT can also fail when students face low uncertainty about their admission chances. This is particularly relevant in environments where schools or colleges rank applicants using a priority index, such as a test score, which is known to students when they submit their ROL. If a student expects her score to be too low for a selective school (based on past admission data, for example), she may rationally choose to ``skip the impossible'' by omitting it from her ROL, despite preferring it to some of the schools she does list. Even in the absence of explicit list-length limits, the cognitive burden of ranking many schools or the desire to avoid disappointment from likely rejections \citep{Meisner_Wangenheim(2023)JET} may discourage applicants from listing all acceptable options, thereby increasing the likelihood of WTT violations. Consistent with this pattern, \citet{Chen_Pereyra(2019)GEB} combine survey and administrative data from Mexico City's high school assignment system and find that 20\% of students who declared a selective school as their top choice did not list it first on their actual application. Since fewer than 3\% of these WTT-violating students listed the maximum 20 schools allowed, list-length constraints alone cannot explain these omissions. The authors provide suggestive evidence that applicants excluded preferred options because they believed their chances of admission were close to zero.

In light of these findings, relying on the fact that most applicants do not list the maximum number of allowed schools is not a strong argument in favor of the WTT assumption. If listing additional options is perceived as costly, students may still submit ``short lists,'' omitting programs they believe have low admission probabilities, even when those programs are preferred to their outside option or to others included in their ROL. Survey evidence supports this behavior. In Chile, for example, \citet{Larroucau_Rios(2020)WP} find that although 90\% of college applicants list fewer than the maximum number of allowed programs, only 43\% report ranking their true top choice. Similarly, \citet{Chrisander_Bjerre-Nielsen(2023)WP} show that in Denmark, 20\% of surveyed college applicants
agree with the statement that they would change their ROL if they could be admitted to
any program, with 60\% of these respondents omitting their most preferred option, even though only 2\% submit the maximum of eight choices.

Deviations from truth-telling have also been documented in unconstrained DA settings, which are strategy-proof. Evidence comes from both lab experiments \citep{Chen_Sonmez(2006)JET, Li(2017)AER} and high-stakes matching environments such as medical residency and university admissions \citep{Larroucau_Rios(2020)WP, Hassidim_Romm_Shorrer(2021)MNSC, Arteaga_et_al(2022)QJE, Artemov_Che_He(2023)JPE:Micro, Shorrer_Sovago(2023)JPE:Micro}. For instance, in the Israeli psychology graduate match, \citet{Hassidim_Romm_Shorrer(2021)MNSC} report that 19\% of applicants submitted non-truthful ROLs by either failing to list a scholarship position for a program or by ranking a non-scholarship position above its scholarship counterpart, contrary to their interests. Similar patterns are observed in Australia \citep{Artemov_Che_He(2023)JPE:Micro}, Hungary \citep{Shorrer_Sovago(2023)JPE:Micro}, and in the U.S.\ National Resident Matching Program, where \citet{Rees-Jones(2018)GEB} finds that 17\% of applicants acknowledged misrepresenting their preferences.

These widespread deviations from truthful reporting have motivated two strands of literature that seek to estimate preferences without relying on the WTT assumption: the optimal portfolio choice approach and the stability-based approach, which we discuss in turn.

\subsubsection{The Optimal Portfolio Choice Approach}
\label{subsubsec:methods_portfolio}

A first approach to addressing deviations from truthful reporting in DA settings is the framework proposed by \citet{Agarwal_Somaini(2018)ECTA}, presented in the previous section. This framework interprets each applicant's ROL as the outcome of an optimal portfolio choice over admission lotteries, conditional on preferences and beliefs about admission chances. Although originally applied to the IA mechanism, it extends to any mechanism that admits a cutoff representation, which includes DA.

\paragraph{Implementing the portfolio choice approach in DA environments.}

The optimal portfolio choice approach has been applied in DA settings where truth-telling may appear implausible because of list-length restrictions or application costs \citep[e.g.,][]{Luflade(2019)WP, Larroucau_Rios(2020)WP, Larroucau_Rios(2023)WP, Idoux(2023)WP, Vrioni(2023)WP, Agte_et_al(2024)NBER, Ajayi_Sidibe(2024)WP, Lee_Son(2024)WP, Wang_Wang_Ye(2025)NBER}. These settings are typically characterized by a large number of school options and relatively limited uncertainty about admission probabilities. For example, in New York City middle school admissions, applicants may rank up to 12 choices from among 450 programs \citep{Idoux(2023)WP}. In Chilean college admissions, students can apply to at most 10 out of 1,400 programs \citep{Larroucau_Rios(2020)WP}. In both cases, only a minority of applicants exhaust their lists (8\% in NYC, 10\% in Chile). However, these ``short lists'' cannot be readily interpreted as evidence of truthful reporting: constructing a ROL entails significant information acquisition costs, which may lead applicants to skip programs they believe are unattainable or to stop adding schools once they feel assured of admission to one of their ranked options. Such behavior weakens the plausibility of the WTT assumption.

When applying the portfolio choice approach to DA settings, researchers have developed various strategies for embedding application costs within the \citet{Agarwal_Somaini(2018)ECTA} framework. \citet{Larroucau_Rios(2023)WP} do not model application costs explicitly but assume that students include a program in their ROL only if it strictly increases their expected utility. This implies that students exclude programs with zero admission probabilities and do not rank less preferred programs below those with guaranteed admission. \citet{Idoux(2023)WP} specifies a linear application cost function, $C_{i}(R) = c_{i}|R|$, where $|R|$ denotes the number of programs listed on the applicant's ROL and $c_i$---the cost of adding a program---is applicant-specific and follows a truncated normal distribution with lower bound $c > 0$. \citet{Ajayi_Sidibe(2024)WP} introduce search frictions: applicants have imperfect information about school characteristics and incur a search cost~$c$ to learn about a school. In a related approach, \citet{Lee_Son(2024)WP} model applicants as considering only a subset of schools, with consideration sets determined by a latent variable whose distribution depends on observables, some of which are excluded from the utility function.

A common computational difficulty across these implementations stems from the fact that, regardless of the specific modeling assumptions, applying the portfolio choice approach to large-scale settings with many alternatives leads to a curse of dimensionality. In what follows, we discuss methods that have been developed to address this issue.

\paragraph{The curse of dimensionality: solutions.} 

A major challenge in applying the portfolio choice approach is that the number of possible ROLs increases exponentially with the number of available options. For instance, with just 10 options, the total number of possible ROLs approaches 9.9 million. Except in special cases, no polynomial-time algorithms or even qualitative characterizations of optimal behavior are available, making the problem computationally intractable. To address this, the literature has developed a range of solutions that either target specific admission processes or introduce behavioral assumptions about how applicants form their ROLs. 

One influential approach originates with \citet{Chade_Smith(2006)ECTA}, who show that when applicants view admission probabilities as independent across programs and application costs depend only on the number of applications submitted, the optimal portfolio problem becomes downward recursive. Under these assumptions, it can be solved in polynomial time using the Marginal Improvement Algorithm (MIA), a greedy algorithm that constructs the optimal ROL by sequentially adding the program offering the highest marginal gain in expected utility, given the current portfolio. 

\citet{Luflade(2019)WP} implements this algorithm to construct optimal ROLs in the context of Tunisian college admissions, where applicants may rank up to ten programs among more than 600 alternatives. Applicants are assumed to form expectations about admission probabilities based on their own (known) priority score and programs' past cutoffs (i.e., the adaptive expectations model), with cutoffs assumed to be independently normally distributed. In a similar vein, \citet{Larroucau_Rios(2020)WP} and \citet{Larroucau_Rios(2023)WP} apply the MIA to Chilean college admissions, where applicants observe their priority scores before submitting ROLs and uncertainty arises from stochastic cutoffs. They assume that students form rational expectations about admission probabilities and take the distributions over cutoffs to be independent across programs. \citet{Larroucau_Rios(2020)WP} show that whenever beliefs on admission probabilities can be estimated in a first stage and assumed to be independent across programs, then checking optimality requires comparing a ROL to a restricted set of ``one-shot swaps,'' which considerably reduces the dimensionality of the problem.\footnote{A \emph{one-shot swap} from a ROL $R$ is a ROL $R^{\prime}$ that differs from $R$ by a single school. For example, if there are three schools $\{a,b,c\}$, then the ROL $R=(a,b)$ has four one-shot swaps: $(a,c)$, $(b,c)$, $(c,a)$, and $(c,b)$, whereas the number of possible reports is 15.} An alternative approach is proposed by \citet{Hernandez-Chanto(2021)WP} for DA environments where applicants are ranked according to a known common priority score and face list-length restrictions. Under the assumptions that preferences are independent of priority scores and that applicants' behavior conforms to a set of rationality axioms, he develops a concatenation algorithm to sequentially infer students' underlying ordinal preferences from their observed, constrained reports.\footnote{The algorithm exploits variation across ``tiers'' of applicants, defined by how a student's priority score compares to the maximum historical cutoff in each program (e.g., tier~1 students can access all programs, tier~2 all but the most selective, and so on). By comparing reports across tiers, the algorithm constructs a set of recoverable ordinal preferences that are consistent with the specified rationality axioms.} These recovered  preferences are then used to condition the estimation of cardinal utilities in a second step, substantially reducing the dimensionality of the problem, as the optimality of each observed report only needs to be checked against the subset of ROLs that are consistent with the inferred ordinal preferences.
 
These methods, however, break down when admission probabilities are correlated across programs and applicants recognize these correlations in forming their expectations. This typically arises when programs share a lottery or rely on priority scores based on similar criteria that applicants observe only imperfectly. In such settings, rejection from one program conveys information about the likelihood of admission at others, violating the independence assumption.

\citet{Ali_Shorrer(2025)AER} propose a solution for environments where admissions are based on a ``common score,'' such as a single exam score or tie-breaking lottery, and where students must choose their portfolio before learning their own score.\footnote{ 
While admission cutoffs are assumed to be fixed and known to applicants in their core model, the results extend to cases where applicants know the relative ranking of cutoffs but not their precise values.} They show that although Chade and Smith's MIA fails in such environments, the optimal portfolio problem remains recursively solvable, and they develop an algorithm that finds the optimal portfolio in polynomial time, drawing on a recursive approach similar to that introduced by \citet{Calsamiglia_Fu_Guell(2020)JPE} for the IA mechanism (see \Cref{{subsec:preference_estimation_IA}}). \citet{Ajayi_Sidibe(2024)WP} apply this framework to Ghana's centralized high school admissions system, where students submit ROLs before sitting for a common entrance exam and assignments are made using DA.

Outside of these special cases, general polynomial-time solutions are not available. Moreover, even in settings where the algorithms of \citet{Chade_Smith(2006)ECTA} and \citet{Ali_Shorrer(2025)AER} are applicable, solving the portfolio problem remains computationally burdensome in large markets. This has led researchers to develop models of bounded rationality to simplify the decision problem. 

Using data from New York City middle school admissions, \citet{Idoux(2023)WP} develops a model in which applicants hold rational expectations over their admission probabilities but face application costs and do not optimize over the full set of ROLs. Instead, they are assumed to rely on a sequential heuristic that ignores the full impact of each choice on continuation values. This heuristic reduces the dimensionality of the problem by introducing bounds on the indirect utilities associated with schools. Similarly, \citet{Wang_Wang_Ye(2025)NBER} model applicants' construction of their ROL as a cognitively limited process, drawing on the direct cognition framework of \citet{Gabaix_et_al(2006)AER}, in which ``at each decision point, agents act as if their next search of operations were their last opportunity for search.'' In their application, applicants construct their ROL step by step, myopically choosing at each stage the program offering the highest expected utility at that point, without considering how this choice affects the value of subsequent options. This behavior tends to produce overly conservative preference submissions. Taking a different perspective, \citet{Ajayi_Sidibe(2024)WP} and \citet{Lee_Son(2024)WP} explore models of sequential search under imperfect information about program characteristics or admission probabilities: applicants gradually expand their consideration set and stop searching when the expected value of additional information falls below its cost. While these various approaches greatly enhance tractability and facilitate estimation, they rely on context-specific behavioral assumptions that may not generalize across settings. 

\subsubsection{The Stability-Based Approach}

An alternative to the truth-telling and optimal portfolio choice approaches is the stability-based approach, which offers a different framework for estimating preferences in DA settings. Rather than modeling agents' beliefs and application behavior, this approach leverages a key equilibrium property of the matching outcome under DA: stability, or the absence of justified envy.

In the context of school choice, stability implies that each student is matched to her most preferred \emph{feasible} school, where feasibility is determined by whether the student's priority exceeds the school's ex-post cutoff. For example, consider a district with four schools, $\{a,b,c,d\}$, all of which rank students based on a common priority index, such as an exam score. Suppose the cutoffs are $P_{a}=0.2$, $P_{b}=0.4$, $P_{c}=0.6$, and $P_{d}=0.8$. A student with a score of 0.7 is thus eligible for admission at schools $a$, $b$, and $c$. If she is observed to be assigned to school~$b$, the stability assumption implies that she prefers $b$ over both $a$ and $c$, i.e., $b$ is her most preferred feasible option.

Although stability is guaranteed when students truthfully rank all schools, it may still hold even when they do not. This makes it a weaker, and arguably more realistic, behavioral assumption in many settings. Importantly, because it applies to the matching outcome rather than to the ranking behavior itself, the stability assumption enables identification of student preferences from observed assignments and feasible choice sets, without relying on submitted ROLs.

\paragraph{Stability in strict-priority DA settings.} 

Early applications of the stability-based approach to preference estimation in centralized admissions include \citet{Bordon_Fu(2015)REStud} for Chilean college admissions, \citet{Burgess_et_al(2015)EJ} for school choice in England, and \citet{Akyol_Krishna(2017)EER} for Turkish high school admissions. Its theoretical foundations were subsequently formalized by \citet{Fack_Grenet_He(2019)AER} in the context of ``strict-priority'' DA settings. These are environments in which schools or colleges rank students based on a priority index, such as a test score, that is known to applicants at the time of ROL submission. In such settings, Fack et al.\ argue that stability is a more plausible and robust assumption than WTT, particularly when list-length limits or application costs are present.

The key insight from \citet{Fack_Grenet_He(2019)AER} is that in such markets, stability is more likely to be satisfied than truth-telling as the economy grows large. Building on theoretical results in \citet{Azevedo_Leshno(2016)JPE}, they show that a unique stable matching exists in the continuum economy, and that in a sequence of finite random economies with Bayes--Nash equilibrium strategies, the resulting matchings are \emph{asymptotically stable}---meaning the fraction of students assigned to their most preferred feasible schools converges to one almost surely as the market size increases (see Chapter~2 for details). The intuition behind this result is that aggregate uncertainty about cutoffs diminishes with market size. As uncertainty declines, students can more accurately identify which schools are feasible given their priority scores, and they tend to secure stable assignments by ranking and being matched to their most preferred feasible option. At the same time, when list-length restrictions or application costs are present, reduced uncertainty may increase the incentive to omit low-probability options, leading to WTT violations and weakening its empirical relevance.

A fundamental distinction between truth-telling and stability-based approaches lies in their underlying choice structure. Under the truth-telling assumption, students choose from a universal set of schools, leading to a standard discrete choice model. In contrast, the stability-based approach implies a model with \emph{personalized choice sets}, where each student is assumed to ``choose'' from the subset of schools that are feasible given her priority score and the school cutoffs. A related implication is that estimation under stability relies on observed assignments, while truth-telling models use submitted ROLs. The trade-off, then, is between an approach that accommodates non-truthful behavior but uses only final matches and feasible sets, and one that assumes truthful reporting in order to exploit the richer data on rankings.

Growing skepticism about the plausibility of truth-telling in strict-priority DA settings has led empirical work on centralized school choice and college admissions to rely increasingly on the stability assumption for preference estimation. Applications include school choice in England \citep{Burgess_et_al(2015)EJ}, Ghana \citep{Ajayi(2024)JHR}, Mexico City \citep{Bobba_Frisancho_Pariguana(2023)WP,Ngo_Dustan(2024)AEJ:App, Pariguana_Ortega-Hesles(2025)WP}, New York City \citep{Hahm_Park(2024)WP}, Paris \citep{Fack_Grenet_He(2019)AER}, Sweden \citep{Anderson_et_al(2024)WP}, and Turkey \citep{Akyol_Krishna(2017)EER}; and college admissions in Brazil \citep{Barahona_Dobbin_Otero(2023)WP}, Chile \citep{Bordon_Fu(2015)REStud, Bucarey(2018)WP, Kapor_Karnani_Neilson(2024)JPE}, Denmark \citep{Chrisander_Bjerre-Nielsen(2023)WP, Gandil(2025)WP}, Germany \citep{Grenet_He_Kubler(2022)JPE}, and Turkey \citep{Arslan(2021)EmpEcon, Akyol_Krishna_Lychagin(2024)NBER}. Beyond student assignment, the stability-based approach has also been applied to preference estimation in other education-related matching markets, including childcare allocation in Japan \citep{Kamada_Kojima_Matsushita(2025)WP} and centralized teacher assignment in France \citep{Combe_Tercieux_Terrier(2022)ReStud, Combe_et_al(2025)WP}.

\paragraph{Stability: robustness to mistakes.}

\citet{Fack_Grenet_He(2019)AER} provide a rationale for the stability-based approach in strict-priority settings where agents are fully rational. In their framework, deviations from truth-telling arise when the probability of admission to a school is too low to justify the application cost of including it in a student's ROL. However, this framework does not account for broader forms of non-truthful behavior, such as those documented in the literature on strategic mistakes in strategy-proof environments (see~\Cref{sec:TT_approach}). These mistakes typically involve the use of weakly dominated strategies, even when they result in no actual payoff losses. Examples include ``skipping'' schools with positive admission probabilities or ``flipping'' the order of schools, e.g., ranking school~$a$ above school~$b$ despite preferring $b$ to $a$.

A central empirical finding in this literature is that while mistakes are common, most have minimal payoff consequences in the sense that they rarely alter matching outcomes. This is because misranked or omitted options tend to have low admission probabilities. For example, \citet{Shorrer_Sovago(2023)JPE:Micro} show that although 17\% of college applicants in Hungary ranked a program without a scholarship above an identical program with a scholarship (which is an identifiable mistake), only 4\% of these applicants would have received a different assignment had their mistake been unilaterally corrected. 

Building on this empirical pattern, \citet{Artemov_Che_He(2023)JPE:Micro} offer a broader theoretical foundation for the stability-based approach by allowing for mistakes. They study a strategy-proof DA environment with strict, known priorities and propose a solution concept called \emph{robust equilibrium}, which relaxes Bayesian Nash equilibrium by allowing agents to make mistakes as long as the resulting payoff losses are arbitrarily small. In this framework, applicants may adopt dominated strategies, as long as the resulting payoff loss is negligible in a large economy. While universal truth-telling clearly constitutes a robust equilibrium, it is not the only one. In fact, all but a vanishing fraction of applicants may submit untruthful ROLs, such as omitting their most preferred school or misordering schools in their ROLs, as long as doing so does not affect their admission outcomes appreciably.

Despite these potentially widespread deviations from truth-telling, \citet{Artemov_Che_He(2023)JPE:Micro} show that the key predictions of stable matching under DA remain valid. Under mild regularity conditions, all robust equilibria yield a virtually unique matching outcome in a sufficiently large economy. This outcome is asymptotically stable, meaning that almost all students are matched to their most preferred feasible school, and it converges to the outcome that would arise under truth-telling. In other words, even in the presence of mistakes, the matching outcome closely approximates that generated by fully rational applicants.

These results carry important implications for preference estimation in DA settings. On the one hand, they cast doubt on the validity of approaches that rely on truth-telling as an identifying assumption, showing that such methods are vulnerable to seemingly minor strategic errors. On the other hand, they strengthen the case for the stability-based approach, which relies on the observed \emph{outcome} of DA rather than on assumptions about applicant \emph{behavior}, and is more robust to small, payoff-irrelevant mistakes.

Although Artemov et al.\ caution against relying on truth-telling for identification, they do support its use for counterfactual simulations, provided the underlying preferences have been consistently estimated. Indeed, the asymptotic stability of robust equilibria justifies an approach in which preferences are estimated using the stability-based method, and truthful reporting is assumed only at the simulation stage. Despite the presence of strategic mistakes, the resulting counterfactual assignment is well approximated by the outcome that would arise if all applicants reported truthfully. By contrast, using observed ROLs directly across regimes is not theoretically justified. If a policy reform makes previously unattainable schools feasible, applicants' ROLs would likely change in response. Artemov et al.\ illustrate this point using Monte Carlo simulations: assuming fixed ROLs across regimes leads to a misprediction rate of 40\%, whereas the stability-based approach with counterfactual truth-telling yields an error rate below 5\%.

\paragraph{Leveraging uncertainties to infer preferences.}

While \citet{Artemov_Che_He(2023)JPE:Micro} focus on strict-priority settings, \citet{Che_Hahm_He(2023)WP} extend the stability-based approach to strategy-proof environments where applicants face uncertainty about their priorities, and show how such uncertainty can be leveraged to infer students' preferences. These settings include cases where lotteries are used to break ties among applicants, or where priority scores are unknown or only imperfectly known at the time applicants submit their ROLs.\footnote{Examples include admissions systems in which applicants are ranked based on an entrance exam taken after submitting their ROL, such as New York City's Specialized High Schools \citep{Corcoran_Levin(2011)chapter} and high school choice in Ghana \citep{Ajayi_Sidibe(2024)WP}, or where exam results are disclosed only after ROL submission, as in Chinese college admissions \citep{Chen_Kesten(2017)JPE}.}

Unlike in settings with known priorities, such as those studied by \citet{Fack_Grenet_He(2019)AER} and \citet{Artemov_Che_He(2023)JPE:Micro}, priority uncertainty implies that admission probabilities remain uncertain even in large markets. Without knowing which schools are feasible, applicants must be more cautious to avoid costly mistakes. Yet uncertainty does not make all forms of misranking consequential. For instance, consider three schools, $\{a,b,c\}$, that admit students based on a common tie-breaking lottery. Suppose the cutoffs are always ordered such that $P_a > P_b > P_c$, where $P_j$ denotes the cutoff for school~$j$. A student with preferences $b \succ a \succ c$ would incur no payoff loss by submitting $(b,c)$ or $(b,c,a)$ instead of the fully truthful ranking $(b,a,c)$, since $a$ is irrelevant regardless of her lottery draw: whenever $a$ is feasible, $b$ is also feasible---and preferred. Such ``mistakes'' are payoff-irrelevant. Thus, truth-telling cannot be taken for granted even in uncertain environments, as agents may still make harmless deviations.

At the same time, uncertainty renders \emph{some} mistakes costly, thereby encouraging more accurate preference revelation. In the same example, reversing the order of $b$ and $c$ would be a payoff-relevant mistake, since there is a positive probability of encountering a lottery draw where either all three schools are feasible or only $b$ and $c$ are.\footnote{The first case occurs when the student's lottery number is above $P_a$, making all schools feasible; the second arises when the number lies between $P_b$ and $P_a$, rendering only $b$ and $c$ feasible.} In the latter case---when the draw falls between $P_b$ and $P_a$---ranking $c$ above $b$ would result in an avoidable welfare loss. To avoid such losses, the student must carefully rank $b$ ahead of $c$. Che et al.\ show how this logic can be used to infer preferences from submitted ROLs. In the example, one can conclude that $b \succ a$ and $b \succ c$ by carefully examining the structure of uncertainty the student faces. 

This approach can reveal more information than what is obtainable by invoking stability in environments without priority uncertainty, such as in \citet{Fack_Grenet_He(2019)AER}. For example, suppose that schools rank applicants based on GPA, and a student ``knows'' her GPA is above the cutoff for school~$c$ but below that of $b$. In this case, $c$ is her only feasible option. Whether she submits $(b, c)$ or $(b, c, a)$, she will be assigned to $c$, and her ROL provides no information about preferences under the stability assumption alone. By contrast, the methods developed by \citet{Che_Hahm_He(2023)WP} exploit the structure of uncertainty to recover as much information as possible on participants' preferences in DA environments. These methods are discussed in more detail in \Cref{subsec:estimation_stability}. 

\subsubsection{Undominated Strategies}
\label{subsec:incomplete_models}

A drawback of the stability-based approach is that it relies only on comparisons between ex-post feasible options, without exploiting the full content of the submitted ROL. This leads in a loss of information when estimating preferences. A natural question, then, is whether additional identifying restrictions can be derived from ROL data in DA settings. 

As noted above, \citet{Che_Hahm_He(2023)WP} show that in strategy-proof DA environments with priority uncertainty, ROLs can be used to infer preferences consistent with robust equilibrium play. However, this approach is not readily applicable to DA settings where strategy-proofness fails, such as when students face list-length constraints or application costs. 

In such cases, \citet{Fack_Grenet_He(2019)AER} show that researchers can still extract useful identifying restrictions if they are willing to assume that applicants do not play weakly dominated strategies. Indeed, \citet[][Proposition 4.2]{Haeringer_Klijn(2009)JET} demonstrate that under DA, even when list-length constraints or application costs are present, submitting a ROL in which schools are not ranked in true preference order is weakly
dominated by submitting a ROL that truthfully ranks the same schools. For example, if a student's preferences are $a \succ b \succ c \succ d$, then submitting $(b,a,c)$ is weakly dominated by $(a,b,c)$, since ranking $b$ above $a$ can only reduce the chance of being admitted to the more preferred school without improving the chances at $b$. 

The undominated-strategies assumption provides over-identifying restrictions that can complement those derived from stability.\footnote{Note that the undominated-strategies assumption does not add new identifying power in approaches based on truth-telling or optimal portfolio choice in DA settings, as these frameworks already rule out the submission of ROLs that are not true partial orders of applicants' preferences.} The benefits of combining the two assumptions can be illustrated with a stylized example. Consider a DA setting with strict priorities, where students may rank up to three schools out of four. Under stability, preferences over $a$ and $b$ are primarily estimated using students who have both schools in their feasible sets. But if all students rank both schools, the undominated-strategies assumption allows the use of the full sample of students, including those for whom $a$ or $b$ was not feasible ex-post. \Cref{subsec:estimation_undominated_strategies} details how the moment inequalities derived from undominated strategies can be combined with the moment equalities derived from stability to sharpen identification. 

Undominated strategies can also be used on their own to recover preference information in non-strategy-proof DA environments, akin to the approaches developed by \citet{Hwang(2015)EAI}, \citet{He(2017)WP}, and \citet{Bayraktar_Hwang(2024)WP} in IA settings. For example, consider a DA market where assignments are based purely on lotteries and applicants face severe list-length limits. In such contexts, neither truth-telling nor stability is likely to hold, and researchers may be reluctant to impose the behavioral restrictions embedded in optimal portfolio models. Yet the undominated-strategies assumption yields inequality restrictions that support partial information of preferences. However, as the econometric structure is incomplete \citep{Tamer(2003)Restud, Tamer(2010)AnnuRev}, the assumption does not pin down a unique ROL given utilities~$\mathbf{u}_i$ and priorities~$\mathbf{t}_i$. Instead, it generates a set of inequalities consistent with a range of preference parameters. \Cref{subsec:estimation_undominated_strategies} explains how these inequalities can be used to achieve partial identification of student preferences. Still, the power of this approach may be limited when submitted ROLs are short relative to the total number of available options, in which case the identified bounds can be too wide to be informative. 

Finally, it is important to note that undominated strategies do not nest the stability assumption. While they involve weaker behavioral assumptions than truth-telling or optimal portfolio choice, they are stronger than stability in some dimensions. In particular, they rule out certain types of payoff-irrelevant mistakes---such as arbitrarily ranking schools with negligible admission probabilities---that stability-based methods allow. The undominated strategies assumption is therefore best suited to settings without ``safety'' or ``impossible'' schools, where arbitrary rankings are less likely.

\subsection{Identification and Estimation of Preferences}
\label{subsec:identification_estimation}

This section describes how the approaches discussed in the previous sections can be implemented to estimate student preferences in both strategic and non-strategic assignment mechanisms. We start by introducing the random utility framework for representing preferences and the conditions required for identification.

\subsubsection{Specifying Student Preferences\label{sec:specification_preferences}}

\paragraph{Random utility model.}

Let $i\in \mathcal{I}$ index applicants and $j\in\mathcal{J}$ index schools or programs. A general specification of applicant~$i$'s indirect utility from being assigned to school~$j$ is given by:
\begin{align}
\label{eq:rum}
u_{ij} = V(\z_{ij},\bm{\xi}_{j},\bm{\epsilon}_{i})\text{,}
\end{align}
where $\z_{ij}$ is a vector of observable characteristics that may vary by student, school, or both; $\bm{\xi}_{j}$ captures school-level attributes observed by students but unobserved by the econometrician; and $\bm{\epsilon}_{i}$ denotes student-specific unobserved heterogeneity in preferences.

In school choice settings, $\z_{ij}$ typically includes student-level characteristics (e.g., academic achievement, socioeconomic background), school-level attributes (e.g., average test score), and student--school-specific variables (e.g., commuting distance). The vector $\bm{\xi}_{j}$ reflects school-level characteristics that are observed by students but unobserved by the econometrician, such as disciplinary climate, teacher quality, or school culture. The vector~$\bm{\epsilon}_{i}$ captures individual heterogeneity in how parents or students value both observed and unobserved school attributes.

Although fairly general, the specification in \cref{eq:rum} embeds two simplifying assumptions. First, utility is assumed to depend only on the applicant's own assignment, ruling out preferences over peers at the individual level. However, preferences for school-wide peer characteristics (e.g., racial or socioeconomic mix) can be accommodated by including lagged aggregates in $\z_{ij}$. Second, the model assumes that preferences are fixed and known to students ex ante. Yet empirical evidence suggests that this assumption is unlikely to hold in contexts where participants face a large set of alternatives and must therefore learn their preferences through a costly discovery process (e.g., \citealt{Narita(2018)WP}, \citealt{Grenet_He_Kubler(2022)JPE}). To address this limitation, an increasing number of studies adopt models that allow applicants' preferences to evolve over time through dynamic learning \citep[e.g.,][]{Bordon_Fu(2015)REStud, Narita(2018)WP, Grenet_He_Kubler(2022)JPE, Larroucau_Rios(2023)WP, Agte_et_al(2024)NBER, Hahm_Park(2024)WP, Kapor(2024)NBER, Arcidiacono_et_al(2025)JPE, DeGroote_Fabre_Luflade_Maurel(2025)WP}.

\paragraph{Non-parametric identification of preferences.}

\citet{Agarwal_Somaini(2018)ECTA} provide general identification results for the random utility model described in \cref{eq:rum}. The goal is to identify and estimate the joint distribution of the utility vector $\textbf{u}_i = (u_{i1}, \dots, u_{iJ})$ conditional on observed characteristics $\z_{i}=(\z_{i1},\dots,\z_{iJ})$ and school-specific unobservables $\bm{\xi}=(\bm{\xi}_{1},\dots,\bm{\xi}_{J})$, denoted by the conditional density $f_{U}(u_{i1},\dots,u_{iJ}\mid\z_{i},\bm{\xi})$.

The empirical approaches outlined in the previous sections yield revealed preference relations, derived either from submitted ROLs (as in the portfolio choice and truth-telling approaches) or from observed assignments (as in the stability-based approach). Under complete models of behavior, a given utility vector $\mathbf{u}_i$ maps to a unique ROL or assignment, allowing the likelihood of the observed outcome to be written as a function of the underlying distribution of preferences. For instance, under the portfolio choice approach, the probability that student~$i$ submits her observed ROL $R_{i}$ is:
\begin{align*}
\Prob(i \text{ submits } R_{i} \mid \z_{i},\bm{\xi}) = \Prob(R_{i} = \argmax_{R\in \mathcal{R}_i} \mathbf{u}_i\cdot \mathbf{L}_{R,i}\mid\z_{i},\bm{\xi};f_{U})\text{,}
\end{align*}
where $\mathbf{L}_{R,i}$ denotes student~$i$'s vector of (perceived) assignment probabilities under ROL~$R$ and $\mathcal{R}_i$ is the set of possible ROLs that student $i$ can submit. Submitting $R_{i}$ reveals that her utility vector~$\mathbf{u}_{i}$ lies within the convex set $C_{R_i}$:
\begin{align}
\label{eq:convex_subset_b}
C_{R_i} = \left\{\mathbf{u}_{i} \in \mathbb{R}^{J}: \mathbf{u}_{i}\cdot (\mathbf{L}_{R_{i},i} - \mathbf{L}_{R,i}) \geq 0 \text{ for all } \mathbf{L}_{R,i} \in \mathcal{L}_i\right\}\text{,}
\end{align}
where $\mathcal{L}_i$ denotes the set of all assignment lotteries faced by $i$. 
The subset $C_{R_i}$ is defined by linear inequalities on $\mathbf{u}_{i}$ and characterizes the set of utilities for which $R_i$ is optimal given the applicant's beliefs. Analogous restrictions on $\mathbf{u}_i$ arise under the truth-telling and stability approaches, as discussed in \Cref{subsec:estimation_TT,subsec:estimation_stability}.

\citet{Agarwal_Somaini(2018)ECTA} analyze the conditions required for non-parametric identification of the joint distribution of preferences $f_U$ in mechanisms that admit a cutoff representation (RSP+C). They show that point identification can be achieved by ``tracing out'' the distribution of utilities using two types of exogenous variation: (i)~variation in the choice environment (e.g., different priorities or cutoffs faced by otherwise similar applicants), and (ii)~variation generated by a preference shifter, i.e., a student--school-specific covariate $z_{ij}^{0}$ that enters utility additively and is orthogonal to unobserved heterogeneity.

The first source of variation applies, for example, when components of applicant priority (such as sibling or walk-zone priority) are plausibly exogenous to preferences, or when the data span multiple years with fixed school sets but varying capacities or cutoffs.\footnote{See \citet{Carvalho_Magnac_Xiong(2019)QE} for a discussion of how exogenous variation in the choice environment can be used to non-parametrically identify preferences in a college admissions context.} However, Agarwal et al.\ note that such variation is often limited in practice, particularly in settings with few priority types or few cohorts, making full non-parametric identification difficult.

Alternatively, preferences can be non-parametrically identified if the covariates $\z_{ij}$ include a ``special regressor,'' as defined in the econometrics literature \citep[see][for a review]{Lewbel(2014)chapter}. In the context of school choice, such a variable must vary at the student--school level, enter additively in the utility function, and be orthogonal to unobserved preference heterogeneity ($\bm{\epsilon}_{i}$). A commonly used special regressor in empirical applications is the distance between a student and a school. When this assumption is satisfied, \cref{eq:rum} can be rewritten as:
\begin{align}
\label{eq:rum_z}
u_{ij} = V(\widetilde{\z}_{ij},\bm{\xi}_{j},\bm{\epsilon}_{i}) - d_{ij}\text{,}
\end{align}
where $d_{ij}$ is the distance from student~$i$ to school~$j$, and $\widetilde{\z}_{i,j} := \z_{i,j}\backslash d_{ij}$. The critical assumption is that distance is orthogonal to unobserved preference heterogeneity. This may be violated if families sort into neighborhoods based on unobserved school preferences \citep[see][]{Park_Hahm(2023)WP}. The credibility of this assumption is thus context-specific and becomes more plausible when the data include rich controls to adjust for potential confounders.

\citet{Agarwal_Somaini(2018)ECTA} show that when the revealed preference sets (like $C_{R_i}$) are defined by systems of linear inequalities, variation in $d_{ij}$ non-parametrically identifies $f_U$.\footnote{See \citet[][Section~3.2]{Agarwal_Somaini(2020)AnnuRev}, for a graphical illustration of the identification argument.}

Note that preference distributions in random utility models are only identified up to an affine transformation, requiring both location and scale normalization. Location is typically normalized by setting the utility of a reference alternative---often the outside option (denoted school~0)---to zero, i.e.\ $u_{i0}=0$, anchoring all utility comparisons to that baseline. If the model does not include an outside option, the mean utility of an arbitrary school may be normalized to zero instead. For scale, identification is generally achieved by normalizing the variance of $\bm{\epsilon}_i$ or by fixing the coefficient of one observed variable. In \cref{eq:rum_z}, the scale normalization is ensured by setting the coefficient on distance to $-1$, allowing preferences to be interpreted in terms of students' willingness to travel.\footnote{While this willingness-to-travel representation aids interpretation, it does not allow for interpersonal utility comparisons. Because distance is not a transferable unit of measurement, utilitarian welfare metrics based on the Kaldor--Hicks criterion (such as total or average utility) cannot be justified. Still, ordinal comparisons (e.g., the share of students who prefer one assignment over another) remain valid. See \citet{Agarwal_Somaini(2020)AnnuRev} for further discussion.}

\paragraph{Common parametrizations.}

Although the identification results in \citet{Agarwal_Somaini(2018)ECTA} do not rely on parametric assumptions about the random utility function~$V(\cdot)$, such assumptions are typically necessary in empirical applications to ensure computational tractability.
\medskip

\noindent\emph{Multinomial logit.} A widely used parametrization in the literature \citep[e.g.,][]{Burgess_et_al(2015)EJ, Fack_Grenet_He(2019)AER, Nguyen(2021)WP, Oosterbeek_Sovago_van_der_Klauw(2021)JPubE, Grenet_He_Kubler(2022)JPE, Ngo_Dustan(2024)AEJ:App} is to specify preferences using a multinomial logit model of the form:
\begin{align}
\label{eq:logit}
u_{ij} = V(\mathbf{x}_j,\mathbf{w}_{ij},\xi_{j},\bm{\alpha},\bm{\beta})
= \underbrace{\mathbf{x}^{\prime}_{j}\bm{\alpha} + \xi_{j}}_{\textstyle \delta_j\mathstrut} + \mathbf{w}^{\prime}_{ij}\bm{\beta} + \epsilon_{ij}\text{,}
\end{align}
where $u_{i0} = \epsilon_{i0}$ and $\epsilon_{ij}$ follows an extreme-value type I (EVT1) or Gumbel distribution with location parameter~0 and scale parameter~1.\footnote{When the model includes distance to school, the scale normalization can instead be achieved by fixing the coefficient on distance to $-1$, as in \cref{eq:rum_z}. In this case, the model parameters are interpreted in terms of willingness-to-travel, and the scale of the EVT1 distribution becomes a parameter $\sigma$ to be estimated.} 

In this formulation, $\mathbf{x}_{j}$ includes observed school-specific characteristics, while $\mathbf{w}_{ij}$ captures observed characteristics that vary across both students and schools, potentially through interactions between student and school attributes. Unobserved school-specific characteristics are summarized by a single dimension, $\xi_{j}$, which reflects the overall influence of unobserved attributes that make a school more or less desirable to all applicants. Observed and unobserved school characteristics are often grouped into a single school fixed effect $\delta_j := \mathbf{x}'_j \bm{\alpha} + \xi_{j}$, which flexibly captures all systematic variation in average school desirability.\footnote{Without such fixed effects, $\xi_j$ would be absorbed into the error term, potentially biasing estimates of $\bm{\alpha}$ and $\bm{\beta}$ when $\mathbf{x}_j$ and $\mathbf{w}_{ij}$ are correlated with $\xi_{j}$.}

A major advantage of the logit model is computational: as discussed in \Cref{subsec:estimation_TT,subsec:estimation_stability}, it yields closed-form expressions for the likelihood of observed ROLs or assignments under both the truth-telling and stability-based approaches, enabling estimation via maximum likelihood. However, a key limitation is that it assumes fixed coefficients $\bm{\alpha}$ and $\bm{\beta}$ across individuals, preventing unobserved heterogeneity in how students value observed school characteristics. This leads to restrictive substitution patterns due to the independence of irrelevant alternatives (IIA) property.\footnote{For example, IIA implies that improving a positively valued attribute of one school reduces the choice probabilities of all other schools proportionally---a substitution pattern known as ``proportionate shifting'' \citep[see][]{Train(2009)book}.}

To introduce greater flexibility while preserving tractability, a growing number of studies \citep[e.g.,][]{Abdulkadiroglu_Pathak_Schellenberg_Walters(2020)AER, Barahona_Dobbin_Otero(2023)WP, Campos_Kearns(2024)QJE, Corradini(2024)WP, Laverde(2024)WP, Campos_Munoz_Bucarey_Contreras(2025)WP} estimate \cref{eq:logit} separately across subsamples defined by combinations of observable student characteristics such as gender, race/ethnicity, or academic performance. This stratified specification allows model parameters to vary freely across subgroups, thereby capturing observed heterogeneity in preferences.
\medskip

\noindent\emph{Multinomial mixed logit.} To allow for unobserved heterogeneity in how students value observed characteristics, the multinomial logit model can be extended by introducing applicant-specific random coefficients $\bm{\beta}_{i}$ on $\mathbf{w}_{ij}$, resulting in a mixed logit specification:
\begin{align}
\label{eq:mixed_logit}
u_{ij} = \delta_{j} + \mathbf{w}^{\prime}_{ij}\bm{\beta_{i}} + \epsilon_{ij}\text{,}
\end{align}
where $\bm{\beta_{i}}$ is a vector of individual-specific taste coefficients drawn from a multivariate normal distribution, i.e.\ \mbox{$\bm{\beta_{i}} \sim \mathcal{N}(\bm{\beta},\bm{\Sigma}_{\beta})$}, with $\bm{\beta}$ and $\bm{\Sigma}_{\beta}$ as parameters to be estimated, and $\epsilon_{ij}$ follows an EVT1 distribution.

This model introduces flexible substitution patterns and richer heterogeneity in preferences, but at the cost of losing closed-form expressions for choice probabilities. As a result, estimation requires multi-dimensional numerical integration. Common simulation-based methods include maximum simulated likelihood, the method of simulated moments, and simulated scores \citep[see, e.g.,][]{Hastings_Kane_Staiger(2009)WP, Luflade(2019)WP, Larroucau_Rios(2023)WP}. These methods can be computationally demanding, especially as the dimension of $\bm{\beta_{i}}$ increases.\footnote{This is particularly true for the method of maximum simulated likelihood, which may exhibit substantial asymptotic bias if the number of simulation draws per observation does not grow fast enough relative to the sample size \citep{Lee(1995)ET}.}
\medskip

\noindent\emph{Multinomial probit.} To circumvent the computational burden associated with evaluating integrals in mixed logit models, an increasingly popular alternative is to specify student preferences using a multinomial probit model estimated via Bayesian methods \citep[e.g.,][]{Abdulkadiroglu_Agarwal_Pathak(2017)AER, Agarwal_Somaini(2018)ECTA, Kapor_Neilson_Zimmerman(2020)AER, Larroucau_Rios(2020)WP, Che_Hahm_He(2023)WP, Agte_et_al(2024)NBER, Campos_Kearns(2024)QJE, Kapor_Karnani_Neilson(2024)JPE}. In this setup, both the random coefficients and the idiosyncratic preferences in \cref{eq:mixed_logit} are assumed to follow normal distributions, i.e., $\bm{\beta_{i}} \sim \mathcal{N}(\bm{\beta},\bm{\Sigma}_{\beta})$ and $\epsilon_{ij} \sim \mathcal{N}(0,\sigma_{\epsilon}^{2})$.

Probit models offer two main advantages. First, under both the truth-telling and stability assumptions, Bayesian methods provide a tractable framework for estimating random-coefficient models without requiring the simulation of choice probabilities, a feature that facilitates preference estimation in large-scale markets such as the Chilean college admissions system analyzed by \cite{Kapor_Karnani_Neilson(2024)JPE}. Second, the probit specification is particularly well-suited for the portfolio choice approach in strategic settings, where even logit models without random coefficients yield likelihood functions that lack closed-form expressions. It is also recommended for stability-based inference in DA environments with priority uncertainty \citep{Che_Hahm_He(2023)WP}, where the likelihood is typically non-analytic (see \Cref{{subsec:estimation_stability}}).

The next section illustrates how Bayesian methods can be implemented to estimate probit-based models under the portfolio choice approach.

\subsubsection{Estimation under the Portfolio Choice Approach}
\label{subsubsec:portfolio_choice}

In the portfolio choice approach, each applicant is assumed to submit the ROL~$R_{i}$ that maximizes her expected utility, given her beliefs about the assignment probabilities associated with each possible report:
\begin{align}
\label{eq:portfolio_prob}
\Prob(R_{i} = \argmax_{R\in \mathcal{R}} \mathbf{u}_i\cdot \mathbf{L}_{R,i} \mid \z_{i}; \bm{\theta})\text{,}
\end{align}
where $\bm{\theta}$ denotes the preference parameters to be estimated, and $\mathbf{L}_{R,i}$ is the vector of student~$i$'s perceived assignment probabilities across schools when she submits ROL~$R$.

\citet{Agarwal_Somaini(2018)ECTA} propose a two-step estimator for $\bm{\theta}$. In the first step, assignment probabilities $\mathbf{L}_{R}$ are estimated for each student--ROL pair. In the second step, the preference parameters $\bm{\theta}$ are estimated, treating the first-stage estimates $\hat{\mathbf{L}}_{R}$ as known.\footnote{\citet{Agarwal_Somaini(2018)ECTA} note that while joint estimation of $\mathbf{L}_{R}$ and $\bm{\theta}$ is theoretically feasible, the two-step approach is more computationally tractable, though it may entail some loss in efficiency.}

\paragraph{Step~1: Estimating assignment probabilities.} 

Under the rational expectations assumption, applicants have correct beliefs about the probability of assignment to each school, conditional on their submitted ROL. This implies that each component $L^{(j)}_{R,i}$ of the assignment probability vector $\mathbf{L}_{R,i}$ corresponds to the probability that student~$i$ is matched to school~$j$, given she submits ROL~$R$. 

In mechanisms that admit a cutoff representation (RSP+C), these probabilities depend on the student's effective priority at each school and the school's admission cutoff. The student's effective priority can be summarized by an ex-post eligibility score, denoted by $e_{ij}$, which aggregates the relevant inputs of the assignment mechanism. Under DA, $e_{ij}$ can be constructed as a lexicographic function of the student's priority~$t_{ij}$ and, in case of coarse priorities, a random tie-breaker $\tau_{ij}$. Under IA, the eligibility score $e_{ij}$ also depends on the position of school~$j$ in the student's ROL, making it endogenous to the submitted ranking. The admission cutoff at school~$j$, denoted by $p_j$, is defined as the lowest eligibility score among students assigned to the school when its capacity is exhausted, and as the lowest possible score otherwise. 

Given these definitions, the probability that student~$i$ is matched to $j$ conditional on submitting ROL~$R$ is given by: 
\begin{align}
\label{eq:l_ij}
L^{(j)}_{R,i} = \Prob\left( e_{ij} \geq p_{j} \text{ and } e_{ij^{\prime}}<p_{j'} \text{ if } j^{\prime} \text{ is ranked above } j \right)\text{,}
\end{align}
that is, the probability that student~$i$'s eligibility score exceeds the cutoff at school~$j$ and falls below the cutoffs at all schools she ranks higher. 

From the applicant's perspective, assignment uncertainty stems from two sources: (i)~the ROLs submitted by other applicants, and (ii)~random tie-breakers, when priorities are coarse. To approximate both, \citet{Agarwal_Somaini(2018)ECTA} propose a resampling procedure. This involves repeatedly drawing (with replacement) samples of students, their submitted ROLs, and lottery draws when relevant. The assignment mechanism is then simulated on each resampled dataset to compute market-clearing cutoffs. The empirical distribution of these cutoffs is used to estimate $\hat{\mathbf{L}}_{R}$ for each student and ROL, by averaging over simulations---across both cutoff realizations and, in the presence of lotteries, multiple tie-breaker draws. \citet{Agarwal_Somaini(2018)ECTA} show that this estimator is consistent and asymptotically normal.

This procedure can be extended to accommodate alternative belief formation assumptions, such as the adaptive or coarse expectations models discussed in \citet{Agarwal_Somaini(2018)ECTA}. It is also compatible with models allowing for heterogeneous degrees of sophistication in application behavior \citep[e.g.,][]{He(2017)WP, Agarwal_Somaini(2018)ECTA, Calsamiglia_Fu_Guell(2020)JPE}. Alternatively, perceived assignment probabilities $\hat{\mathbf{L}}_{R}$ can be constructed using data on applicants' stated beliefs, when available from surveys \citep{Kapor_Neilson_Zimmerman(2020)AER}.

\paragraph{Step~2: Estimating preference parameters.}

Given the estimated assignment probabilities $\hat{\mathbf{L}}_{R}$ from Step~1, the second step estimates preference parameters using \cref{eq:portfolio_prob} and~\cref{eq:l_ij}. The corresponding log-likelihood function is:
\begin{align}
\label{eq:ll_portfolio}
\ln \mathcal{L}_{\text{Portfolio}}(\bm{\theta}\mid \z) =
\sum_{i=1}^{I}\ln \Prob(R_{i} = \argmax_{R\in \mathcal{R}} \mathbf{u}_i\cdot \hat{\mathbf{L}}_{R,i}\mid\z_{i};\bm{\theta})\text{.}
\end{align}

Because this likelihood lacks a closed-form expression, estimation of $\bm{\theta}$ typically relies on simulation-based methods. Common approaches include maximum simulated likelihood \citep[e.g.,][]{Calsamiglia_Fu_Guell(2020)JPE, Vrioni(2023)WP} and the method of simulated moments \citep[e.g.,][]{Ajayi_Sidibe(2024)WP}. An alternative, increasingly used in recent work, is to adopt a probit specification and estimate $\bm{\theta}$ via Bayesian methods \citep[e.g.,][]{Agarwal_Somaini(2018)ECTA, Kapor_Neilson_Zimmerman(2020)AER, Larroucau_Rios(2020)WP, Hernandez-Chanto(2021)WP, Idoux(2023)WP, Agte_et_al(2024)NBER, Campos_Kearns(2024)QJE, Xu_Hammond(2024)EcIn}.

As outlined in \Cref{sec:specification_preferences}, the probit model assumes that both the idiosyncratic taste parameters $\bm{\beta}_{i}$ and the error terms $\epsilon_{ij}$ follow normal distributions: $\bm{\beta}_{i} \sim \mathcal{N}(\bm{\beta}, \bm{\Sigma}_{\beta})$ and $\epsilon_{ij} \sim \mathcal{N}(0,\sigma_{\epsilon}^{2})$. \citet{Abdulkadiroglu_Agarwal_Pathak(2017)AER} and \citet{Agarwal_Somaini(2018)ECTA} adapt the Gibbs sampling procedure of \citet{McCulloch_Rossi(1994)JoE} and \citet{Rossi_McCulloch_Allenby(1996)MKSC} to estimate such models in the school choice context. The algorithm specifies conjugate priors for $\bm{\theta} := (\bm{\delta}, \bm{\beta}, \bm{\Sigma}_{\beta}, \sigma^{2}_{\epsilon})$ and proceeds iteratively, starting from initial values $\bm{\theta}_0$ and a set of utilities consistent with the observed ROLs. Each iteration consists of two steps: (i)~drawing each student's latent utility vector $\mathbf{u}_{i}$ from its posterior distribution, conditional on the parameters and the revealed preference inequalities from \cref{eq:convex_subset_b}; and (ii)~drawing the parameters $\bm{\theta}$ from their posterior distribution, conditional on the sampled utilities. After discarding an initial set of ``burn-in'' draws, posterior means (as point estimates) and standard deviations (as standard errors) of the parameters can be obtained by sampling from their posterior distributions.\footnote{Priors for $\bm{\delta}$ and $\bm{\beta}$ are typically specified as Gaussian, while those for $\bm{\Sigma}_{\beta}$ and $\sigma_{\epsilon}^{2}$ are generally chosen to follow inverse-Wishart distributions.} By the Bernstein-von Mises theorem \citep[][Section~10.2]{vanDerVaart(2000)book}, these posterior means converge asymptotically to the same distribution as maximum likelihood estimators.\footnote{Since the assignment probabilities $\mathbf{L}_{R}$ are estimated in a first step, standard errors must account for the resulting estimation error. \citet{Agarwal_Somaini(2018)ECTA} address this using a bootstrap procedure described in Appendix~E.2 of their paper.}

The key advantage of the Gibbs sampler is that it sidesteps the need to compute the choice probabilities in \cref{eq:ll_portfolio} directly. Instead, it generates samples that approximate the posterior distribution of the parameters of interest. In school choice applications, the crucial insight behind step~(i) is that conditional draws of $u_{ij}$---given current values for $\mathbf{u}_{i,-j}$ and $\bm{\theta}$---are obtained from a (possibly two-sided) truncated normal distribution. This follows from the joint normality assumption and the fact that the revealed preference conditions impose the constraint $\mathbf{u}_{i} \cdot (\mathbf{L}_{R_i, i} - \mathbf{L}_{R,i}) \geq 0$ for all $R \in \mathcal{R}_{i}$, which bounds the support of each $u_{ij}$ conditional on the others.

The main computational challenge in implementing the portfolio choice approach lies in the exponential growth of the set of ROLs that must be considered for each student as the number of options increases. Whether using simulation-based or Bayesian methods, the utility vector of each student must be evaluated against all possible alternative ROLs, a task that becomes computationally prohibitive in large markets. While this challenge is manageable in small-scale settings---such as those studied by \citet{Agarwal_Somaini(2018)ECTA} and \citet{Kapor_Neilson_Zimmerman(2020)AER}---applications to more complex environments often require simplifying assumptions or approximations. Several strategies have been proposed to address this dimensionality issue (see \Cref{subsubsec:methods_portfolio}). These include assuming that applicants know their priorities and perceive admission probabilities as independent across schools or programs \citep{Luflade(2019)WP, Larroucau_Rios(2020)WP, Larroucau_Rios(2023)WP, Vrioni(2023)WP}; that applicants face limited uncertainty over admission cutoffs when priorities are based on a ``common score,'' such as a single tie-breaking lottery \citep{Calsamiglia_Fu_Guell(2020)JPE} or exam score \citep{Ajayi_Sidibe(2024)WP}; or that applicants rely on decision heuristics to simplify the application process due to bounded rationality \citep{Idoux(2023)WP, Lee_Son(2024)WP, Wang_Wang_Ye(2025)NBER}. The validity of these modeling choices should be carefully assessed in light of the institutional and informational context of each study.

\subsubsection{Estimation under Truth-telling}
\label{subsec:estimation_TT}

Empirical studies in DA environments have commonly relied on the weak truth-telling assumption to recover student preferences. WTT can be viewed as a truncated version of STT, and rests on two assumptions: (a)~the length of each submitted ROL is exogenous to student preferences, and (b)~students rank their most preferred schools truthfully while omitting less-preferred options beyond a certain point.

As an illustration, suppose a student faces four schools, $\{a,b,c,d\}$, and submits the ROL $R_{i}=(b, a)$. Under WTT, the student is assumed to have preferences $b \succ a \succ c,d$, while the choice to rank only two schools is treated as independent of her preferences over the full set of options.

To fix ideas, suppose that student preferences can be represented using a multinomial logit model without an outside option, specified as:
\begin{align*}
u_{ij} = V_{ij} + \epsilon_{ij} = V(\z_{ij}, \bm{\theta}) + \epsilon_{ij}\text{,}
\end{align*}
where $V(\cdot, \cdot)$ is a known function of $\z_{ij}$, a vector of observable student--school characteristics, and $\bm{\theta}$, a vector of parameters to be estimated. The idiosyncratic shocks $\epsilon_{ij}$ are assumed to be i.i.d.\ across students and schools, and to follow an EVT1 distribution with location parameter 0 and scale parameter 1.

Let $\sigma_i(\mathbf{u}_i, \mathbf{t}_i)$ denote student~$i$'s strategy, mapping her utility vector $\mathbf{u}_i$ and priority type $\mathbf{t}_i$ into a submitted ROL. The observed ROL is written as $R_i = (r^1, \dots, r^{|R_i|})$, where $r^k$ denotes the school ranked in position $k$, and $|R_i|$ is the number of schools ranked.

Under the WTT assumption, the probability that student~$i$ submits ROL~$R_i$ can be decomposed as:
\begin{align*}
\Prob\left(\text{$i$ submits $R_i$} \bigm|\z_{i};\bm{\theta} \right) =
\Prob\left(\sigma_{i} = R_{i} \bigm|\z_{i};\bm{\theta}; |\sigma_i| = |R_i| \right)
\times \Prob\left(|\sigma_i|=|R_i| \bigm|\z_{i};\bm{\theta}\right)\text{,}
\end{align*}
where the second term---the probability that the student submits a list of length $|R_i|$---is treated as exogenous and unrelated to the student's preferences $u_{ij}$ for all~$j$. Estimation therefore focuses on the first term. When the preferences shocks $\epsilon_{ij}$ follow an EVT1 distribution, this conditional probability admits a closed-form expression:
\begin{align*}
\Prob\left(\sigma_{i} = R_i \bigm| \z_{i}; \bm{\theta}; |\sigma_i|=|R_i| \right)
&= \Prob\left(u_{i,r^{1}} > \cdots > u_{i,r^{|R_{i}|}} > u_{i,j'} \; \forall j' \in \mathcal{J}\backslash R_{i} \bigm|\z_{i};\bm{\theta};|\sigma_i|=|R_i|\right) \\
&= \prod_{j \in R_{i}} \frac{\text{exp}(V_{ij})}{\sum_{j' \nsucc_{R_{i}} j }\text{exp}(V_{ij'})}\text{,}
\end{align*}
where $j' \nsucc_{R_{i}} j$ denotes schools that are not ranked above $j$ in $R_i$, including $j$ itself and all unranked schools. This formulation corresponds to the rank-ordered (or ``exploded'') logit, which can be interpreted as a sequence of conditional logits: the first for the top-ranked school being preferred to all others, the second for the next-ranked school being preferred to all remaining options, and so on.\footnote{In the earlier example with four schools, the WTT assumption implies that the probability that student~$i$ submits $R_i=(b,a)$ is:
\begin{align*}
\Prob\left(\sigma_{i} = (b,a) \bigm| \z_{i};\bm{\theta};|\sigma_i|=2\right)
= \frac{\text{exp}(V_{ib})}{\sum_{j=a,b,c,d}\exp(V_{ij})}
\frac{\text{exp}(V_{ia})}{\sum_{j=a,c,d}\exp(V_{ij})}\text{,}
\end{align*}
i.e., the product of two terms: the probability that the student prefers school~$b$ to all others, and the probability that she prefers school~$a$ to the remaining unranked schools ($c$ and $d$).}

With a location normalization (e.g., setting $V_{i,1}=0$ for all $i$), the model can be estimated via maximum likelihood using the following log-likelihood function:\footnote{The rank-ordered logit model can be implemented using standard software tools such as the \texttt{Rologit} command in Stata or the \texttt{ROlogit} package in \textsf{R}.}
\begin{align*}
\ln \mathcal{L}_{\text{WTT}}\left(\bm{\theta} \; \big| \; \z, \{|R_{i}|\}_{i=1,\dots,I}\right) =
\sum_{i=1}^{I} \sum_{j \in R_{i}} V_{ij} - \sum_{i=1}^{I}\sum_{j \in R_{i}}\ln \left(\sum_{j^{\prime} \nsucc j} \text{exp}(V_{ij^{\prime}})\right)\text{,}
\end{align*}
where the first term sums the deterministic utilities of ranked schools, and the second accounts for the normalizing denominators in the rank-ordered logit. The resulting estimator is denoted by $\hat{\bm{\theta}}_{\text{WTT}}$.

The truth-telling framework can be extended to accommodate unobserved tastes for school characteristics by introducing random coefficients, using either a mixed logit or a probit specification as in \cref{eq:mixed_logit}. Although the likelihood function no longer admits a closed-form expression under these more flexible models, preference parameters can still be estimated using simulation-based methods \citep[e.g.,][]{Hastings_Kane_Staiger(2009)WP} or Bayesian approaches such as Gibbs sampling \citep[e.g.,][]{Abdulkadiroglu_Agarwal_Pathak(2017)AER, Pathak_Shi(2021)JoE}. Compared to the portfolio choice approach, the computational burden of estimating random-coefficient models under WTT is substantially lower. This is because estimation under WTT relies solely on the internal rank-ordering within each submitted list and assumes that unranked schools are strictly less preferred than the lowest-ranked option, thereby avoiding the need to evaluate expected utility across all possible ROLs.

\subsubsection{Estimation under Stability}
\label{subsec:estimation_stability}

Under the assumption that the matching outcome is stable, each student is matched to her most preferred school among those that are feasible ex-post. This section outlines how the stability-based approach can be used to estimate preferences in DA environments. We begin with settings in which students are ranked by strict, known priority scores, and then extend the analysis to cases where applicants face uncertainty about their priorities.

\paragraph{Strict priorities.} 

In settings with strict priorities, stable matching can be formulated as the outcome of a discrete choice model with personalized choice sets.

Let $\mu$ denote the matching and $\mathbf{P}(\mu)$ the associated vector of admission cutoffs, which are random variables determined by the unobserved utility shocks $\bm{\epsilon}$. A school~$j$ is ex-post feasible to student~$i$ if $i$'s priority score $t_{i,j}$ at that school meets or exceeds the cutoff, i.e., $t_{i,j} \geq P_{j}$. Denote by $\mathcal{S}(\mathbf{t}_i,\mathbf{P})$ the set of schools that are feasible for student~$i$ under the realized cutoffs. 

For example, consider a district with four schools, $\{a,b,c,d\}$, admission cutoffs $P_a = 0$, $P_b = 0.4$, $P_c = 0.6$, and $P_d = 0.8$, and a student with priority scores $t_{i,a} = 0.5$, $t_{i,b} = 0.2$, $t_{i,c} = 0.8$, and $t_{i,d} = 0.7$. The student's feasible set is then $\mathcal{S}(\mathbf{t}_i,\mathbf{P})=\{a,c\}$.

Under the stability assumption, the observed match $j^* = \mu(i)$ corresponds to the utility-maximizing option within the feasible set. That is, the probability that student~$i$ is matched to school~$j^{*}$ is given by:
\begin{align*}
\Prob\left(j^{*}=\mu(i) =
\argmax_{j \in \mathcal{S}(\mathbf{t}_{i}, \mathbf{P})} u_{ij} \bigm|\z_{i}, \mathbf{t}_{i}, \mathcal{S}(\mathbf{t}_i,\mathbf{P});\bm{\theta}\right)\text{.}
\end{align*}

This characterization transforms the assignment problem into a discrete choice model, where each student ``chooses'' a school from her personalized feasible set. Standard methods from the discrete choice literature can then be applied to estimate preferences.

The identification of preferences in this framework relies on two exogeneity assumptions:
\begin{itemize}
\item EXO~1 (\emph{Exogeneity of priority scores}). $\mathbf{t}_{i} \perp \bm{\epsilon}_{i} \mid \z_i$: Conditional on observed covariates, a student's priority scores are independent of her unobserved preference shocks.
\item EXO~2 (\emph{Exogeneity of feasible set}). $\mathcal{S}(\mathbf{t}_{i},\mathbf{P})\perp\bm{\epsilon}_{i} \mid \z_i$: Conditional on $\z_i$, a student's set of feasible schools is independent of her unobserved preferences.
\end{itemize}

Assumption EXO1 is necessary because preferences are only identified over the schools that fall within a student's personalized choice set. In particular, the preferences of students with limited access---due to low priority scores---must be inferred from those of students with broader access, i.e., higher priority scores. This assumption may fail if, conditional on observables, priority scores are correlated with unobserved determinants of preference such as ability. For instance, when priorities are based on test scores, the stability assumption does not reveal information about low-scoring students' preferences for popular schools because such schools are generally infeasible to them. This may lead to a failure to identify how test scores determine student preferences. This issue can be mitigated if alternative measures of ability are available, as in \citet{Fack_Grenet_He(2019)AER}'s study of high school choice in Paris. Moreover, if priority scores have full support within each observable ability group, e.g., because they combine multiple priority criteria, then even students with relatively weak academic records may occasionally have access to the full set of schools, restoring non-parametric identification.

Assumption EXO2 requires that the personalized choice set $\mathcal{S}(t_i, \mathbf{P})$ be exogenous to a student's unobserved preferences. Importantly, this does not imply that the cutoffs $\mathbf{P}$ are independent of preference shocks, but only that each student's feasible set is conditionally independent of her own idiosyncratic tastes. This condition may be violated in finite markets, for example, if a student can affect the cutoff of a school by choosing whether or not to apply, thereby changing which schools are feasible to her. However, such endogeneity becomes increasingly implausible in large markets, where the influence of any single student's application decision on admission cutoffs is negligible.

When student preferences follow a logit specification, and under assumptions EXO1 and EXO2, the probability that student~$i$ is matched with her assigned school~$j^{*}$ can be written as:
\begin{align*}
\Prob\left(j^{*}=\mu(i) = \argmax_{j' \in \mathcal{S}(\mathbf{t}_{i},\mathbf{P})} u_{ij'} \bigm|\z_{i};\bm{\theta}\right) =
\frac{\exp(V_{ij^{*}})}{\sum_{j' \in \mathcal{S}(\mathbf{t}_{i},\mathbf{P})}\exp(V_{ij'})}\text{.}
\end{align*}
This is the standard conditional logit expression, with the key distinction that each student's choice set is individualized, restricted to schools that are feasible given her priority scores.\footnote{In the earlier example with four schools, student~$i$ is matched with school~$c$ and her feasible set is $\{a, c\}$. Then, under stability, $c$ must be her most preferred school in that set, and the match probability is:
\begin{align*}
\Prob\left(\mu(i) = c\bigm|\z_{i};\bm{\theta}\right) =
\frac{\exp(V_{ic})}{\exp(V_{ia}) + \exp(V_{ic})}\text{.}
\end{align*}}

This leads to the following log-likelihood function, which can be estimated by maximum likelihood:
\begin{align}
\label{eq:ll_st}
\ln \mathcal{L}_{\text{Stability}}\left(\bm{\theta} \; \big| \; \z\right) =
\sum_{i=1}^{J} V_{i,j^{*}} - \sum_{i=1}^{I}\ln\left(\sum_{j' \in \mathcal{S}(\mathbf{t}_{i},\mathbf{P})} \exp(V_{ij'}) \right)\text{,}
\end{align}
with the resulting estimator denoted by $\hat{\bm{\theta}}_{\text{Stability}}$. 

As with WTT, the stability-based approach can be extended to accommodate random coefficients via mixed logit or probit models. However, it relies on more limited information: while WTT leverages the full ranking of submitted preferences, the stability assumption restricts attention to the matching outcome only. As a result, the stability-based framework may entail a loss of information, particularly regarding substitution patterns between schools. This limitation becomes more pronounced when estimating flexible models that allow for rich heterogeneity in preferences over school attributes.\footnote{This limitation of the stability-based approach has implications not only for preference estimation and welfare analysis but also for causal inference. Preference estimates often feed into control-function approaches that correct for selection when estimating treatment effects by capturing idiosyncratic preference heterogeneity \citep[e.g.,][]{Abdulkadiroglu_Pathak_Schellenberg_Walters(2020)AER, Barahona_Dobbin_Otero(2023)WP, Campos_Munoz_Bucarey_Contreras(2025)WP, Gandil(2025)WP}. If stability-based estimates systematically underrepresent certain types of heterogeneity (e.g., among low-achieving students), this shortcoming may carry over into the causal analysis. By contrast, WTT-based estimates exploit the full information in applicants' ROLs and, although potentially biased in demand estimation, may yield less biased causal forecasts. Developing a deeper understanding of these trade-offs and their implications for causal inference represents an important avenue for future research.}

\paragraph{Priority uncertainty.} 

The stability-based approach just described does not directly extend to DA environments where students face uncertainty about their priorities, such as when assignments depend on lottery-based tie-breakers.

In such cases, one might consider mechanically applying the stability condition by defining each student's feasible schools based on ex-post cutoffs derived from an arbitrary lottery draw. However, this method would be problematic for two reasons. First, it lacks a strong theoretical foundation, especially in light of results from \citet{Che_Hahm_He(2023)WP} on the asymptotic stability of robust equilibria in DA markets with payoff-insignificant mistakes. Specifically, the ex-post cutoffs from a single lottery realization could represent a low-probability event from the applicant's perspective. Hence, there is no guarantee that the preferences inferred by this method reflect the applicant's true preferences. Second, this approach requires observing the realized lottery numbers, which are often unavailable in practice. 
In their absence, researchers might resort to simulating lotteries, but this raises concerns about sensitivity to the simulated draw. Without a principled method for selecting among simulations, the inference becomes vulnerable to cherry-picking.

To address these challenges, \citet{Che_Hahm_He(2023)WP} develop a procedure known as the \emph{Transitive Extension of Preferences from Stability} (TEPS), which adapts the stability-based approach to environments with priority uncertainty. Rather than relying on a single cutoff realization, TEPS constructs a transitive preference relation that is consistent with all possible stable assignments under the observed ROL. The authors apply this method to school choice data from Staten Island in New York City.

The TEPS method proceeds in three steps to infer preference relations from each applicant's ROL. To illustrate, consider a setting with six schools $\{a,b,c,d,e,f\}$ that use tie-breaking lotteries (possibly in conjunction with coarse priorities), and suppose a student submits the ROL $(e,d,c,b)$.
\bigskip

\emph{Step~1: Simulating uncertainty and compiling choice data.} The first step simulates the uncertainty faced by applicants due to tie-breaking lotteries. This involves repeatedly executing the DA algorithm under different realizations of the lottery and recording two key pieces of information for each draw: (i)~the set of schools that are feasible (i.e., schools the applicant could have been admitted to given her lottery position), and (ii)~the school to which she is assigned.

For example, suppose the following four scenarios arise with positive probability:
\begin{itemize}
\item the feasible set is $\{d, e\}$ with probability 0.4, and the student is matched to $e$; 
\item the feasible set is $\{a, b\}$ with probability 0.3, and the student is matched to $b$; 
\item the feasible set is $\{a, b, c\}$ with probability 0.25, and the student is matched to $c$; 
\item the feasible set is $\{b, e\}$ with probability 0.05, and the student is matched to~$b$. 
\end{itemize}
These events are illustrated in Panel~(a) of \Cref{fig:TEPS}.

\begin{figure}[htbp]
\bigskip
\centering
\begin{subfigure}{0.33\textwidth}
\centering

\begin{tikzpicture}

\tikzset{every node/.style={anchor=base},
    mynode/.style={
    circle, draw, thick,
    minimum size=15pt,
    inner sep=0pt,
    text height=1.5ex,
    text depth=.25ex
    }
}

\draw[thick, rounded corners=20pt] (0,0) ellipse (0.5cm and 1cm);
\draw[thick, rounded corners=20pt] (1.2,0) ellipse (0.5cm and 1cm);
\draw[thick, rounded corners=20pt] (2.4,0) ellipse (0.5cm and 1cm);
\draw[thick, rounded corners=20pt] (3.6,0) ellipse (0.5cm and 1cm);

\node[mynode] (e) at (0,0.4) {$e$};
\node (d) at (0,-0.5) {$d$};

\node[mynode] (d) at (1.2,0.4) {$b$};
\node (e) at (1.2,-0.5) {$a$};

\node[mynode] (d) at (2.4,0.4) {$c$};
\node (a) at (2.2,-0.5) {$a$};
\node (b) at (2.6,-0.5) {$b$};

\node[mynode] (d) at (3.6,0.4) {$e$};
\node (e) at (3.6,-0.5) {$b$};

\end{tikzpicture}
    \caption{Step~1}
    \end{subfigure}%
    \begin{subfigure}{0.33\textwidth}
    \centering
\begin{tikzpicture}
\tikzset{
    labelstyle/.style={
    text height=1.5ex,
    text depth=.25ex
    }
}

\coordinate (D1) at (0,0);
\coordinate (B1) at (0.8,0);
\coordinate (B2) at (1.6,0);
\coordinate (A2) at (2.4,0);
\coordinate (A3) at (3.2,0);

\coordinate (E1) at (0.4,1);
\coordinate (C2) at (2,1);
\coordinate (B3) at (3.2,1);

\draw[thick] (D1) -- (E1) -- (B1);
\draw[thick] (B2) -- (C2) -- (A2);
\draw[thick] (A3) -- (B3);

\node[below, labelstyle] at (D1) {$d$};
\node[below, labelstyle] at (B1) {$b$};
\node[below, labelstyle] at (B2) {$b$};
\node[below, labelstyle] at (A2) {$a$};
\node[below, labelstyle] at (A3) {$a$};
\node[above, labelstyle] at (E1) {$e$};
\node[above, labelstyle] at (C2) {$c$};
\node[above, labelstyle] at (B3) {$b$};

\end{tikzpicture}

\caption{Step~2}
\end{subfigure}
\begin{subfigure}{0.33\textwidth}
\centering

\begin{tikzpicture}

\tikzset{
    labelstyle/.style={
    text height=1.5ex,
    text depth=.25ex
    }
}

\coordinate (A1) at (1,0);
\coordinate (D1) at (0,0.5);
\coordinate (B1) at (1,0.5);
\coordinate (E1) at (0.5,1);
\coordinate (A2) at (2,0);
\coordinate (B2) at (2,0.5);
\coordinate (C2) at (2,1);

\draw[thick] (D1) -- (E1) -- (B1) ;
\draw[thick] ([yshift=-0.7ex]A1) -- ([yshift=-1.5ex]B1);
\draw[thick] ([yshift=-0.7ex]A2) -- ([yshift=-1.5ex]B2);
\draw[thick] (B2) -- ([yshift=0.2ex]C2);

\node[circle, fill=white, inner sep=1pt] at (D1) {$d$};
\node[circle, fill=white, inner sep=1pt] at (B1) {$b$};
\node[below, labelstyle] at (A1) {$a$};
\node[above, labelstyle] at (E1) {$e$};
\node[below, labelstyle] at (A2) {$a$};
\node[circle, fill=white, inner sep=1pt]  at (B2) {$b$};
\node[above, labelstyle] at (C2) {$c$};

\end{tikzpicture}

\caption{Step~3}
\end{subfigure}
\caption{Transitive Extension of Preferences from Stability (TEPS): Example}
\label{fig:TEPS}
\begin{threeparttable}
\begin{tablenotes}
\item \emph{Source:} \citet{Che_Hahm_He(2023)WP}.
\end{tablenotes}
\end{threeparttable}
\end{figure}
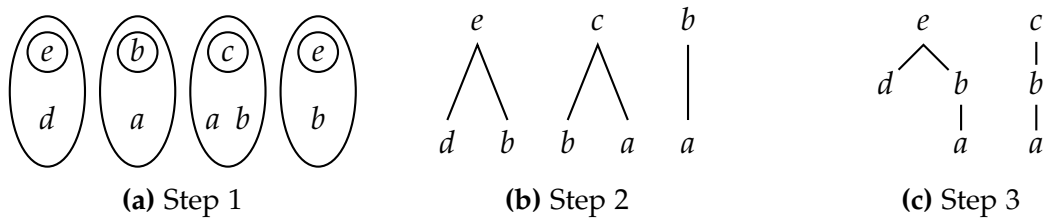

\medskip

\emph{Step~2: Inferring preference relations within each realization.} Assuming that students are matched to their most preferred school among those that are feasible in each lottery realization (as implied by robust equilibrium in large markets), one can recover a set of ordered preference relations from each assignment. In the example: assignment to $e$ when the feasible set is $\{d, e\}$ or $\{b, e\}$ implies $e \succ d$ and $e \succ b$; assignment to $c$ when the feasible set is $\{a, b, c\}$ implies $c \succ a$ and $c \succ b$; and assignment to $b$ when the feasible set is $\{a, b\}$ implies $b \succ a$. These inferred relations are visualized in Panel~(b) of \Cref{fig:TEPS} as directed trees, with each tree rooted at the school to which the student is assigned.

\medskip

\emph{Step 3: Extending preference relations by transitivity.} The final step aggregates all preference relations recovered in Step~2 by applying the transitivity axiom. In the example, this yields the extended set of preference relations $\{(b, a),\allowbreak (c, b),\allowbreak (c, a),\allowbreak (e, b),\allowbreak (e, d),\allowbreak (e, a) \}$, which are depicted as a transitive closure in Panel~(c) of \Cref{fig:TEPS}.

\bigskip

The TEPS procedure recovers all preference relations that are consistent with the stability condition across all possible realizations of uncertainty and the transitivity of preferences. Importantly, while these assumptions allow researchers to infer rich ordinal information, they do not uniquely determine the ROL a student submits. In fact, given other participants' strategies, a student may have multiple best-response ROLs. Despite this multiplicity, Che et al.\ show that TEPS does not suffer from incompleteness or incoherence in the sense of \citet{Tamer(2003)Restud}. The key reason is that this procedure relies solely on the distribution of possible assignment outcomes for each student, which is unique in a large market, regardless of which best-responding ROL is observed.

Once preference relations have been inferred, one can estimate a parametric utility function that best fits these preferences. Che et al.\ recommend a multinomial probit specification estimated via Gibbs sampling, as the sampler can flexibly draw cardinal utilities that satisfy the observed ordinal constraints.\footnote{\label{fn:non_analytic_LL}In this framework, constructing the likelihood function under a logit specification is generally infeasible, even with i.i.d.\ EVT1 errors, e.g., if the inferred preferences are $b \succ c$ and $c \succ d \succ e$. By contrast, these ordinal constraints can be directly translated into inequality bounds on the utilities, which are readily handled in a Gibbs sampler.}

\subsubsection{ Using Undominated Strategies}
\label{subsec:estimation_undominated_strategies}

The assumption that applicants do not play weakly dominated strategies does not generally imply a unique ROL for each student (see \Cref{subsec:preference_estimation_IA,subsec:incomplete_models}). Instead, it yields a set of inequality restrictions that can be transformed into moment inequalities. These can be used to partially identify preferences \citep[e.g.,][]{Hwang(2015)EAI, He(2017)WP, Bayraktar_Hwang(2024)WP} or serve as over-identifying restrictions in estimation \citep[e.g.,][]{He(2017)WP, Fack_Grenet_He(2019)AER}. 

\paragraph{Estimation with moment inequalities.}

Consider a DA environment under the assumption that students do not play weakly dominated strategies (see \Cref{subsec:incomplete_models}). In such settings, submitted ROLs provide partial information about students' preferences, which can be exploited to derive inequality restrictions. These restrictions, in turn, can be translated into conditional and unconditional moment inequalities for estimation.

To illustrate, consider two schools, $a$ and $b$. Because students do not necessarily rank all schools, only some ROLs reveal a preference between $a$ and $b$. Under the assumption of undominated strategies, the probability that student~$i$ ranks $a$ above $b$ in her ROL~$R_i$ provides a lower bound on the probability that she prefers $a$ to $b$:
\begin{align*}
\Prob(a \succ_{R_i} b \mid \z_i;\bm{\theta} )
&= \Prob\left(\{u_{i,a}>u_{i,b}\} \cap \{a,b \in R_{i}\} \mid \z_i;\bm{\theta} \right) \notag \\
& \leq \Prob\left(u_{i,a} > u_{i,b} \mid \z_i; \bm{\theta}\right)\text{.}
\end{align*}
Similarly, the probability that $b$ is ranked above $a$ implies an upper bound:
\begin{align*}
\Prob\left(u_{i,a} > u_{i,b} \mid \z_i; \bm{\theta}\right) \leq 1 - \Prob(b \succ_{R_i} a \mid \z_i; \bm{\theta})\text{.}
\end{align*}
Together, these bounds yield two conditional moment inequalities:
\begin{align*}
\Prob\left(u_{i,a} > u_{i,b} \mid \z_i; \bm{\theta}\right) - \mathbb{E}\left[\mathds{1}(a \succ_{R_i} b) \mid \z_i; \bm{\theta}\right] &\geq 0 \text{;} \\
1 - \mathbb{E}\left[\mathds{1}(b \succ_{R_i} a) \mid \z_i;\bm{\theta}\right] - \Prob\left(u_{i,a} > u_{i,b}\mid \z_i; \bm{\beta}\right) &\geq 0\text{.}
\end{align*}
Such inequalities can be derived for any pair or subset of schools and then interacted with functions of observable covariates $\z_i$ to construct a vector of $M_1$ unconditional moment inequalities, denoted by $(m_1, \dots, m_{M_1})$.

Estimation proceeds by testing whether these inequalities hold in the data. \citet{Andrews_Shi(2013)ECTA} propose a Cramér--von Mises-type statistic for this purpose, based on the modified methods of moments (or sum function):
\begin{align}
\label{eq:t_mi}
T_{\text{MI}}(\bm{\theta}) = \sum_{k=1}^{M_1}\left[ \frac{\overline{m}_{k}(\bm{\theta})}{\hat{\sigma}_{k}(\bm{\theta})} \right]^{2}_{-}\text{,}
\end{align}
where $\overline{m}_{k}(\bm{\theta})$ is the sample mean, $\hat{\sigma}_{k}(\bm{\theta})$ is the sample standard deviation of the $k$th moment, and $[x]_{-} := \min\{0,x\}$ ensures that only violations of the inequality contribute to the test statistic. \citet{Bugni_Canay_Shi(2017)QE} develop methods to construct \emph{marginal confidence intervals} using this statistic. For each parameter component $\theta_k$, they provide a test for the null hypothesis $H_0: \theta_k = \theta_0$ for each candidate value $\theta_0 \in \mathbb{R}$. The confidence interval for $\theta_k$'s true value is then the convex hull of all $\theta_0$ values for which the null is not rejected.

\paragraph{Combining moment equalities and moment inequalities.} 

Approaches based solely on moment inequalities in school choice settings often yield wide confidence intervals of the parameters of interest, constrained by the limitations of current econometric tools. As a result, such inference may be uninformative in practice. A more effective strategy is to treat inequality restrictions as over-identifying information within a complete model framework \citep{Moon_Schorfheide(2009)JoE}. For example, one can combine the restrictions implied by undominated strategies (expressed as moment inequalities) with the moment equalities implied by the stability assumption. Specifically, the likelihood function in \cref{eq:ll_st} can be rewritten as a system of moment equalities by equating predicted and observed school-level matching frequencies:
\begin{align*}
\sum_{i=1}^{I}\Prob\left(j = \argmax_{j'\in\mathcal{S}(\mathbf{t}_i,\mathbf{P})} u_{ij'} \mid \z_i;\bm{\theta}\right) -
\mathbb{E}\left(\sum_{i=1}^{I} \mathds{1}(\mu(i)=j)\right) = 0, \quad \forall j \in \mathcal{J}\text{,}
\end{align*}
where $\mathds{1}(\mu(i)=j)$ is an indicator equal to one if student~$i$ is matched to school~$j$. These moment equalities can be further interacted with covariates $\z_i$ to construct $M_2$ unconditional moment equalities $(m_{M_1+1}, \dots, m_{M1+M2})$.

To jointly incorporate the full set of $M_1$ moment inequalities and $M_2$ moment equalities, the test statistic in \cref{eq:t_mi} can be extended to:
\begin{align}
\label{eq:mei}
T_{\text{MEI}}(\bm{\theta}) = \sum_{k=1}^{M_1}\left[ \frac{\overline{m}_{k}(\bm{\theta})}{\hat{\sigma}_{k}(\bm{\theta})} \right]^{2}_{-} +
\sum_{k=M_{1}+1}^{M_1+M_2}\left[ \frac{\overline{m}_{k}(\bm{\theta})}{\hat{\sigma}_{k}(\bm{\theta})} \right]^{2}_{-}\text{.}
\end{align}
Here, the first term penalizes violations of inequality constraints (as before), while the second term enforces the equality conditions. The estimator $\hat{\bm{\theta}}_{\text{MEI}}$ minimizes this objective function, and marginal confidence intervals can be constructed using the procedure developed by \citet{Bugni_Canay_Shi(2017)QE}.

\subsubsection{Selecting Among Alternative Preference Inference Hypotheses}

Each of the preference estimation approaches reviewed above relies on distinct identifying assumptions grounded in revealed preference restrictions. A natural question is whether empirical tests can distinguish between these alternatives. The answer depends critically on the richness of the data and the degree of structure one is willing to impose on preferences and behavior.

\paragraph{Overcoming the limitations of administrative data.}

Inference based solely on administrative data from centralized school choice or college admission systems faces fundamental limitations. Absent behavioral or parametric restrictions, it is generally not possible to empirically test the identifying assumptions underlying different estimation approaches. For instance, in the portfolio choice framework, \citet[][Theorem~A.1]{Agarwal_Somaini(2018)ECTA} show that any ROL can be rationalized as optimal for some utility vector, as long as each ranked school has strictly positive assignment probability. This flexibility makes it impossible to separately identify heterogeneity in preferences from heterogeneity in strategic sophistication without additional structure. Likewise, it is always possible to find a utility vector that is consistent with both truth-telling and stability, making it difficult to discriminate between these assumptions using administrative data alone.

To address these limitations, several complementary strategies have emerged. One line of research uses laboratory experiments to examine how individuals report preferences in controlled matching environments, offering insights into the determinants of preference misrepresentation under both manipulable and strategy-proof mechanisms (see the reviews by \citealp{Hakimov_Kubler(2021)ExpEcon} and \citealp{Rees-Jones_Shorrer(2023)JPE:Micro}). Complementing this experimental evidence, a growing number of field studies exploit institutional features of real-world matching systems, such as the joint determination of admissions and financial aid, to detect preference misrepresentation using administrative data from strategy-proof mechanisms \citep[e.g.]{Hassidim_Romm_Shorrer(2021)MNSC, Artemov_Che_He(2023)JPE:Micro, Shorrer_Sovago(2023)JPE:Micro}.

A second approach relies on field surveys administered to participants in school choice or college admissions systems to elicit their preferences and beliefs \citep[e.g.,][]{Budish_Cantillon(2012)AER, Chen_Pereyra(2019)GEB, Kapor_Neilson_Zimmerman(2020)AER, Larroucau_Rios(2020)WP, Hassidim_Romm_Shorrer(2021)MNSC, Arteaga_et_al(2022)QJE, Bobba_Frisancho_Pariguana(2023)WP, De_Haan_et_al(2023)JPE, Corradini_Idoux(2025)NBER, Larroucau_et_al(2025)NBER}. These surveys have documented systematic biases in beliefs about admission probabilities, limited knowledge of available options and their characteristics, misunderstanding of the assignment mechanism, and evidence of optimization failures (see \Cref{subsubsec:information_frictions,subsubsec:bounded_rationality}). This survey evidence is increasingly being combined with administrative data to estimate structural models of application behavior and conduct counterfactual policy simulations \citep[e.g.,][]{Kapor_Neilson_Zimmerman(2020)AER,   De_Haan_et_al(2023)JPE, Ajayi_Sidibe(2024)WP, Ajayi_Friedman_Lucas(2025)EJ, Larroucau_et_al(2025)NBER}. 

As discussed in \Cref{subsec:preference_estimation_IA,subsec:preference_estimation_DA}, this body of work has directly informed the development of methods that relax the assumption of full rationality in preference estimation. These include approaches that allow for heterogeneous levels of strategic sophistication among applicants \citep{Hwang(2015)EAI, He(2017)WP, Agarwal_Somaini(2018)ECTA, Calsamiglia_Fu_Guell(2020)JPE, Kapor_Neilson_Zimmerman(2020)AER, Bayraktar_Hwang(2024)WP}; models in which agents fail to fully internalize continuation values when constructing their ROL \citep{Idoux(2023)WP, Wang_Wang_Ye(2025)NBER}; and frameworks that accommodate strategic mistakes even in strategy-proof environments \citep{Artemov_Che_He(2023)JPE:Micro, Che_Hahm_He(2023)WP}.

\paragraph{Leveraging the nesting structure of identifying assumptions.}

When relying solely on administrative data, the plausibility of the assumptions underlying different preference estimation approaches must be carefully evaluated in context. For instance, assuming rational expectations in the portfolio choice framework is more credible when participants have access to reliable information about the competitiveness of various schools than when this information is not publicly available. In DA settings, WTT is more plausible when students face significant uncertainty about their priorities and can rank an unlimited number of schools at no cost. Conversely, when list-length restrictions or application costs are present, stability may be more compelling than truth-telling, especially in large markets where admission cutoffs are easier to predict.

In some cases, the nesting structure of identifying assumptions allows for specification tests. \citet{Fack_Grenet_He(2019)AER} exploit this feature to propose a test of WTT against stability in strict-priority DA settings.\footnote{For a generalization of this test to settings with priority uncertainty, see \citet{Che_Hahm_He(2023)WP}. \citet{Fack_Grenet_He(2019)AER} also propose a test of stability against the undominated-strategies assumption, which involves checking whether the identified set from the moment (in)equality model \cref{eq:mei} is empty, using the specification test developed by \citet{Bugni_Canay_Shi(2015)JoE}. However, current moment (in)equalities techniques tend to yield conservative confidence sets, limiting the power of such tests.} Because WTT implies stability but not vice versa, WTT imposes over-identifying restrictions under the stability framework. This allows for a Hausman-type test: under the null hypothesis that applicants are weakly truth-telling, both the WTT and stability estimators, $\hat{\bm{\theta}}_{\text{WTT}}$ and $\hat{\bm{\theta}}_{\text{Stability}}$, are consistent; under the alternative---that WTT is violated but the assignment is stable---only $\hat{\bm{\theta}}_{\text{Stability}}$ remains consistent. A test statistic can thus be constructed as:
\begin{align*}
\left(\hat{\bm{\theta}}_{\text{Stability}} - \hat{\bm{\theta}}_{\text{WTT}}\right)^{\prime}
\left(\mathbb{V}(\hat{\bm{\theta}}_{\text{Stability}}) - \mathbb{V}(\hat{\bm{\theta}}_{\text{WTT}})\right)^{-1}
\left(\hat{\bm{\theta}}_{\text{Stability}} - \hat{\bm{\theta}}_{\text{WTT}}\right)\text{,}
\end{align*}
where $\mathbb{V}(\cdot)$ denotes the covariance matrix of the estimator. Under the null, the statistic is asymptotically $\chi^{2}_{|\bm{\theta}|}$ distributed, and rejection indicates that the data are inconsistent with WTT, assuming correct model specification.

This test has been applied across a range of DA settings and has generally led to rejection of WTT in favor of stability \citep{Fack_Grenet_He(2019)AER, Arslan(2021)EmpEcon, Che_Hahm_He(2023)WP, Anderson_et_al(2024)WP, Ngo_Dustan(2024)AEJ:App}. For example, using data from 15 Swedish school districts over multiple years, \citet{Anderson_et_al(2024)WP} find that WTT is rejected in 66 of 75 datasets. A caveat, however, is that rejection of WTT in favor of stability does not definitely rule out truth-telling, as the test is conditional on the model's parametric assumptions. In the presence of misspecification, the test may simply favor the less restrictive model. Developing non-parametric methods to formally compare and select among competing identification assumptions represents a promising direction for future research.


\section{Empirical Insights from Centralized School Choice}
\label{sec:empirical_insights}

The methods reviewed in the previous section, together with researchers' growing access to rich administrative data from centralized school assignment systems, have fueled a rapidly expanding empirical literature applying the tools of market design to school choice. This section provides an overview of some of the main insights that have emerged from this body of work.

While the broader empirical literature on school choice has traditionally focused on the effects of competition between public, private, charter, and voucher schools on school productivity, student sorting, and academic outcomes \citep[see][for surveys]{Hoxby(2003)book, Bettinger(2011)Handbook, Epple_Romano_Zimmer(2016)Handbook, Urquiola(2016)Handbook, Cohodes_Parham(2021)OxREconFin}, the empirical market design literature places greater emphasis on the organization of the choice process itself. This line of research builds on theoretical advances in matching theory to study the design of centralized assignment systems, how families respond to their core features, and how these design choices affect individual welfare and distributional outcomes. Methodologically, it combines reduced-form and structural approaches, leveraging detailed administrative data from these assignment systems to estimate student preferences, assess the performance of matching mechanisms, and conduct counterfactual analyses under alternative policy designs. Our review builds on earlier syntheses by \citet{Cantillon(2017)OxRep}, \citet{Pathak(2017)chapter}, and \citet{Agarwal_Somaini(2020)AnnuRev}, while incorporating more recent empirical contributions and policy applications. Where relevant, we also draw on evidence from centralized college admissions, as findings from these settings provide useful perspectives for the design of school choice systems.

We organize the discussion around three central themes. First, we examine how families form preferences and make application decisions, with particular attention to information frictions and behavioral deviations from full rationality (\Cref{subsec:preferences_application_behavior}). Second, we explore the design trade-offs in school choice systems, including the degree of centralization, the choice of assignment mechanisms, and the use of decision-support tools (\Cref{subsec:market_design_tradeoffs}). Third, we assess the broader effectiveness of centralized school choice reforms, focusing on their market-level impacts, distributional consequences, and interactions with other household decisions such as residential choice (\Cref{subsec:school_choice_effectiveness}).

\subsection{Preferences and Application Behavior}
\label{subsec:preferences_application_behavior}

A central question in the empirical school choice literature is how families form preferences and translate them into application decisions. This section reviews the evidence along three dimensions that jointly shape how families engage with centralized assignment systems and help explain observed disparities in outcomes: (i)~the determinants of parental preferences; (ii)~the distorting effects of information gaps on decision-making; and (iii)~the influence of bounded rationality on application behavior.

\subsubsection{Drivers of Preferences and Preference Heterogeneity}

A consistent finding across empirical studies of school choice is that families' reported preferences correlate with both proximity to schools and a range of school attributes \citep[e.g.,][]{Hastings_Kane_Staiger(2009)WP, Burgess_et_al(2015)EJ, Akyol_Krishna(2017)EER, Pathak_Shi(2021)JoE, Beuermann_et_al(2023)REStud}. These attributes include not only academic performance indicators and peer composition, but also more idiosyncratic features such as school denomination and educational philosophy \citep[e.g.,][]{Borghans_Glosteyn_Zolitz(2015)BEJEAP}.

A key trade-off parents face is between school quality and geographic proximity. Because the set of available alternatives varies in quality and distance, parents must often choose between a nearby school and one with stronger academic outcomes. This trade-off is particularly pronounced for low-socioeconomic status (low-SES) families, who are more constrained in their options due to residential sorting. In England, \citet{Burgess_et_al(2015)EJ} estimate that differences in geographic constraints 
explain approximately two-thirds of the observed SES gap in the quality of chosen schools. Geography thus remains a major obstacle to equalizing access to high-performing schools.

Even when proximity and school availability are held constant, substantial heterogeneity in parental preferences persists. Numerous studies document systematic differences in school preferences by socioeconomic status and race \citep[e.g.,][]{Hastings_Kane_Staiger(2009)WP, Borghans_Glosteyn_Zolitz(2015)BEJEAP, Burgess_et_al(2015)EJ, Oosterbeek_Sovago_van_der_Klauw(2021)JPubE, Ajayi(2024)JHR, Laverde(2024)WP, Corradini_Idoux(2025)NBER}, as well as by gender \citep[e.g.,][]{Ngo_Dustan(2024)AEJ:App}. For instance, in their landmark study of school choice in the Charlotte-Mecklenburg School Public School District (CMS), North Carolina, \citet{Hastings_Kane_Staiger(2009)WP} show that higher-SES families place greater weight on test scores, whereas lower-SES and minority families balance academic quality against preferences for schools with predominantly same-race peers. Vignette experiments in New York City \citep{Corradini_Idoux(2025)NBER} confirm that families favor schools with peer groups that match their own racial background. In Amsterdam, \citet{Oosterbeek_Sovago_van_der_Klauw(2021)JPubE} estimate that such preference heterogeneity accounts for 40\% of ethnic segregation and 25\% of income segregation across secondary schools. These findings underscore a core limitation of school choice as a tool for integration: even when access is formally equalized, segregation may persist or even intensify if preferences remain stratified by social group.

A central policy concern is whether parents value a school's causal effectiveness (value-added) or merely its peer composition. If families prioritize peers over instructional quality, \citet{Barseghyan_Clark_Coate(2014)AEJPol} show that school choice reforms may fail to generate competitive pressure for academic improvement, and could even reduce overall welfare. For choice to raise productivity, it must reward schools for effective teaching, not for attracting already high-achieving students. Yet this distinction is often blurred in practice. In New York City, \citet{Abdulkadiroglu_Pathak_Schellenberg_Walters(2020)AER} find that parental preferences correlate with both peer quality and value-added, but are uncorrelated with true effectiveness once peer composition is controlled for. This suggests that parents tend to penalize schools serving lower-achieving peers regardless of the actual instructional value they provide, weakening the intended incentive structure of school choice policies.

Whether parents value school effectiveness may depend on context, particularly the information environment and the multidimensional nature of school output. Evidence from \citet{Beuermann_et_al(2023)REStud} underscores this point. Using administrative and survey data on public secondary school choice in Trinidad and Tobago, the authors estimate the causal effects of attending specific schools on a wide array of academic and non-academic outcomes, including dropout, arrests, teen motherhood, and labor market participation. They find that schools' causal effects on student performance in the high-stakes tests taken at the end of secondary school are only weakly correlated with their effects on longer-term non-academic outcomes, highlighting the limitations of relying on test scores as the sole measure of school effectiveness. Linking these causal estimates to parents' school rankings, they show that parents do value schools that raise test scores, but also prefer those that reduce crime and teen births and improve labor market outcomes. Notably, families of low-achieving students place greater weight on non-academic impacts, whereas high-achieving students' families prioritize test score gains. These results indicate that, in some settings, parents recognize and act upon a broader conception of school quality---one that extends beyond test scores---which highlights the importance of accounting for a wider range of school contributions when designing and evaluating school choice policies.

Together, these findings raise a deeper question: to what extent do application patterns reflect genuine preferences, as opposed to differences in the information families possess and how they process it when forming and reporting school rankings? This distinction has spurred a growing body of work on informational frictions and bounded rationality in school choice. 

\subsubsection{Information Frictions}
\label{subsubsec:information_frictions}

Across a wide range of contexts, school choice participants are often found to be misinformed about key aspects of the matching process, such as the characteristics of schools, the full menu of available options, and their chances of admission. These information frictions carry important implications for both the efficiency and equity of assignment outcomes.

A consistent finding in the literature is that many families lack accurate information about school attributes, which can lead to suboptimal application decisions \citep[e.g.,][]{Hastings_Weinstein(2008)QJE, Ainsworth_et_al(2023)AER, Agte_et_al(2024)NBER, Campos(2024)NBER, Corradini(2024)WP,  Ajayi_Friedman_Lucas(2025)EJ, Corradini_Idoux(2025)NBER, Larroucau_et_al(2025)NBER}. In a seminal study leveraging both natural and field experimental variation in the CMS District, \citet{Hastings_Weinstein(2008)QJE} show that providing low-income families with information about school test scores increases their likelihood of applying to higher-scoring schools and leads to improved outcomes for their children. These findings illustrate that disadvantaged households may fail to consider high-performing schools simply due to lack of information, even when those schools are accessible.

Other studies highlight persistent racial and social gaps in the quality of information families possess. In New York City, \citet{Corradini(2024)WP} documents that Black and Hispanic families apply to lower value-added high schools than White and Asian families, and attributes part of this gap to inaccurate beliefs about school quality. This is evidenced by the fact that the introduction of a letter-grade rating system shifted their choices more than those of white and Asian families, reducing racial gaps in both enrollment at high-quality schools and subsequent academic achievement. Complementing this, \citet{Corradini_Idoux(2025)NBER} find that racial disparities in application behavior also stem from differences in school awareness, i.e., the extent to which families know that certain schools exist.

Social interactions play a critical role in shaping how families access and interpret school-related information. \citet{Campos(2024)NBER} finds that parents tend to underestimate school value-added and overestimate peer quality. But when provided with accurate information, both they and their neighbors adjust their preferences in favor of higher value-added schools. This pattern reveals not only a latent demand for school effectiveness but also the powerful mediating role of social networks in transmitting information and influencing choices. In a similar vein, \citet{Corradini_Idoux(2025)NBER} show that exposure to more diverse middle school peers reduces racial gaps in high school applications by broadening families' awareness of available school options and shifting preferences toward greater peer diversity. Their evidence suggests that racial disparities in school choice are not only driven by homophily in preferences but also by unequal information environments mediated by social exposure. Together, these findings highlight that social context and peer environments can reinforce or mitigate informational inequalities, with important implications for both segregation and access to high-quality schools.

Information frictions can also arise from the difficulty of navigating a complex menu of options. When families face high search and processing costs, they may make poorly informed decisions upfront and revise their preferences only after experiencing their initial match. \citet{Narita(2018)WP} provides evidence of this dynamic in New York City's high school choice system, where families dissatisfied with their initial match can participate in a discretionary reapplication process. He finds that over 70\% of reapplicants reverse their original school rankings, with most changes reflecting genuine updates in preferences driven by learning rather than strategic misreporting. These shifts in demand undermine the welfare performance of the initial match, which assumes stable and well-informed preferences. \citet{Narita(2018)WP} further shows that a centralized reapplication mechanism could significantly outperform the existing discretionary process, highlighting the importance of aftermarket design to accommodate evolving preferences.

Alongside misperceptions of school attributes, families also hold inaccurate beliefs about admission chances \citep[e.g.,][]{Kapor_Neilson_Zimmerman(2020)AER, Agte_et_al(2024)NBER, Ajayi_Sidibe(2024)WP, Arteaga_et_al(2022)QJE}. In environments that require strategic decision-making, such errors can be particularly costly because they directly influence how applicants rank schools. In New Haven, \citet{Kapor_Neilson_Zimmerman(2020)AER} find that families frequently misperceive their admission probabilities, with average absolute deviations from rational expectations reaching 37 percentage points. A common misconception is failing to understand that placing a school lower in one's ROL reduces the chance of admission. These belief errors are not merely anecdotal: Kapor et al.\ estimate that they lead to a 24\% welfare loss compared to a best-case scenario in which families are fully informed and play the Bayes--Nash equilibrium in the game induced by the New Haven mechanism.

Importantly, such distortions also arise under strategy-proof mechanisms and can induce suboptimal behavior by distorting how applicants conduct their school search. Using data from a large-scale survey of participants in Chile's centralized school choice system, \citet{Arteaga_et_al(2022)QJE} find that applicants systematically overestimate their admission probabilities, especially when the true likelihood of placement is low. This leads to premature termination of the search process and the submission of overly narrow application portfolios. More broadly, \citet{Agte_et_al(2024)NBER} show that misperceptions about school availability, quality, and admission probabilities interact to reduce search effort and lower match quality relative to a full-information benchmark.

These findings have motivated the development of school choice platforms and interventions aimed at helping families navigate the complexity of the application process and conduct a more effective school search. We return to these policy implications and design innovations in \Cref{subsec:centralized_vs_decentralized}.

\subsubsection{Bounded Rationality}
\label{subsubsec:bounded_rationality}

While information frictions contribute to suboptimal decisions in centralized school choice, a growing body of evidence suggests that participants also deviate from fully rational decision-making in ways that go beyond informational constraints.

A longstanding concern in school choice design is that families differ in their ability to understand the assignment rules and optimize their application strategies accordingly. This heterogeneity in strategic sophistication has motivated the widespread adoption of strategy-proof mechanisms, which aim to protect participants who may lack the cognitive resources for complex reasoning. As discussed in \Cref{subsec:cardinal_welfare}, mechanisms that reward strategic play may yield inequitable outcomes when some families are better equipped to navigate the strategic environment than others. This concern is empirically illustrated by \citet{Abdulkadiroglu_Pathak_Roth_Sonmez(2006)NBER, Abdulkadiroglu_Pathak_Roth_Sonmez(2006)WP}, who show that under Boston's former IA mechanism, families differed markedly in their ability to anticipate capacity constraints. While some adjusted their submissions strategically, others failed to do so, leading to avoidable non-assignments. Further evidence comes from lab experiments by \citet{Basteck_Mantovani(2018)GEB}, who find that under manipulable mechanisms, participants with lower cognitive ability are more likely to be matched to lower-quality schools.

Notably, even in strategy-proof settings with readily accessible information, many applicants continue to misrepresent their preferences. This pattern has been documented in both laboratory and field settings (see \Cref{sec:TT_approach}), and in some cases results in meaningful welfare losses. These findings raise concerns about the real-world effectiveness of non-manipulable mechanisms and suggest that their theoretical benefits may not always be realized in practice. In response, a growing literature in behavioral market design seeks to identify the sources of preference misrepresentation in strategy-proof environments. While the full policy implications have yet to be fully explored, incorporating behavioral insights into system design is increasingly seen as essential for ensuring that school choice mechanisms are not only theoretically robust but also practically effective and equitable.

\citet{Rees-Jones_Shorrer(2023)JPE:Micro} provide a comprehensive review of this emerging literature and identify three broad categories of behavioral explanations for preference misrepresentation in strategy-proof environments. 

The first centers on incorrect processing of incentives, where participants fail to fully grasp the strategic properties of the mechanisms. This includes failures of contingent reasoning \citep{Niederle_Vespa(2023)AnnuRev}, reliance on faulty heuristics \citep{Hassidim_Romm_Shorrer(2017)AER:PP, Rees-Jones_Shorrer(2023)JPE:Micro}, and misunderstandings of continuation value \citep{Idoux(2023)WP, Wang_Wang_Ye(2025)NBER}.

The second category involves non-standard preferences. These include report-dependent preferences, where individuals are averse to rejection and derive utility from being assigned to a school they actively ranked, leading them to apply to schools with higher admission likelihoods \citep{Meisner(2022)MNSC, Kloosterman_Troyan(2024)WP}; and expectations-based reference dependence, where students form psychological attachments to schools they expect to attend and avoid ranking high-value but unlikely options to minimize potential disappointment \citep{Dreyfuss_Heffetz_Rabin(2022)AEJ:Micro, Meisner_Wangenheim(2023)JET}.

The third category stems from models of incorrect processing of admission probabilities. A common example is overconfidence in admission chances \citep{Pan(2019)GEB, Kapor_Neilson_Zimmerman(2020)AER, Arteaga_et_al(2022)QJE, Larroucau_et_al(2025)NBER}, which can lead students to under- or over-apply to certain schools. Another documented bias is correlation neglect, where applicants fail to recognize that admission probabilities are correlated across schools, causing them to omit viable fallback options from their application portfolios \citep{Rees-Jones_Shorrer_Tergiman(2024)AEJ:Micro}.

These behavioral insights are increasingly being integrated into empirical methods for preference estimation, either by allowing for certain types of mistakes in the stability-based approach \citep{Artemov_Che_He(2023)JPE:Micro, Che_Hahm_He(2023)WP}, or by embedding psychological structure into the portfolio choice models \citep{Idoux(2023)WP, Wang_Wang_Ye(2025)NBER}. Beyond estimation, they have begun to inform the design of interventions, such as simplified application platforms, nudges, and targeted information provision, aimed at mitigating welfare losses from cognitive and behavioral frictions. As discussed in the next section, this research agenda opens promising avenues for policy innovation in the design and implementation of school choice systems.

\subsection{Market Design Trade-offs}
\label{subsec:market_design_tradeoffs}

Closely linked to the extensive theoretical literature on matching mechanisms in education, a substantial body of empirical research has investigated the trade-offs involved in the design of centralized assignment systems.

\subsubsection{Centralized vs.\ Decentralized Choice}
\label{subsec:centralized_vs_decentralized}

One of the main theoretical motivations for adopting centralized, single-offer school assignment systems is their ability to reduce misallocation caused by congestion and matching frictions in decentralized and uncoordinated processes. However, empirically comparing the allocative efficiency of centralized and decentralized systems is challenging, as the latter rarely produce the kind of systemic data needed to directly assess student welfare across regimes.

\citet{Abdulkadiroglu_Agarwal_Pathak(2017)AER} address this challenge by analyzing New York City's transition from an uncoordinated high school admissions system to one based on the DA mechanism in 2003. Using data from the new centralized system to estimate student preferences, they find that it achieved approximately 80\% of the welfare gains attainable by moving from a neighborhood-based assignment to a utilitarian optimal allocation, while the uncoordinated system achieved at most 35\% of these gains. The largest improvements accrued to students who were most likely to remain unassigned in the previous system and were administratively placed after the main round. These results suggest that congestion and ad hoc placement were key sources of inefficiency in the uncoordinated system. Notably, the efficiency gains occurred in a context where families were already familiar with school choice, as both systems used a common application. This implies that even greater benefits from coordination may be possible in settings with less structured markets or less experienced participants.

Further support for centralization comes from studies of the welfare effects of off-platform options in centralized assignment mechanisms—i.e., alternatives available to participants but operating outside the centralized platform. The existing evidence comes primarily from college admissions. In Chile, \citet{Kapor_Karnani_Neilson(2024)JPE} show that off-platform programs generate negative externalities, as applicants forgo centralized offers in favor of alternatives, leading to greater waitlisting and aftermarket frictions. Exploiting a 2012 reform that expanded the centralized platform's capacity by 40\%, they find substantial welfare gains and higher graduation rates conditional on enrollment. In Albania, \citet{Vrioni(2023)WP} studies the integration of private colleges into the centralized platform and reports more nuanced effects: while the reform improved the matching outcomes of low-SES students relative to their high-SES peers, it was associated with an overall welfare loss, driven by strict list-length constraints and admission uncertainty that encouraged overly cautious application portfolios. These findings highlight the importance of relaxing list-length limits and improving information about admission probabilities to fully realize the benefits of centralization.

These lessons, however, may not be directly applicable to K--12 school choice, where institutions and participation margins differ. In the United States, many districts operate opt-in systems, whereby families default to a neighborhood school unless they actively participate in a centralized process. This design resembles an off-platform setting and raises important but understudied questions: Which schools are included on the platform, who chooses to opt in, and how do information frictions and application costs shape participation? More evidence is needed to assess how these factors affect both equity and efficiency, and to clarify the trade-offs between partial and full centralization in K--12 settings.

\subsubsection{Trade-offs between Assignment Mechanisms} 

A central insight from the theoretical literature on market design is that selecting a student assignment mechanism involves inherent trade-offs between efficiency, fairness, and strategy-proofness (see \Cref{sec:theory}). 

A prominent illustration of these trade-offs is the long-standing debate surrounding the use of IA vs.\ DA in centralized assignment systems. While IA has been widely criticized for its poor incentive properties compared to DA, its comparative efficiency remains theoretically ambiguous. As discussed in \Cref{{subsec:cardinal_welfare}}, \citet{Abdulkadiroglu_Che_Yasuda(2011)AER} show that when applicants have correlated ordinal preferences over schools and priorities are determined purely by lotteries, IA can Pareto dominate DA in terms of cardinal welfare. The intuition is that IA enables applicants to signal the intensity of their preferences through strategic play: only those with strong preferences will apply to highly competitive schools. As a result, IA tends to allocate seats to applicants who value them the most, leading to higher average welfare ex ante (i.e., before lottery tie-breakers are realized). However, this potential efficiency advantage must be weighed against the risks that IA poses to participants who lack the information or the sophistication to strategize effectively.

To inform this debate, the empirical literature has evaluated the performance of IA in real-world settings---including Amsterdam, Barcelona, Beijing, Boston, New Haven, and Wake County---focusing on its implications for efficiency and equity relative to DA. Comparing these mechanisms is empirically challenging due to the difficulty of recovering preferences from data generated under manipulable mechanisms. Researchers have addressed this challenge using a range of strategies (see \Cref{subsec:preference_estimation_IA}). 

 Studies that estimate preferences using submitted ROLs generally find that IA dominates DA in terms of average welfare (typically measured in distance units). This result holds under a range of modeling assumptions: when applicants are assumed to have rational or adaptive expectations \citep{Agarwal_Somaini(2018)ECTA}, when they are modeled as a mix of naive and sophisticated players \citep{Agarwal_Somaini(2018)ECTA, Calsamiglia_Fu_Guell(2020)JPE, Xu_Hammond(2024)EcIn}, or when restrictions are imposed based on the assumption that participants do not play dominated strategies \citep{Hwang(2015)EAI, He(2017)WP, Bayraktar_Hwang(2024)WP}. 

However, a key limitation in comparing the welfare implications of IA and DA is the sensitivity of these analyses to assumptions about applicants' beliefs and decision-making behavior. To address this, a complementary line of research incorporates survey data on beliefs and stated preferences \citep{Kapor_Neilson_Zimmerman(2020)AER, De_Haan_et_al(2023)JPE}, or relies on direct measures of strategic sophistication \citep{Dur_Hammond_Morrill(2018)AEJ:Pol}. These studies consistently find that in IA settings, applicants frequently hold inaccurate beliefs about their admission chances, and that socially disadvantaged students are less likely to behave strategically. Such belief and optimization errors are critical to the welfare comparison between IA and DA. In New Haven, \citet{Kapor_Neilson_Zimmerman(2020)AER} show that accounting for applicants' subjective belief errors reverses the welfare ranking: switching from IA to DA would increase average welfare, whereas assuming rational expectations would imply the opposite. Similarly, \citet{De_Haan_et_al(2023)JPE} find that in Amsterdam, incorporating strategic mistakes---whether due to biased beliefs or suboptimal choices given those beliefs—--reverses the welfare ordering of IA and DA relative to the rational expectations benchmark. These findings suggest that, in practice, the costs of strategic mistakes under manipulable mechanisms tend to outweigh the benefits of increased expressiveness. Such concerns have been central to the adoption of strategy-proof mechanisms in districts such as Amsterdam, Boston, New Haven, and Wake County, all of which transitioned from IA to DA. 

Theoretical work by \citet{Shorrer(2019)EC19}, \citet{Calsamiglia_Martinez-Mora_Miralles(2021)EJ}, and \citet{Akbarpour_et_al(2022)JPubE} demonstrates that manipulable mechanisms such as IA raise equity concerns not only because participants differ in strategic sophistication, but also because these mechanisms systematically favor those with better outside options (such as access to private school alternatives), thereby fostering endogenous segregation (see \Cref{subsec:cardinal_welfare}). Empirical evidence supports these predictions. For example, \citet{Calsamiglia_Guell(2018)JPubE} show that while only 4\% of schools in Barcelona are private, 14\% of families who adopt risky application strategies end up enrolling in a private school when not admitted to their preferred public option. This suggests that access to outside options is disproportionately concentrated among families willing to take greater strategic risks. Likewise, \citet{Akbarpour_et_al(2022)JPubE} document that prior to New Haven's switch from IA to DA, students with outside options were significantly more likely to rank oversubscribed, highly rated schools. This behavioral gap disappeared after the reform, reinforcing the idea that access to outside options amplifies strategic distortions under IA and exacerbates existing inequalities among participants. 

However, evidence from \citet{Terrier_Pathak_Ren(2025)AEJ:App} underscores that the equity implications of strategic behavior are highly context-dependent. In particularly competitive environments, such as local authorities in England with selective schools, IA can lead high-SES families to strategically avoid ranking overdemanded schools where they could, in fact, have gained admission. Consistent with the example in \cref{{fn:example_naive}}, this behavior lowers competition for the most sought-after schools, thereby improving access for low-SES students. Because of this ``competition-for-top-schools'' effect, replacing IA with DA can, in some settings, reduce access to high-quality options for disadvantaged groups by intensifying competition. The experience in England illustrates this risk: Terrier et al.\ find that banning IA led to a deterioration in both school quality and peer composition for low-SES students.

While much of the empirical literature has focused on IA and DA, comparatively less is known about TTC, which has rarely been implemented in practice. To our knowledge, the New Orleans Recovery School District is the only system to have adopted TTC, before switching to DA after one year. According to \citet{Pathak(2017)chapter}, this change was partly motivated by the difficulty of explaining how TTC handled priorities, which created confusion among families and administrators. Existing evaluations, based primarily on counterfactual simulations \citep{Abdulkadiroglu_Agarwal_Pathak(2017)AER, Abdulkadiroglu_et_al(2020)AER:Insights, Calsamiglia_Fu_Guell(2020)JPE, Xu_Hammond(2024)EcIn}, generally find that TTC yields only modest welfare gains relative to DA. For example, \citet{Abdulkadiroglu_Agarwal_Pathak(2017)AER} show that while TTC outperforms DA in some respects, the bulk of welfare improvements in New York City's school choice reform came from moving to a coordinated system---highlighting that the gains from centralization often exceed those from switching assignment mechanisms within a centralized framework. 

As DA has gained widespread adoption, recent research has increasingly turned to how specific design features shape its performance. One key consideration is the use of list-length constraints, which are common in practice. Beyond undermining the strategy-proofness of DA, such constraints can generate significant welfare losses by increasing the risk of non-assignment and reducing match quality. This pattern has been documented in both school choice contexts \citep{Ajayi_Sidibe(2017)WP} and college admissions \citep{Luflade(2019)WP, Hernandez-Chanto(2021)WP, Ekbatani(2022)WP}, underscoring the potential gains from relaxing list-size limits or helping students better target feasible options.

Another influential design choice concerns the implementation of DA in sequential rather than static form. A growing body of theoretical and experimental work shows that sequential versions of strategy-proof mechanisms encourage more truthful reporting \citep{Li(2017)AER, Klijn_Pais_Vorsatz(2019)GEB, Bo_Hakimov(2020)EJ} and reduce the impact of behavioral biases such as loss aversion \citep{Meisner_Wangenheim(2021)WP, Dreyfuss_Glicksohn_Heffetz_Romm(2022)NBER}. Moreover, building on an emerging literature that investigates the implications of costly information acquisition for the design of matching markets \citep[e.g.,][]{Bade(2015)TE, Immorlica_et_al(2020)WP, Artemov(2021)JET, Bucher_Caplin(2021)NBER, Chen_He(2021)JET, Chen_He(2022)EconTheory, Noda(2022)GEB, Hakimov_Kubler_Pan(2023)QE, Koh_Lim(2025)WP}, there is increasing evidence that sequential implementations of DA facilitate more effective learning by allowing students to update their admission chances and refine their preferences accordingly.

Empirical studies support this view across a variety of contexts. \citet{Luflade(2019)WP} shows that in Tunisia's sequential DA college match, where students are assigned by priority tiers over multiple rounds, disclosing program vacancies between rounds enables applicants to update their expectations about admission chances and improves final allocations. In Chile, \citet{Larroucau_et_al(2025)NBER} find that providing real-time personalized
information about college applicants' admission probabilities, alongside warning messages and cutoff scores for all programs in the centralized system---resembling a one-shot implementation of the iterative DA algorithm described by \citet{Bo_Hakimov(2022)GEB}---reduces application mistakes and enhances student matching outcomes. Using data from the German college match---where the initial stages of the college-proposing DA mechanism are implemented in real time, allowing students to hold multiple offers---\citet{Grenet_He_Kubler(2022)JPE} show that multioffer mechanisms raise expected utility by enabling applicants to learn their preferences more efficiently.

Sequential assignment procedures can also mitigate inefficiencies caused by off-platform options, which arise when students reject centralized offers in favor of outside opportunities, leaving vacancies in programs that other applicants would have preferred over their own match. Using data from France's former three-round centralized college admissions system, which reallocated offers declined by students opting for off-platform alternatives, \citet{DeGroote_Fabre_Luflade_Maurel(2025)WP} estimate a dynamic model of application and acceptance decisions that captures students' trade-off between waiting for a better offer and the cost of delayed certainty. Despite substantial waiting costs, counterfactual simulations indicate that the sequential mechanism outperforms a standard one-round alternative in terms of match quality, graduation rates, and student welfare.

\subsubsection{Choice Support} 

While the design of matching mechanisms is central to the functioning of centralized assignment systems, optimizing the mechanism alone is not sufficient. Equally important is ensuring that participants can navigate the system effectively and make informed decisions. The information frictions documented in \Cref{subsubsec:information_frictions} have spurred growing interest in integrating informational interventions into school choice and college admissions platforms. These developments raise important questions for market design: what kinds of information best support families' decision-making, and how should that information be communicated?

A substantial empirical literature highlights the value of interventions that inform families about key school characteristics. Field experiments that provide accessible data on test scores \citep{Hastings_Weinstein(2008)QJE, Andrabi_Das_Khwaja(2017)AER} or graduation rates \citep{Corcoran_et_al(2018)NBER, Cohodes_et_al(2022)NBER} show that families are generally responsive to such information, shifting applications toward higher-performing schools. In several cases, these shifts translate into improved student outcomes. For example, \citet{Hastings_Weinstein(2008)QJE} find that access to school performance data increases academic achievement, while in Pakistan, \citet{Andrabi_Das_Khwaja(2017)AER} document gains in both primary school enrollment and student test scores.

Much of the early focus in this literature was on raw performance metrics, rather than school effectiveness as measured by value-added. This distinction matters, since efficiency gains in school choice hinge more on rewarding instructional quality than on sorting by peer characteristics. Moreover, because school value-added is only weakly correlated with the racial and socioeconomic composition of schools \citep{Angrist_Hull_Pathak_Walters(2024)AER:Insights}, providing families with effectiveness data may also help mitigate segregation in enrollment patterns. Recent evidence supports this potential. In Romania, \citet{Ainsworth_et_al(2023)AER} show that access to accurate value-added data leads families, particularly those of low-achieving students, to make higher-quality school choices. \citet{Campos(2024)NBER} finds similar patterns, with value-added information more influential when disseminated through social networks. Still, \citet{Ainsworth_et_al(2023)AER} emphasize that even under full information, families place substantial weight on peer composition and curriculum, limiting the extent to which they prioritize school effectiveness. In their setting, improved information closed only about one-quarter of the gap between actual and optimal school choices in terms of value-added.

Beyond value-added metrics, another promising line of research focuses on supporting applicants through real-time feedback on admission probabilities. Using experimental and quasi-experimental designs in the Chilean and New Haven centralized school choice systems, \citet{Arteaga_et_al(2022)QJE} show that ``smart matching platforms,'' which provide live risk assessments based on the evolving distribution of submitted applications, significantly reduce non-placement rates and increase the value-added of students' final assignments. Similarly, \citet{Larroucau_et_al(2025)NBER} find that personalized admission probability feedback in Chilean college admissions reduces application errors and substantially improves matching outcomes, particularly for students at risk of going unmatched or undermatched. Moreover, using randomized interventions and administrative data from Chile's nationwide
school choice process, \citet{Agte_et_al(2024)NBER} show that combining personalized feedback on admission chances with information on the quality of nearby schools leads students to match to higher-quality schools and closes the  gap between low- and high-SES students.

Providing performance feedback also shows promise. In Mexico, \citet{Bobba_Frisancho_Pariguana(2023)WP} show that offering students from low socioeconomic backgrounds individualized information about their academic standing helps align their school choices and placement with their demonstrated abilities. Three years later, students who received this feedback were 4~percentage points more likely to graduate from high school.

Yet while these interventions can improve decision-making and match quality, they do not necessarily reduce opportunity gaps. Studies by \citet{Corcoran_et_al(2018)NBER} and \citet{Cohodes_et_al(2022)NBER} show that both advantaged and disadvantaged students benefit from improved information, and in some cases, more advantaged students benefit disproportionately. \citet{Corradini(2024)WP} emphasizes the importance of general equilibrium considerations in assessing the distributional consequences of information. In her structural analysis of New York City's 2007 introduction of a high school letter-grade system, she finds that enrollment outcomes improved for Black and Hispanic students largely because white and Asian families---due to their stronger preferences for peer demographics---were less responsive to the new quality ratings. However, her counterfactual simulations reveal that if all families responded primarily to academic quality and commute time, the distributional effects would reverse: the quality of school offers would improve for white and Asian students, while minority students would receive lower-quality offers due to intensified competition for high-rated schools. To mitigate this displacement risk, Corradini proposes a targeted information design that releases coarse quality ratings (e.g., letter grades) while providing more precise signals only for schools that are disproportionately chosen by disadvantaged families. Such a design can help preserve access by dampening excess competition for the highest-performing schools.

These findings underscore the need to account for general equilibrium effects when assessing the welfare consequences of large-scale information interventions. Addressing this challenge, \citet{Allende_Gallego_Neilson(2019)WP} embed a field experiment targeting public Pre-K families in Chile within a structural model of school choice and school-side responses. Their counterfactual simulations show that while capacity constraints cut in half the average quality gains induced by information provision for low-SES families, endogenous responses by schools---in terms of quality improvement, expansion, and pricing---partially offset these effects, resulting in a net positive average treatment effect.

Finally, it is important to recognize that even well-designed informational interventions may fall short when other structural barriers constrain behavior. This is especially true in developing-country contexts, where geographic, financial, and logistical challenges can limit families' ability to act on improved information. In Ghana, for instance, \citet{Ajayi_Friedman_Lucas(2025)EJ} show that despite receiving detailed information on high school quality and selectivity, many students ultimately fail to matriculate due to constraints such as travel distance, cost, or prior commitments made before the information arrived.

In sum, information interventions have considerable promise for improving match quality and expanding access to high-performing schools, particularly when they incorporate value-added measures and personalized guidance. But their impact depends critically on thoughtful design, attention to equilibrium effects, and complementary policies that address structural barriers. Understanding how to best deploy these interventions remains an important frontier for research and policy.

\subsection{School Choice Effectiveness}
\label{subsec:school_choice_effectiveness}

Having examined how families form school preferences and how market design can help improve the allocation of students to schools, we now turn to the broader question of school choice effectiveness: to what extent do school choice policies expand access to high-quality education, and for whom? Addressing this requires a market-level perspective that considers not only the behavior of families, but also schools' supply-side responses, the distributional consequences across socioeconomic groups, and the interactions between overlapping markets. These include both vertical linkages across educational levels (e.g., primary, secondary, and higher education) and horizontal linkages with other domains, such as housing.

\subsubsection{Market-Level Effects of School Choice and Supply-Side Responses} 

A central rationale for moving away from traditional neighborhood-based assignment toward centralized school choice is its potential to enhance both the efficiency of student--school matches and the incentives faced by schools. By allowing families to select from a wider range of options, choice policies aim to better align students with schools that fit their needs and preferences, thereby improving allocative efficiency. At the same time, expanding choice introduces competitive pressure that may encourage schools to raise academic performance, adopt innovations, and respond more effectively to parental demand \citep{Friedman(1955)chapter, Hoxby(2000)AER, Hoxby(2003)chapter}. From this dual perspective, school choice offers the prospect not only of more effective matching but also of dynamic improvements in school quality and overall system performance.

Empirically identifying these market-level effects, however, remains a significant challenge, largely due to the difficulty of observing credible counterfactuals across different market structures. Much of the existing evidence therefore focuses either on the individual-level impacts of school choice \citep[e.g.,][]{Cullen_Jacob_Levitt(2006)ECMA, Abdulkadiroglu_et_al(2011)QJE, Deming_Hastings_Kane_Staiger(2014)AER, Abdulkadiroglu_Pathak_Walters(2018)AEJ:App}, or on cross-district and cross-municipality comparisons \citep{Hoxby(2000)AER, Hoxby(2003)chapter, Hsieh_Urquiola(2006)JPubE, Rothstein(2006)AER, Rothstein(2007)AER}, which yield mixed results. A related body of research examines competition across school sectors---such as Catholic, charter, or voucher schools---which often operate alongside traditional public schools and compete for students across district lines \citep[e.g.,][]{Neal(1997)JOLE, Card_Dooley_Payne(2010)AEJ:App}. While informative, these studies typically do not capture the systemic, equilibrium effects of introducing centralized school choice within an integrated public school system.

\citet{Campos_Kearns(2024)QJE} directly address this gap by exploiting a natural experiment within the Los Angeles Unified School District's Zones of Choice (ZOC) program. Unlike much of the existing literature, which focuses on individual-level impacts or cross-district comparisons, this study evaluates the market-level consequences of introducing centralized choice within a single, unified system. In 2012, approximately one-third of the district was restructured into local choice zones, while the remainder continued to operate under traditional neighborhood-based assignment. Using a difference-in-differences strategy embedded in a structural model of school competition and family preferences, the authors find that the ZOC program led to substantial gains in both student achievement and college enrollment. The largest improvements occurred in previously low-performing schools, driven not only by improved student--school matching but also by increases in school value-added, consistent with quality improvements spurred by competitive pressure. The study further documents a reduction in between-neighborhood inequality in educational outcomes and shows that families in high-choice settings tend to prioritize school effectiveness over peer composition. This latter finding may partly reflect the relative demographic homogeneity across schools within ZOC zones, where student populations are predominantly from socially disadvantaged backgrounds, which likely reduces parental concerns about peer composition. Taken together, these results offer some of the strongest empirical evidence to date that centralized school choice can generate meaningful gains in both allocative efficiency and school quality via supply-side responses.

While these findings are compelling, further research is needed to evaluate how broadly they generalize across districts with differing demographic and institutional environments. The increasing adoption of centralized assignment systems presents new opportunities in this regard, enabling researchers to exploit variation in policy implementation and leverage detailed administrative data to gain deeper insight into both parental preferences and the incentive structures shaping supply-side responses.

\subsubsection{Distributional Effects}

As illustrated in \Cref{sec:toy}, a central motivation for school choice reforms is to decouple educational opportunity from residential location, with the aim of promoting more equitable access to high-quality schools. Proponents argue that giving families the freedom to choose schools can reduce socioeconomic and racial segregation and potentially close achievement gaps by expanding access to better educational environments. Yet empirical evidence on these distributional effects remains mixed, highlighting the need to better understand how school choice systems can be designed to more effectively achieve their equity objectives.

Geographic constraints continue to pose a major barrier to equitable access. An illustrative case is provided by \citet{Laverde(2024)WP}, who shows that in Boston's pre-kindergarten system, Black students are disproportionately assigned to lower-rated schools, even under a choice-based assignment. Cross-racial gaps in assigned school quality are no smaller than under a hypothetical neighborhood-based allocation, suggesting that commuting burdens, especially for Black families, limit the effectiveness of choice systems in expanding access to high-quality options. \citet{Ajayi(2024)JHR} documents similar dynamics in a developing-country context: in Ghana, disadvantaged students are more likely to apply to lower-performing secondary schools than more advantaged peers with similar exam scores. This pattern is partly driven by longer travel distances to high-quality schools, coupled with a stronger preference for proximity among disadvantaged families. Taken together, these studies underscore that residential sorting and mobility costs can severely constrain the equalizing potential of school choice. Formal equalization of access may have limited effects on enrollment outcomes or segregation unless such structural barriers are explicitly addressed.

Within the constraints imposed by geography, priority rules nonetheless play a central role in affecting distributional outcomes \citep[e.g.,][]{Kessel_Olme(2018)WP, Burgess_Greaves_Vignoles(2020)report, Nguyen(2021)WP, Oosterbeek_Sovago_van_der_Klauw(2021)JPubE, Calsamiglia_Miralles(2023)IER, Idoux(2023)WP, Gortazar_Mayor_Montalban(2023)ECOEDU, Pariguana_Ortega-Hesles(2025)WP}. In New York City, \citet{Idoux(2023)WP} estimates that roughly half of the racial segregation among public middle schools can be attributed to admission criteria---primarily those based on geographic proximity and, to a lesser extent, academic screening---with the remainder explained by parental preferences and residential sorting. While these findings point to the potential for admissions policies to reduce segregation, the analysis also stresses that their effectiveness depends critically on how families respond. Evidence from two local middle school reforms in NYC, which curtailed academic screening and expanded access to selective schools, illustrates the importance of behavioral reactions: Idoux shows that in both cases, families adjusted their application strategies in ways that reinforced the reforms' integrative effects, resulting in school offers that were three times less segregated. However, these gains were partially offset by increased exit from the public system among white and high-income families. These findings highlight the importance of anticipating demand-side responses when designing equity-oriented admissions policies, as family reactions may amplify or undermine policy goals.

One potential tool for promoting equity through admission criteria is the use of priority rules that give preferential treatment to disadvantaged groups, such as quotas and reserves. While such policies aim to expand access for underrepresented students, theoretical insights show that, under certain combinations of preferences and priorities, instruments like majority quotas or minority reserves can unintentionally make some disadvantaged students worse off by assigning them to less preferred schools (see \Cref{subsec:quotas} for details). A separate concern arises from the ``mismatch hypothesis,'' which posits that students placed in more selective schools or colleges than they would otherwise attend may struggle academically if the environment is not well aligned with their prior preparation \citep{Sander(2004)StanfordLawReview, Sander_Taylor(2012)Book, Arcidiacono_Lovenheim(2016)JEL}. 

These concerns highlight the need for empirical evaluation to assess whether such policies achieve their equity objectives in practice. In higher education, evidence from large-scale implementations of quota-based affirmative action in India and Brazil suggests broadly positive results. Studies of engineering schools in India \citep{Bagde_Epple_Taylor(2016)AER} and federal universities in Brazil \citep{Francis-Tan_Tannuri-Pianto(2018)JDE, Mello(2022)AEJ:Pol, Barahona_Dobbin_Otero(2023)WP} find that these policies significantly improved access for marginalized students without compromising overall efficiency. In particular, \citet{Barahona_Dobbin_Otero(2023)WP} show that a nationwide policy reserving 50\% of seats for disadvantaged groups substantially boosted their enrollment in selective programs, with no evidence of negative spillovers for non-targeted students. 

In contrast, evidence from K-12 settings is more mixed. For instance, studies of Chicago's place-based affirmative action policy in elite public high schools document unintended negative effects on low-SES students' academic performance and four-year college enrollment. Possible explanations include within-school rank effects \citep{Barrow_Sartain_de-la-Torre(2020)AEJ:App} and the diversion of students from higher-performing alternatives such as charter schools \citep{Angrist_Pathak_Zarate(2023)JPubE}. These findings caution that the success of affirmative action policies depends on the broader institutional context and the set of counterfactual options available to targeted students. This calls for further empirical investigation across a wider range of implementation settings, especially in school choice systems where the use of affirmative action policies within centralized assignment mechanisms has received less attention than in higher education.

While mobility costs and admission criteria are important levers for promoting integration through school choice reforms, empirical evidence suggests that even when such reforms lead to greater integration, they may not substantially reduce racial or socioeconomic achievement gaps. This concern is illustrated by \citet{Angrist_Gray-Lobe_Pathak_Idoux(2024)WP}, who study the effects of non-neighborhood school enrollment in Boston and New York City. Using conditionally randomized school offers as instruments, they find that although commuting to non-neighborhood schools reduces minority students' isolation and increases their enrollment in schools with fewer minority peers, these integrative effects translate into minimal gains in test scores. In New York City, they even observe negative effects on on-time high school graduation and college enrollment. These disappointing outcomes are attributed to two main factors: the destination schools offer little added value relative to nearby alternatives, and in New York, the additional travel burden appears to negatively affect longer-term outcomes, independent of school quality.

These findings highlight the limitations of relying on school mobility alone as a lever for equity, and point to the need for policies that both raise the quality of schooling options and help families identify schools that offer genuine value. In this context, information frictions emerge as a critical constraint. Building on this idea, \citet{Lee_Son(2024)WP} use data from New York City's high school choice system to show how limited awareness of available options and inaccurate beliefs about admission chances influence students' application behavior. Their analysis reveals that such frictions disproportionately affect Black and Hispanic students, limiting their access to high-performing schools and reducing the overall welfare gains from school choice. Counterfactual simulations suggest that personalized school recommendations, designed to align with families' preferences and awareness, could recover between 20\% and 36\% of the lost welfare. These findings are consistent with the previously discussed evidence from \citet{Allende_Gallego_Neilson(2019)WP} and \citet{Corradini(2024)WP}, which highlight the potential for well-designed information interventions to shift demand toward more effective schools, particularly among disadvantaged students.

\subsubsection{Dynamic Effects of School Choice and Interactions with Other Markets}

A growing body of empirical research is opening up a new agenda by moving beyond static models of school choice to recognize the dynamic and interdependent nature of families' decisions. Rather than selecting schools in isolation, households make a series of interconnected choices over time and across markets, considering how residential location shapes access to schools, how current school choices influence future academic options, and how public school choice options interact with alternatives such as private schooling. These interdependencies call for a dynamic framework to more fully capture the welfare and distributional consequences of school choice reforms.

One important line of research investigates the interaction between school choice and residential location. While most empirical models in the school choice literature treat housing location as fixed, households often select neighborhoods strategically, in part to gain access to preferred schools. Several theoretical studies have formalized this joint decision-making process, embedding school choice within models of neighborhood sorting \citep[e.g.,][]{Xu(2019)JPET, Avery_Pathak(2021)AER, Grigoryan(2023)WP}. Empirical work on this interaction, however, has only begun to take shape.

A notable contribution is \citet{Park_Hahm(2023)WP}, who develop a model in which both residential location and the decision to pursue alternative schooling options are endogenously determined. Using boundary discontinuity variation in New York City, they show that up to 30\% of the racial gap in school quality reflects differences in residential choices driven by the expected utility of accessible schools, with non-minority households being more sensitive to the average test score of nearby schools than minority households. These endogenous location decisions have important implications for both policy design and empirical identification. On the methodological front, the study cautions against treated school proximity as orthogonal to unobserved preferences. In their setting, families choose where to live in part to improve their children's chances of getting into desirable schools. If this self-selection is ignored, researchers may mistakenly attribute too much importance to travel distance in determining school preferences. Indeed, the authors find that failing to account for this leads to a 15\% overestimate of commuting costs, because proximity affects both travel time and the likelihood of admission. From a policy perspective, Park and Hahm's results suggest that residential mobility can substantially undermine the effectiveness of admissions reforms. In a counterfactual exercise, they show that replacing priority-based admissions with a pure lottery at high-performing Manhattan schools would reduce racial disparities in assignments by 7\%, but this impact is cut in half once location responses are taken into account. Ignoring the interaction between school choice and residential sorting thus risks overstating the equity gains from centralized assignment reforms.

\citet{Agostinelli_Luflade_Martellini(2024)NBER} reach similar conclusions in a different context. They develop a spatial equilibrium model of residential sorting and school choice in Wake County, North Carolina, to evaluate two common equity-oriented policies: expanding the set of school options available to disadvantaged families (a ``place-based'' approach) and offering housing vouchers to help them relocate to more affluent neighborhoods (``people-based''). While both policies modestly improve peer environments for targeted students, they result in aggregate welfare losses. These losses stem in part from negative spillovers: families who previously paid to reside near high-performing schools experience a decline in peer quality and face limited options to avoid it---especially low-income families just able to afford these neighborhoods. The model also sheds light on how geography and zoning regulations affect who bears the costs of integration: schools in remote or tightly zoned areas are largely insulated from new inflows, shifting the integration burden onto a narrower set of schools. These results highlight the importance of accounting for neighborhood structure and general equilibrium effects when evaluating the broader consequences of school choice expansions.

Another strand of work explores the dynamic linkages between educational stages, showing that earlier school choices influence later ones, and that policies can leverage this sequentiality to enhance their overall impact. Building on this idea, \citet{Hahm_Park(2024)WP} develop a dynamic framework to analyze how middle school assignments affect subsequent high school choices and outcomes. Using quasi-random variation from New York City's middle school admissions process, they show that students who attend top-rated middle schools are more likely to subsequently apply to and be assigned to high-performing high schools. Through structural modeling, the authors disentangle two mechanisms: a ``priority channel,'' through which middle school affects a student's admissions chances, and an ``application channel,” through which it changes preferences and application behavior. Strikingly, the application channel accounts for 80\% of the observed effect on high school quality. Their counterfactual simulations show that eliminating geography-based eligibility rules for top middle schools would improve both efficiency and equity at the high school level, and that these effects are amplified by up to 50\% when combined with parallel reforms in high school admissions. These results suggest that policies targeting earlier educational stages can generate meaningful downstream benefits, even when reforms at later stages appear less effective in isolation.

A further dimension of dynamic interaction involves the relationship between centralized public school choice systems and the private education sector. As public choice systems expand, private providers may respond by adjusting tuition, admissions criteria, or even their market participation. \citet{Dinerstein_Smith(2021)AER} offer compelling evidence of such responses in the context of a large-scale public school funding reform in New York City: increased resources for public schools triggered a significant contraction in private school enrollment. While this reform was not directly related to school choice, it illustrates how public sector policies can reshape the education market as a whole. Previous studies have documented private school responses to voucher programs \citep{Hsieh_Urquiola(2006)JPubE, Bohlmark_Lindahl(2015)Economica} and charter school growth \citep{Chakrabarti_Roy(2016)JUE}, but there is limited evidence on how private providers react specifically to the expansion of centralized public school choice. This gap presents a promising direction for future research, particularly given its implications for equity and competition in education markets.

Finally, school choice policies may affect other connected markets, such as the labor market for teachers. By shifting enrollment patterns and redistributing students across schools, these policies can alter teacher demand across schools, potentially influencing teacher sorting, compensation, and quality. Although empirical evidence in this area remains limited, it represents a fruitful avenue for future research with important implications for the long-term effectiveness of school choice systems. A fuller understanding of these broader market interactions will be essential to informed policy design.

\newpage%

\newcommand{\study}[9]{%
  \multirow{1}{\unita}{#1} &%
  \multirow{1}{\unitb}{\centering #2} &%
  \multirow{1}{\unitc}{\centering #3} &%
  \multirow{1}{\unitd}{\centering #4} &%
  \multirow{1}{\unite}{\centering #5} &%
  \multirow{1}{\unitf}{\centering #6} &%
  \multirow{1}{\unitg}{\centering #7} &%
  \multirow{1}{\unith}{\centering #8} &%
  \multirow{1}{\uniti}{\centering #9}  %
}

\newcommand{\separator}{\addlinespace[3pt]\midrule\addlinespace[3pt]}

\renewcommand{\arraystretch}{0.9}
\makeatletter%
\global\pdfpageattr\expandafter{\the\pdfpageattr/Rotate 90}%
\makeatother%
\begin{sidewaystable}[t!]
\setlength\tabcolsep{3pt}
{\fontsize{6pt}{8pt}\selectfont
\begin{threeparttable}
\caption{Preference Estimation in Centralized School Choice and College Admissions: Survey of Empirical Studies}
\label{tab:survey}
\newcommand\unita{2.1\linewidth}
\newcommand\unitb{0.7\linewidth}
\newcommand\unitc{1.3\linewidth}
\newcommand\unitd{0.8\linewidth}
\newcommand\unite{0.5\linewidth}
\newcommand\unitf{1.2\linewidth}
\newcommand\unitg{1\linewidth}
\newcommand\unith{0.9\linewidth}
\newcommand\uniti{0.5\linewidth}
\begin{tabularx}{\textwidth}{@{}
>{\raggedright\arraybackslash\hsize=2.1\hsize}X%
>{\raggedright\arraybackslash\hsize=0.7\hsize}X%
>{\raggedright\arraybackslash\hsize=1.3\hsize}X%
>{\raggedright\arraybackslash\hsize=0.8\hsize}X%
>{\raggedright\arraybackslash\hsize=0.5\hsize}X%
>{\raggedright\arraybackslash\hsize=1.2\hsize}X%
>{\raggedright\arraybackslash\hsize=1\hsize}X%
>{\raggedright\arraybackslash\hsize=0.9\hsize}X%
>{\raggedright\arraybackslash\hsize=0.5\hsize}X%
}
\toprule
\study%
{\textbf{Study}}%
{\centering \textbf{Matching market}}%
{\centering \textbf{Country or city}}%
{\centering \textbf{Assignment mechanism}}%
{\centering \textbf{List-length restrictions}}%
{\centering \textbf{Priorities}}%
{\centering \textbf{Empirical approach(es)}}%
{\centering \textbf{Model(s)}}%
{\centering \textbf{Estimation method(s)}}
\\
\\
\addlinespace[2pt]
\midrule
\addlinespace
\multicolumn{9}{@{}c}{\textbf{\textit{Panel A. Immediate Acceptance and Other Strategic Mechanisms}}}\\
\addlinespace
\midrule
\addlinespace[3pt]
\study%
{\citet{Agarwal_Somaini(2018)ECTA}}%
{High S}%
{Cambridge (MA)}%
{Cambridge$^a$}%
{Yes}
{Coarse+lottery}%
{WTT/Porfolio}%
{Probit}%
{Gibbs}
\\
\separator
\study%
{\citet{Hastings_Kane_Staiger(2009)WP}}
{Elem.\ S}%
{Charlotte-Meckl.\ (NC)}%
{Charlotte$^b$}%
{Yes}%
{Coarse+lottery}%
{WTT}%
{Mixed logit}%
{MSL}
\\
\separator
\study%
{\citet{Campos(2024)NBER}}%
{High S}%
{Los Angeles (CA)}%
{IA}%
{No}
{Coarse+lottery}%
{WTT}%
{Logit}%
{MLE}
\\
\separator
\study%
{\citet{Campos_Kearns(2024)QJE}}%
{High S}%
{Los Angeles (CA)}%
{IA}%
{No}
{Coarse+lottery}%
{WTT/Portfolio}%
{Logit/Probit}%
{MLE/Gibbs}
\\
\separator
\study%
{\citet{Kapor_Neilson_Zimmerman(2020)AER}}%
{High S}%
{New Haven (CT)}%
{New Haven$^b$}%
{Yes}
{Coarse+lottery}
{Porfolio}%
{Probit}%
{Gibbs}
\\
\separator
\study%
{\citet{Xu_Hammond(2024)EcIn}}%
{Mid./High S}%
{Wake County (NC)}%
{IA}%
{Yes}
{Coarse+lottery}%
{Portfolio}%
{Probit}%
{Gibbs}
\\
\separator
\study%
{\citet{De_Haan_et_al(2023)JPE}}%
{Mid.\ S}%
{Amsterdam (pre 2015)}%
{Adaptive IA$^c$}%
{Yes}
{Coarse+lottery}%
{Portfolio}%
{Logit}%
{MSL}
\\
\separator
\study%
{\citet{Calsamiglia_Fu_Guell(2020)JPE}}%
{Elem.\ S}%
{Barcelona}%
{IA}%
{Yes}
{Coarse+lottery}%
{Portfolio}%
{Probit}%
{MSL}
\\
\separator
\study%
{\citet{He(2017)WP}}%
{Mid.\ S}%
{Beijing}%
{IA}%
{No}
{Lottery}
{WTT/Portfolio/UD}%
{Logit}%
{MLE}
\\
\separator
\study%
{\citet{Hwang(2015)EAI}; \citet{Bayraktar_Hwang(2024)WP}}%
{High S}%
{Seoul}%
{IA}%
{Yes}
{Coarse+lottery}
{UD}%
{Logit/Probit}%
{MEI}
\\
\addlinespace[3pt]
\midrule
\addlinespace
\multicolumn{9}{@{}c}{\textbf{\textit{Panel B. Deferred Acceptance}}}\\
\addlinespace
\midrule
\addlinespace[3pt]
\study%
{\citet{Laverde(2024)WP}}%
{Pre-K}%
{Boston}%
{DA}%
{No}%
{Coarse+lottery}%
{STT}%
{Logit}%
{MLE}
\\
\separator
\study%
{\citet{Pathak_Shi(2021)JoE}}%
{Elem.\ S}%
{Boston}%
{DA}%
{No}%
{Coarse+lottery}%
{WTT}%
{Mixed logit}%
{Gibbs/HMC}
\\
\separator
\study%
{\citet{Idoux(2023)WP}}%
{Mid.\ S}%
{NYC}%
{DA}%
{Yes}
{Coarse+lottery/Strict}%
{Portfolio}%
{Probit}%
{Gibbs}
\\
\separator
\study%
{\citet{Park_Hahm(2023)WP}}%
{Mid.\ S}%
{NYC}%
{DA}%
{No$^d$}
{Coarse+lottery/Strict}%
{WTT}%
{Mixed logit}%
{EM}
\\
\separator
\study%
{\citet{Abdulkadiroglu_Agarwal_Pathak(2017)AER}}%
{High S}%
{NYC}%
{DA}%
{Yes}
{Coarse+lottery/Strict}%
{WTT}%
{Probit}%
{Gibbs}
\\
\separator
\study%
{\citet{Narita(2018)WP}}%
{High S}%
{NYC}%
{DA}%
{Yes}
{Coarse+lottery/Strict}%
{WTT}%
{Logit}%
{MSL}
\\
\separator
\study%
{\citet{Abdulkadiroglu_Pathak_Schellenberg_Walters(2020)AER}}%
{High S}%
{NYC}%
{DA}%
{Yes}
{Coarse+lottery/Strict}%
{WTT}%
{Logit}%
{MLE}
\\
\separator
\study%
{\citet{Che_Hahm_He(2023)WP}}%
{High S}%
{NYC}%
{DA}%
{Yes}
{Coarse+lottery/Strict}%
{WTT/Stability}%
{Probit}%
{Gibbs}
\\
\separator
\study%
{\citet{Nguyen(2021)WP}}%
{High S}%
{NYC}%
{DA}%
{Yes}
{Coarse+lottery/Strict}%
{WTT}%
{Logit}%
{MLE}
\\
\separator
\study%
{\citet{Corradini(2024)WP}}%
{High S}%
{NYC}%
{DA}%
{Yes}
{Coarse+lottery/Strict}%
{WTT}%
{Logit}%
{MLE}
\\
\separator
\study%
{\citet{Lee_Son(2024)WP}}%
{High S}%
{NYC}%
{DA}%
{Yes}
{Coarse+lottery/Strict}%
{Portfolio}%
{Probit}%
{GMM}
\\
\separator
\study%
{\citet{Hahm_Park(2024)WP}}%
{Mid./High S}%
{NYC}%
{DA}%
{No/Yes}%
{Coarse+lottery/Strict}%
{Stability}%
{Mixed logit}%
{EM}
\\
\separator
\study%
{\citet{Oosterbeek_Sovago_van_der_Klauw(2021)JPubE}}%
{Mid.\ S}%
{Amsterdam (post 2015)}%
{DA}%
{No}
{Coarse+lottery}%
{WTT}%
{Logit}%
{MLE}
\\
\separator
\study%
{\citet{Agte_et_al(2024)NBER}}%
{Elem.\ S}%
{Chile}%
{DA}%
{Yes}
{Strict}%
{WTT$^e$}%
{Probit}%
{Gibbs}
\\
\separator
\study%
{\citet{Burgess_et_al(2015)EJ}}%
{Elem.\ S}%
{England}%
{DA/IA}
{Yes}
{Strict}%
{Stability}%
{Logit}%
{MLE}
\\
\separator
\study%
{\citet{Ajayi(2024)JHR}}%
{High S}%
{Ghana}%
{DA}%
{Yes}
{Strict}%
{Portfolio/Stability}%
{Logit}%
{MLE}
\\
\separator
\study%
{\citet{Ajayi_Sidibe(2024)WP}}%
{High S}%
{Ghana}%
{DA}%
{Yes}
{Strict}%
{Portfolio}%
{Probit}%
{MSM}
\\
\separator
\study%
{\citet{Bobba_Frisancho_Pariguana(2023)WP}}%
{High S}%
{Mexico City}%
{SD}%
{Yes}
{Strict}%
{Stability}%
{Logit}%
{MLE}
\\
\separator
\study%
{\citet{Ngo_Dustan(2024)AEJ:App}}%
{High S}%
{Mexico City}%
{SD}%
{Yes}
{Strict}%
{WTT/Stability}%
{Logit}%
{MLE}
\\
\separator
\study%
{\citet{Pariguana_Ortega-Hesles(2025)WP}}%
{High S}%
{Mexico City}%
{SD}%
{Yes}
{Strict}%
{Stability}%
{Logit}%
{MLE}
\\
\addlinespace[2pt]
\bottomrule
\addlinespace[3pt]
\multicolumn{9}{@{}r}{\footnotesize (Continued on next page)}
\end{tabularx}
\end{threeparttable}}%
\end{sidewaystable}%
\clearpage%
\global\pdfpageattr\expandafter{\the\pdfpageattr/Rotate 0}

\newpage%
\addtocounter{table}{-1}
\renewcommand{\arraystretch}{0.91}
\makeatletter%
\global\pdfpageattr\expandafter{\the\pdfpageattr/Rotate 90}%
\makeatother%
\begin{sidewaystable}[t!]
\setlength\tabcolsep{3pt}
{\fontsize{6pt}{8pt}\selectfont
\begin{threeparttable}
\caption{Preference Estimation in Centralized School Choice and College Admissions: Survey of Empirical Studies (Continued)}
\newcommand\unita{2.1\linewidth}
\newcommand\unitb{0.7\linewidth}
\newcommand\unitc{1.3\linewidth}
\newcommand\unitd{0.8\linewidth}
\newcommand\unite{0.5\linewidth}
\newcommand\unitf{1.2\linewidth}
\newcommand\unitg{1\linewidth}
\newcommand\unith{0.9\linewidth}
\newcommand\uniti{0.5\linewidth}
\begin{tabularx}{\textwidth}{@{}
>{\raggedright\arraybackslash\hsize=2.1\hsize}X%
>{\raggedright\arraybackslash\hsize=0.7\hsize}X%
>{\raggedright\arraybackslash\hsize=1.3\hsize}X%
>{\raggedright\arraybackslash\hsize=0.8\hsize}X%
>{\raggedright\arraybackslash\hsize=0.5\hsize}X%
>{\raggedright\arraybackslash\hsize=1.2\hsize}X%
>{\raggedright\arraybackslash\hsize=1\hsize}X%
>{\raggedright\arraybackslash\hsize=0.9\hsize}X%
>{\raggedright\arraybackslash\hsize=0.5\hsize}X%
}
\toprule
\study%
{\textbf{Study}}%
{\centering \textbf{Matching market}}%
{\centering \textbf{Country or city}}%
{\centering \textbf{Assignment mechanism}}%
{\centering \textbf{List-length restrictions}}%
{\centering \textbf{Priorities}}%
{\centering \textbf{Empirical approach(es)}}%
{\centering \textbf{Model(s)}}%
{\centering \textbf{Estimation method(s)}}
\\
\\
\addlinespace[2pt]
\midrule
\addlinespace
\multicolumn{9}{@{}c}{\textbf{\textit{Panel B. Deferred Acceptance (continued)}}}\\
\addlinespace
\midrule
\addlinespace[3pt]
\study%
{\citet{Fack_Grenet_He(2019)AER}}%
{High S}%
{Paris}%
{DA}%
{Yes}
{Strict}%
{WTT/Stability/UD}%
{Logit}%
{MLE/MEI}
\\
\separator
\study%
{\citet{Kessel_Olme(2018)WP}}%
{Elem.\ S}%
{Botkyrka (Sweden)}%
{DA}%
{Yes}
{Strict}%
{WTT}%
{Mixed logit}%
{MSL}
\\
\separator
\study%
{\citet{Anderson_et_al(2024)WP}}%
{Elem./Mid.\ S}%
{Sweden}%
{DA}%
{Yes/No}%
{Coarse+lottery/Strict}%
{WTT/Stability}%
{Logit}%
{MLE}
\\
\separator
\study%
{\citet{Akyol_Krishna(2017)EER}}%
{High S}%
{Turkey}%
{SD}
{Yes}
{Strict}%
{Stability}%
{Nested logit}%
{MSL}
\\
\separator
\study%
{\citet{Beuermann_et_al(2023)REStud}}%
{Mid.\ S}%
{Trinidad and Tobago}%
{SD}
{Yes}
{Strict}%
{Portfolio}%
{logit}%
{MLE}
\\
\separator
\study%
{\citet{Vrioni(2023)WP}}%
{Higher Ed}%
{Albania}%
{DA}%
{Yes}
{Strict}%
{Portfolio}%
{Logit}%
{MSL}
\\
\separator
\study%
{\citet{Barahona_Dobbin_Otero(2023)WP}}%
{Higher Ed}%
{Brazil}%
{Iterative DA$^f$}%
{Yes}
{Strict}%
{Stability}%
{Logit}%
{MLE}
\\
\separator
\study%
{\citet{Bordon_Fu(2015)REStud}}%
{Higher Ed}%
{Chile}%
{DA}%
{Yes}
{Strict}%
{Stability}%
{Probit}%
{MSM}
\\
\separator
\study%
{\citet{Bucarey(2018)WP}}%
{Higher Ed}%
{Chile}%
{DA}%
{Yes}
{Strict}%
{Stability}%
{Mixed Logit}%
{MSM}
\\
\separator
\study%
{\citet{Larroucau_Rios(2020)WP}}%
{Higher Ed}%
{Chile}%
{DA}%
{Yes}
{Strict}%
{Portfolio}%
{Probit}%
{Gibbs}
\\
\separator
\study%
{\citet{Larroucau_Rios(2023)WP}}%
{Higher Ed}%
{Chile}%
{DA}%
{Yes}
{Strict}%
{Portfolio}%
{Mixed Logit}%
{GII}
\\
\separator
\study%
{\citet{Kapor_Karnani_Neilson(2024)JPE}}%
{Higher Ed}%
{Chile}%
{DA}%
{Yes}
{Strict}%
{Stability}%
{Probit}%
{Gibbs}
\\
\separator
\study%
{\citet{Campos_Munoz_Bucarey_Contreras(2025)WP}}%
{Higher Ed}%
{Chile}%
{DA}%
{Yes}
{Strict}%
{WTT}%
{Logit}%
{MLE}
\\
\separator
\study%
{\citet{Wang_Wang_Ye(2025)NBER}}%
{Higher Ed}%
{Ningxia (China)}%
{SD}%
{Yes}
{Strict}%
{Portfolio}%
{Probit}%
{MSM}
\\
\separator
\study%
{\citet{Hernandez-Chanto(2021)WP}}%
{Higher Ed}%
{Costa Rica}%
{SD}%
{Yes}
{Strict}%
{Portfolio}%
{Probit}%
{Gibbs}
\\
\separator
\study%
{\citet{Chrisander_Bjerre-Nielsen(2023)WP}}%
{Higher Ed}%
{Denmark}%
{DA}%
{Yes}
{Strict}%
{WTT/Stability}%
{Logit}%
{MLE}
\\
\separator
\study%
{\citet{Gandil(2025)WP}}%
{Higher Ed}%
{Denmark}%
{DA}%
{Yes}
{Strict}%
{Stability}%
{Logit}%
{MLE}
\\
\separator
\study%
{\citet{DeGroote_Fabre_Luflade_Maurel(2025)WP}}%
{Higher Ed}%
{France}%
{Multi-round DA$^h$}%
{Yes}%
{Coarse+lottery/Strict}%
{WTT}%
{Logit}%
{MLE}
\\
\separator
\study%
{\citet{Grenet_He_Kubler(2022)JPE}}%
{Higher Ed}%
{Germany}%
{Multi-offer DA$^g$}%
{Yes}%
{Strict}%
{Stability}%
{Logit}%
{MLE}
\\
\separator
\study%
{\citet{Kirkeboen(2012)WP}}%
{Higher Ed}%
{Norway}%
{DA}%
{Yes}
{Strict}%
{WTT}%
{Logit}%
{MLE}
\\
\separator
\study%
{\citet{Hallsten(2010)AmJSociol}}%
{Higher Ed}%
{Sweden}%
{DA}%
{Yes}
{Strict}%
{WTT}%
{Logit}%
{MLE}
\\
\separator
\study%
{\citet{Luflade(2019)WP}}%
{Higher Ed}%
{Tunisia}%
{DA}%
{Yes}
{Strict}%
{WTT/Portfolio}%
{Mixed logit}%
{MSL/MSM}
\\
\separator
\study%
{\citet{Arslan(2021)EmpEcon}}%
{Higher Ed}%
{Turkey}%
{DA}%
{Yes}
{Strict}%
{WTT/Stability}%
{Logit}%
{MLE}
\\
\separator
\study%
{\citet{Akyol_Krishna_Lychagin(2024)NBER}}%
{Higher Ed}%
{Turkey}%
{DA}%
{Yes}
{Strict}%
{Stability}%
{Latent class logit}%
{MLE}
\\
\addlinespace[2pt]
\bottomrule
\end{tabularx}
\begin{tablenotes}\labelsep0.0em\fontsize{7pt}{9pt}\selectfont
\item \emph{Notes:} 
Matching markets: 
Pre-K: Pre-kindergarten; 
Elem.\ S: Elementary school; 
Mid.~S: Middle school; 
High~S: High school; 
Higher~Ed: Higher education; 
Assignment mechanisms: 
IA: Immediate Acceptance; 
DA: Deferred Acceptance; 
SD: Serial dictatorship. 
Priority rules: Coarse+lottery: coarse priorities (e.g., walk-zone, sibling attendance) combined with lottery tie-breaking; 
Strict: Strict priorities (e.g., test scores, home-school distance). 
Empirical approaches to estimating preferences: 
STT: Strict Truth-Telling;
WTT: Weak Truth-Telling; 
Portfolio: Optimal portfolio choice; 
UD: Undominated strategies. 
Estimation methods: 
EM: Expectation-Maximization algorithm; 
Gibbs: Gibbs sampling; 
GII: Generalized Indirect Inference \citep{Bruins_et_al(2018)JoE}; 
GMM: Generalized Method of Moments; 
HMC: Hamiltonian Monte Carlo; 
MEI: Moment Equalities and Inequalities; 
MLE: Maximum Likelihood Estimation; 
MSL: Maximum Simulated Likelihood; 
MSM: Method of Simulated Moments.
$^a$~Cambridge (MA) uses a modified IA mechanism with dual capacities---for both programs and schools----limiting assignment feasibility.
$^b$~Charlotte and New Haven use IA-like mechanisms where priority group takes precedence over ROL rank in determining assignments.
$^c$~The adaptive IA mechanism allows students to apply to their highest-ranked school with remaining seats in each round \citep{Mennle_Seuken(2017)WP, Dur(2019)MathSocSci}.
$^d$~NYC data used by \citet{Park_Hahm(2023)WP} reflects a period (2014--2015) when ROLs were unrestricted for middle schools; from 2017 onward, lists were capped at 12 programs.
$^e$~\cite{Agte_et_al(2024)NBER} estimate parents' preferences under WTT while accommodating information frictions by modeling limited awareness of schools and misperceptions of their characteristics.
$^f$~In Brazil, college admissions use an iterative version of DA in which students can revise up to two choices daily, guided by cutoff score updates \citep{Bo_Hakimov(2020)EJ, Bo_Hakimov(2022)GEB}.
$^g$~Germany's college admissions process begins with a multi-offer DA phase, allowing students to hold offers and update ROLS, before entering a final single-offer DA phase.
$^h$~France's former centralized college admission system (\emph{Admission Post-Bac}) featured a three-round sequential implementation of the college-proposing DA, redistributing in each round the offers declined by students who opted for off-platform alternatives.
\end{tablenotes}
\end{threeparttable}}%
\end{sidewaystable}%
\clearpage%
\global\pdfpageattr\expandafter{\the\pdfpageattr/Rotate 0}


\clearpage
\phantomsection
\setstretch{1}
\setlength{\bibsep}{3pt plus 0.3ex}
\bibliographystyle{econ.bst}
\bibliography{bibliography}
\addcontentsline{toc}{section}{References}

\end{document}